\newtheorem{thm}{Theorem}
\newtheorem{prop}[thm]{Proposition}
\tikzstyle{tensor}=[rectangle,draw=blue!50,fill=blue!20,thick]
\tikzstyle{input}=[circle,draw=blue!50,fill=blue!20,thick]
\newcommand{\tl}[1]{\textcolor{violet}{#1}}
\begin{document}
\title{The Expressive Power of Parameterized Quantum Circuits}
	
\author{Yuxuan Du}\email{yudu5543@uni.sydney.edu.au}
\affiliation{UBTECH Sydney Artificial Intelligence Centre and the School of Information Technologies, Faculty of Engineering and Information Technologies, The University of Sydney, Australia}
\author{Min-Hsiu Hsieh} \email{min-hsiu.hsieh@uts.edu.au}
\affiliation{Centre for Quantum Software and Information, Faculty of Engineering and Information Technology, University of Technology Sydney, Australia}
\author{Tongliang Liu}
\email{tongliang.liu@sydney.edu.au}
\affiliation{UBTECH Sydney Artificial Intelligence Centre and the School of Information Technologies, Faculty of Engineering and Information Technologies, The University of Sydney, Australia}
\author{Dacheng Tao}
\email{dacheng.tao@sydney.edu.au}
\affiliation{UBTECH Sydney Artificial Intelligence Centre and the School of Information Technologies, Faculty of Engineering and Information Technologies, The University of Sydney, Australia}
	
\date{\today}
	
\begin{abstract}
Parameterized quantum circuits (PQCs) have been broadly used as a hybrid quantum-classical machine learning scheme to accomplish generative tasks. However, whether PQCs have better expressive power than classical generative neural networks, such as restricted or deep Boltzmann machines, remains an open issue. In this paper, we prove that PQCs with a simple structure already outperform any classical neural network  for generative tasks, unless the polynomial hierarchy collapses. Our proof builds on known results from tensor networks and quantum circuits (in particular, instantaneous quantum polynomial circuits). In addition, PQCs equipped with ancillary qubits for post-selection have even stronger expressive power than those without post-selection. We employ them as an application for Bayesian learning, since it is possible to learn prior probabilities rather than assuming they are known. We expect that it will find many more applications in semi-supervised learning where prior distributions are normally assumed to be unknown.  Lastly, we conduct several numerical experiments using the Rigetti Forest platform to demonstrate the performance of the proposed Bayesian quantum circuit.

\end{abstract}
	
	\pacs{}
	
	\maketitle
	
	\section{Introduction}
There is a ubiquitous belief called 'quantum supremacy' that quantum computers will outperform classical computers \cite{harrow2017quantum}. One characterization of quantum supremacy relates to the expressive power of quantum computing, since the probability distribution generated by quantum devices may not, classically, be sampled efficiently and accurately. Two leading proposals toward this goal  are Boson sampling \cite{aaronson2011computational} and instantaneous quantum polynomial time (IQP) circuits \cite{shepherd2009temporally}.

The system noise in current implementations is known to be the major roadblock. Widespread explorations have been conducted to verify whether noisy intermediate-scale quantum (NISQ) \cite{preskill2018quantum} devices can also outperform classical computers for specific computation tasks. It has been proved that, with system noise, quantum supremacy will disappear in Boson sampling  \cite{neville2017classical} but will remain in IQP \cite{bremner2017achieving}. In addition to demonstrating the existence of quantum supremacy, the issue of finding practical applications for NISQ devices with quantum advantages needs to be {further studied}.
	
Quantum machine learning problems have been popularized because of their ability to efficiently process tremendous amounts of data. They are also exploited as alternative testbeds to confirm quantum advantages \cite{biamonte2017quantum,rebentrost2014quantum,lloyd2014quantum,wittek2014quantum,wiebe2012quantum,paparo2014quantum,dunjko2016quantum}.  {By employing NISQ devices}, potential quantum advantages {may still be retained}, benefiting from the fact that most statistical {machine} learning {algorithms} are robust to system noise, i.e., the noise contained in the input data and models has a negligible influence on the final results \cite{dietterich2000ensemble,tipping2001sparse}.
	
Expressive power is a central topic in classical machine learning and it has generated great interest in quantum machine learning. It is deeply tied to two major topics in machine learning: discriminative modeling and generative modeling, which aim to learn patterns and the probability distribution of input data \cite{ng2002discriminative}{, respectively}. Expressive power in discriminative learning relates strongly to classification performance, e.g., by employing the kernel method \cite{hofmann2008kernel}, the kernel support vector machine (SVM) can efficiently classify nonlinear data. In generative modeling, the expressive power of two highly successful models, restricted Boltzmann machine (RBM) and deep Boltzmann Machine (DBM) \cite{hinton2006reducing,salakhutdinov2010efficient}, to represent quantum many-body states have been extensively investigated  \cite{gao2017efficient1,chen2018equivalence,glasser2018neural,glasser2018supervised,deng2017quantum}. {Consequently, RBM and DBM have been broadly applied to physics research, e.g., identifying phase transition, solving many-body wave functions, and accelerating Monte Carlo simulations} \cite{carrasquilla2017machine,carleo2017solving,liu2017self,liu2017self,van2017learning,ch2017machine}.
	
Parametrized quantum circuits (PQCs) are a promising NISQ scheme that has demonstrated their potential to be applied to practical applications with quantum advantages.
{By employing variational hybrid quantum/classical algorithms, PQCs have been applied to accomplish both the generative \cite{benedetti2018generative,liu2018differentiable,benedetti2017quantum} and discriminative \cite{farhi2018classification,schuld2018circuit,huggins2018towards,grant2018hierarchical} tasks}. PQC is composed of a set of parameterized single and controlled single qubit gates with noise, and the parameters are iteratively optimized by a classical optimizer. In general, the proposed PQCs can be divided into two types: multiple-layer PQCs (MPQCs) and tensor network PQCs (TPQCs). An MPQC consists of multiple blocks {of} quantum circuits in which the arrangement of quantum gates in each block is identical \cite{benedetti2018generative,liu2018differentiable,LaRose2018Variationally}. Mathematically, {we denote} the input quantum state as $\ket{0}^{\otimes N}$ with $N$ qubits, the total number of blocks as $L$, and the $i$-th block as $U(\bm{\theta}^i)$, where the number of parameters is proportional to the number of qubits $|\bm{\theta}|\propto N$ {and $N$ is logarithmically proportional to the dimension of the generated data}. {T}he generated quantum state of MPQC, $\ket{\Phi}$, is defined as $\ket{\Phi} = \prod_{i=1}^{L}U(\bm{\theta}^i)\ket{0}^{\otimes N}$. The tensor PQCs (TPQCs) treat each block as a local tensor. The arrangement of the blocks follows a specified tensor network, such as matrix product states and tree tensor network \cite{huggins2018towards}. Mathematically, the $i$-th block $U(\bm{\theta}^i)$ is composed of $M_i$ local tensor blocks, with $M_i\propto N/2^i$, denoted as $U(\bm{\theta}^i) = \bigotimes_{j=1}^{M_i}U(\bm{\theta}^i_j)$. The generated state from TPQC is defined as $\ket{\Phi} = \prod_{i=1}^{L}\bigotimes_{j=1}^{M_i}U(\bm{\theta}^i_j)\ket{0}^{\otimes N}$. Refer Section \ref{sec:2} for more details.

Although PQCs have provided strong evidence of quantum advantage, \cite{farhi2016quantum,otterbach2017unsupervised}, two important questions remain unexplored: (1) What is the expressive power of PQCs{?} (2) Is there any quantum advantage of PQCs that can be used to solve practical problems? A comparison of expressive power between PQCs and classical neural networks is desirable, and may benefit both physics and machine learning areas, {since PQCs are capable of solving many kinds of machine learning tasks, and classical machine learning methods have also been extensively applied to physics research}.
	
To analyze their relationships, we {will first} prove that MPQCs can be formulated by the tensor network language. This {will} show that MPQCs, TPQCs, and  classical neural networks have a close connection with tensor networks, such as matrix product states (MPS) and multi-scale entanglement renormalization ansatz (MERA) \cite{schollwock2011density,vidal2007entanglement}.  We will then exploit entanglement entropy as a metric to evaluate the expressive power of tensor network states,  to characterize the expressive power of PQCs and neural networks. We will provide a rigorous proof that, given the number {of} trainable parameters that polynomially {scale} with the number of {qubits}, MPQCs,  TQPCs, DBM and long range RBM exhibit volume law entanglement efficiently, while and short range RBM only exhibit area law entanglement efficiently \cite{eisert2010colloquium}.

Before answering the  question of {whether PQCs have any quantum advantages over classical generative algorithms}, {we remark} that entanglement entropy is not the only metric for quantifying expressive power. Even though MPQCs, DBM{,} and long range RBM can efficiently represent quantum states with volume law, {we devise a toy model to prove that some probability distributions can be efficiently generated by MPQCs, DBM, and long range RBM, but the distributions are difficult to be generated {by} TPQCs. We further prove that instantaneous quantum polytime (IQP) circuits} \cite{bernstein1997quantum}\cite{bernstein1997quantum} are a special subclass of MPQCs.  The probability distribution generated by IQP cannot be sampled efficiently and accurately by any classical neural network \cite{shepherd2009temporally}. This indicates that, from the perspective of complexity theory, MPQCs have a stronger expressive power than classical neural networks {and} have the potential to become a practical application with `quantum supremacy' \cite{boixo2018characterizing}.

Finally, we equip MPQCs with ancillary qubits for post-selection---a model we called ancillary driven MPQCs (AD-MPQCs). 	
We show that the class of AD-MPQCs contains post-IQP circuits as a special case.  
Apart from the stronger expressive power, AD-MPQCs  also provide additional benefits from the machine learning perspective. Specifically, AD-MPQCs with a simple structure, that we call  Bayesian quantum circuit (BQC), is devised for Bayesian learning. {Theoretically, we prove that the expressive power of BQC is equivalent to post-IQP.}  {From the machine learning point of view,  the ancillary qubits of BQC can be used to represent the additional information, such as a prior distribution. } BQC not only can exploit priors to improve the performance of a learning task, but can also enable the estimation of prior distributions from the given data, which is highly desired for semi-supervised learning \cite{basu2002semi}. To the best of our knowledge, BQC is the first PQCs that can learn prior distributions from given data. A toy model is designed to verify its effectiveness. The BQC experiments are implemented in Python, leveraging the pyQuil library to access the numerical simulator known as quantum virtual machine (QVM) \cite{smith2016practical}. 

	\section{Definitions and Preliminaries}\label{sec:2}
	
	\subsection{Boltzmann Machine}
	The Boltzmann machine (BM), inspired by {the} Ising model, plays a significant role in the development of the deep neural network, which aims to learn a distribution over the set of their inputs \cite{ackley1985learning,rumelhart1985learning}.
	{Specifically,} BM can be divided into two parts: $N$ visible units $\bm{v}=\{v_i\}_{i=1}^N$ and $M$ hidden units $\bm{h}=\{h_j\}_{j=1}^M$. Given trainable parameters $w_{ij}$ and $b_i$, the Hamiltonian is defined as $H(\bm{s})=\sum_ib_is_i +\sum_{i<j}w_{ij}s_is_j$, with $\bm{s}=\{\bm{v},\bm{h}\}$.  The joint probability  distribution over {the} visible and hidden units is defined as
	\begin{equation}
	P(\bm{v},\bm{h}) = \frac{1}{\mathcal{Z}}e^{-H(\bm{v},\bm{h})}~,
	\end{equation}
	where $\mathcal{Z} = \sum_{\bm{v}}\sum_{\bm{h}}e^{-H(\bm{v},\bm{h})}$ is called {the} partition function. For generative tasks, {the marginal probability distribution of visible units $P(\bm{v})=\sum_{\bm{h}}P(\bm{v},\bm{h})$ is expected to be maximized by optimizing $w_{ij}$ and $b_i$.}
	
	The restricted Boltzmann machine (RBM) \cite{nair2010rectified} is a special type of BM, which can be learned more efficiently. Mathematically, the Hamiltonian of RBM is defined as $H(\bm{v}, \bm{h})=\sum_ib_iv_i +\sum_jb_jh_j +\sum_{i,j}w_{ij}v_ih_j$, where only the inner connections between visible units and hidden units remain. {An} RBM is sparse (or short range), if the connection between visible and hidden units is sparse. {For short range RBM,} the visible unit $v_i$ only connects with $2k+1$ hidden units with a small constant $k$ or $k\sim O(\log M)$. Similarly, {an} RBM is non-sparse (or long range) if $k$ satisfies $k\sim O(M)$.
	
	A deep Boltzmann machine  (DBM) \cite{salakhutdinov2010efficient}, different from a RBM that {includes} only one layer of hidden units, contains many layers of hidden units. In DBM, multiple hidden layers can be learned by training one hidden layer once at a time as for RBM. 
	When {we calculate} the probability distribution between {the} $n$-th layer and $n+1$-th layer, the hidden units of the previous $n$-th layer $\bm{h}^n$ are treated as visible units $\bm{v^n}$ {and $P(\bm{v}^n, \bm{h}^{n+1})$ is obtained as RBM does}.
	
	\subsection{Tensor Networks}\label{sec:2B}
	
	Matrix Product State (MPS) is a natural choice {to efficiently represent} 1D low energy quantum states \cite{schollwock2011density}. {We denote a quantum state of one dimensional lattice with $N$ sites as $\ket{\Psi} = \sum_{j_1,j_2,...,j_N=1}^{d}C_{j_1j_2...j_N}\ket{j_1}\otimes \ket{j_2}\otimes ...\otimes \ket{j_N}$, where all sites have the same dimension $d$.} The state $\ket{\Psi}$ can be completely described by a rank-$N$ tensor $C_{j_1j_2...j_N}$ with total $d^{N}$ elements. {However,  such an exponentially scaling relation implies that} the computation cost {becomes expensive for large $N$}. MPS {enables  $\ket{\Psi}$ to be approximated}  with a high accuracy using only $O(poly(N))$ parameters. {We} rewrite $\ket{\Psi}$ as follows:
	\begin{equation}
	\ket{\Psi} = \sum_{l,r}C_{l,r}\ket{l}\ket{r},
	\end{equation}
	where $l$ corresponds to the first site $l=j_1$ and $r$ corresponds to the rest $N-1$ sites $r=(j_2,...,j_N)$. Let $C_{l,r}=\sum_a U_{l,a}S_{a,a}V_{a,r}^{\dagger}$ be the singular value decomposition (SVD) of $C_{l,r}$. Then we have
	\begin{equation}\label{eqn:a1}
	\ket{\Psi}= \sum_{l,r}\sum_a U_{l,a}S_{a,a}V_{a,r}^{\dagger}\ket{l}\ket{r}=\sum_aS_a\ket{a}_l\ket{a}_r,
	\end{equation}
	where $\ket{a}_l = \sum_lU_{l,a} \ket{l}$, $\ket{a}_r = \sum_r V_{r,a} \ket{r}$, and $S_{a}=S_{a,a}$. Eqn.~(\ref{eqn:a1}) is called Schmidt decomposition and the entanglement of the bipartite systems $l$ and $r$ is characterized by $S_a$. {Specifically, the bond dimensions between the first site $j_1$ and the rest $N-1$ sites $\{j_2,...,j_N\}$ {are} evaluated by the the number of non-zero values in $S_a$, where the entanglement of the corresponding bipartite systems closely relates to the bond dimension as explained in Subsection \ref{sub:IIc}}.
	
	Through successively performing SVD along each single site in turn, we can split out the rank-$N$ tensor $C_{j_1j_2...j_N}$ into $N$ local tensors $\{A^{j_i}\}_{i=1}^N$. Mathematically, analogous to the Eqn. (\ref{eqn:a1}), the matrix product state of $\ket{\Psi}$ is defined as
\small
	\begin{eqnarray}\label{eqn:A2}
	\ket{\Psi} = &&\sum_{j_1...j_N}\sum_{a_1} U_{j_1,a_1}S_{a_1,a_1}V_{a_1,(j_2...j_N)}^{\dagger}\ket{j_1}\ket{j_2...j_N}\nonumber\\
	=&& \sum_{j_1...j_N}\sum_{a_1}U_{j_1,a_1}C_{a_1,(j_2...j_N)}\ket{j_1}\ket{j_2...j_N}\nonumber\\
	=&& \sum_{j_1...j_N}\sum_{a_1}A_{a_1}^{j_1}U_{(a_1,j_2),a_2}C_{a_2,(j_3...j_N)}\ket{j_1}\ket{j_2}\ket{j_3...j_N}\nonumber\\
	=&&\sum_{j_1...j_N}\sum_{a_1...a_{N-1}}A_{a_1}^{j_1}A_{a_1,a_2}^{j_2}...A_{a_{N-1}}^{j_N}\prod_{i=1}^N\ket{j_i}~,
	\end{eqnarray}
\normalsize	where $S_{a_1,a_1}$ and $V_{a_1,(j_2,...,j_N)}^{\dagger}$ have been multiplied and reshaped to a vector $C_{a_1,(j_2,...,j_N)}$, and the matrix $U_{j_1}$ is decomposed into a collection of $d$ row vectors $A^{j_1}$ with entries $A_{a_1}^{j_1} = U_{j_1,a_1}$. The number of parameters {(elements)} {in MPS} scales as $O(NdM^2)$, where $M$ represents the maximum of bond dimensions among all $S_{a,a}$. When $M$ is small or some truncated methods are employed to keep $M$ small, MPS can efficiently approximate the quantum states with polynomial parameters.

	The String Bond States (SBS) \cite{schuch2008simulation} can be treated as an extension of MPS. The mathematical representation of SBS is
	\begin{equation}\label{eqn:SBS}
	\ket{\Psi} = \prod_{s\in\mathcal{S}}\left(\sum_{a_i,j_1...j_N}\prod_{j_i\in s}A_{a_i}^{j_i,s}\ket{j_1...j_N}\right)~,
	\end{equation}
	where $\mathcal{S}$ is a set of strings, $s\in\mathcal{S}$ is an ordered subset of $\{1,2,...,N\}$, and $A_{a_i}^{j_i,s}$ corresponds to the $A_{a_{i-1},a_i}^{j_i}$ in Eqn. (\ref{eqn:A2}) given {a} fixed $s$. The key idea of SBS is to  place strings of operators on a lattice with $N$ sites. Some examples of string operators is illustrated in Figure \ref{fig:SBS}, where each string operator is denoted by a specific color, i.e., {Figures} \ref{fig:SBS} (a) and (b) have $8$ and $2$ string operators, respectively.
	\begin{figure}[!ht]
		\centering
		\includegraphics[width=3in,height=1.4in]{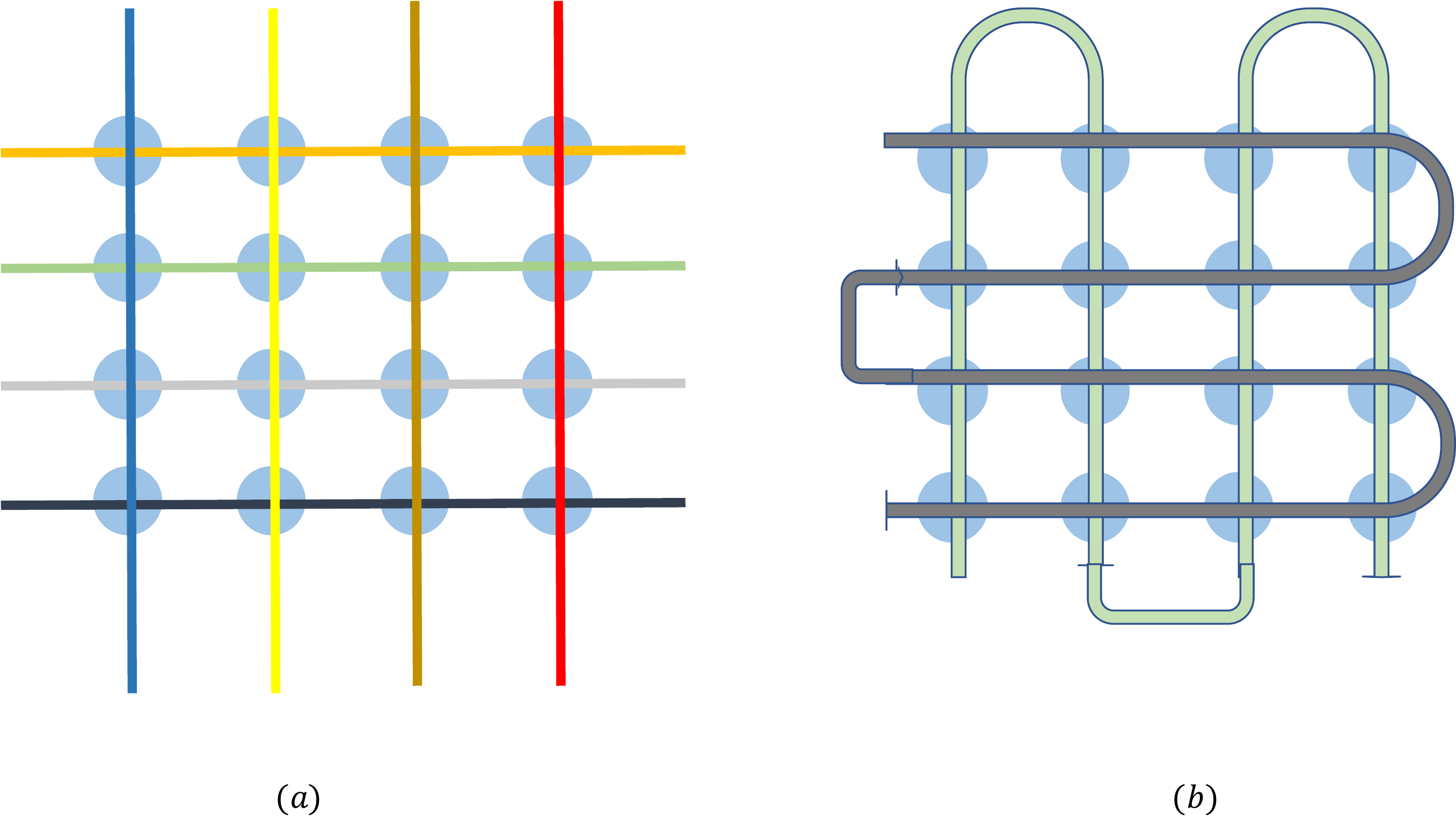}
		\caption{Two examples of string bond states.}
		\label{fig:SBS}
	\end{figure}
	
	\subsection{Entanglement Entropy}\label{sub:IIc}
	The entanglement (also called von Neumann entropy) $\mathcal{S}(\rho)$ of a bipartite system $\rho_{AB}$ is defined as
	\begin{equation}
	\mathcal{S}(\rho_A) = -\text{Tr}(\rho_{A}\ln{\rho_{A}})=-\text{Tr}(\rho_{B}\ln{\rho_{B}})=\mathcal{S}(\rho_B),
	\end{equation}
	where $\rho_{A}= \text{Tr}_{B}\rho_{AB}$ is the reduced density matrix of system $A$. For a quantum system $A$ that satisfies area (volume) law, its entanglement entropy grows proportionally with the boundary area (volume) of system $A$, denoted as $\mathcal{S}(\rho_A)=O(|\partial A|)$ ($=O(|A|)$).
	
{The maximum entanglement entropy of a bipartite system is logarithmically bounded by the bond dimension $D$ as defined in Subsection \ref{sec:2B}, i.e., $\mathcal{S}(\rho_A)\sim\ln D$.} A quantum system $A$ that satisfies area law has an efficient MPS representation, {since in one dimensional case a constant $O(|\partial A|)$ implies  $D$ is also a constant and the number of parameters used in MPS is small.} {On the contrary, a quantum system $A$ that satisfies the volume low implies the entanglement entropy scales with the number of sites, i.e., $\mathcal{S}(\rho_A)\sim N$. Due to $\mathcal{S}(\rho_A)\sim \ln D$, the bond dimension $D$ is scaled as $O(D^N)$, the required parameters in MPS is exponentially large and the quantum system $A$ cannot be efficiently represented by MPS.}
	
	\subsection{Quantum Circuits}
	Analogous to classical computers, a quantum computer accomplishes its computation by  applying quantum gates to quantum bits (qubits).
	
	As stated in \cite{nielsen2010quantum}, a set of single and two qubits gates, which consists of rotation gates and controlled-Not (CNOT) gates, is universal for quantum computation. In other words, any function computable in this model can be computed only using these gates.
	We denote the phase rotation gate $R_{\phi}$, z-axis rotation gate $R_Z(\theta)$, x-axis rotation gate $R_X(\gamma)$, and y-axis rotation gate $R_Y(\alpha)$ as follows: 
	\small
	\begin{alignat*}{2}
	&R_{\phi}=\begin{pmatrix}1 & 0\\ 0 & e^{i\phi}\end{pmatrix}, && R_X(\gamma) = \begin{pmatrix} \cos(\gamma/2) & i\sin(\gamma/2))\\ i\sin(\gamma/2) & \cos(\gamma/2)\end{pmatrix},\\
	&R_Z(\theta)=\begin{pmatrix} e^{i\theta/2} & 0\\ 0 & e^{-i\theta/2}\end{pmatrix},
	&& R_Y(\alpha)=\begin{pmatrix} \cos(\alpha/2) & \sin(\alpha/2))\\ -\sin(\alpha/2) & \cos(\alpha/2)\end{pmatrix}.
	\end{alignat*}
	\normalsize
	
	The CNOT gate, defined as \[\text{CNOT} = \begin{pmatrix} 1 & 0 &0 &0\\0 & 1 &0 &0\\ 0 & 0 &0 &1\\0 & 0 &1 &0 \end{pmatrix},\] flips the target qubit iff the the control qubit is $\ket{1}$. Other quantum gates can be represented by the above universal gate set, e.g., the Pauli-Z gate, defined as $Z=\begin{psmallmatrix} 1 & 0\\ 0 & -1\end{psmallmatrix}$, can be represented by $R_Z(\theta = \pi)$, the T gate, defined as $T=\begin{psmallmatrix} 1 & 0\\ 0 & e^{i\pi/4}\end{psmallmatrix}$, can be represented by $R_{\phi}(\phi = \pi/4)$, the Hadamard gate (H gate), defined as $H =1/\sqrt{2} \begin{psmallmatrix} 1 & 1\\1 & -1\end{psmallmatrix}$, can be represented by $R_X(\gamma=\pi/2)R_Z(\theta=\pi/2)R_X(\gamma=\pi/2)$, {and the two-qubit Controlled-$Z$ gate (shorted as $CZ$ gate), defined as 
		\begin{equation}
		CZ=\begin{pmatrix} 1 & 0 & 0 &0\\ 0 & 1 & 0 &0\\ 0 & 0 & 1 &0\\ 0 & 0 & 0 &-1\\\end{pmatrix},
		\end{equation} 
		can be represented by $(\mathbb{I}\otimes H)\text{CNOT}(\mathbb{I}\otimes H)$}.
	
	Proposition \ref{prop:1} below  demonstrates how to use the universal gate set to express other quantum gates \cite{barenco1995elementary}.
	\begin{prop}\label{prop:1}
A controlled unitary $W$ gate ($CW$) can be simulated by a quantum network {composed of single qubit gates and CNOT gate. Suppose that $W=R_Z(\theta)R_Y(\alpha)R_Z(\beta)$, then as shown in Figure \ref{fig:Ctrl_W}, it can be simulated by the quantum circuits $A, B$ and $C$, where $A=R_Z(\theta)R_Y(\alpha/2)$, $B=R_Y(-\alpha/2)R_Z(-\theta/2-\beta/2)$, and $C=R_Z(\beta/2-\theta/2)$. }  
		\begin{figure}[!ht]
			\begin{tikzcd}
				& \ctrl{1} & \qw \\
				& \gate[style={fill=yellow!20}]{W} & \qw
			\end{tikzcd}
			=	\begin{tikzcd}
				&\qw& \ctrl{1} & \qw & \ctrl{1} &\qw \\
				&\gate[style={fill=yellow!20}]{A}& \gate[style={fill=yellow!20}]{X} & \gate[style={fill=yellow!20}]{B} &\gate[style={fill=yellow!20}]{X} & \gate[style={fill=yellow!20}]{C}
			\end{tikzcd}~,
			\caption{Simulation of controlled unitary gates.}
			\label{fig:Ctrl_W}
		\end{figure}
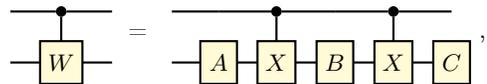
	\end{prop}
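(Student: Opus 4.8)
The plan is to verify directly that the bottom-wire unitary in Figure~\ref{fig:Ctrl_W} collapses to the identity when the control qubit is $\ket{0}$ and to $W$ when it is $\ket{1}$, which is precisely the defining action of $CW$. Reading the target wire in circuit order, the two controlled-$X$ gates act as $\mathbb{I}$ when the control is $\ket{0}$ and as the Pauli-$X$ gate $X$ when it is $\ket{1}$, so it suffices to establish the two algebraic identities $ABC=\mathbb{I}$ and $AXBXC=W$. This is the standard Euler-angle (or ``ABC'') construction underlying the decomposition of \cite{barenco1995elementary}.

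The only tools needed are the conjugation rules $XR_Y(\phi)X=R_Y(-\phi)$ and $XR_Z(\phi)X=R_Z(-\phi)$, together with the fact that rotations about a common axis add their angles, i.e. $R_Z(\phi_1)R_Z(\phi_2)=R_Z(\phi_1+\phi_2)$ and likewise for $R_Y$. Both conjugation rules follow in one line from the matrix forms of $R_Y$, $R_Z$, and $X$ given above, since $X$ anticommutes with the $Y$ and $Z$ rotation generators.

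For the control-$\ket{0}$ case I would substitute the stated forms of $A,B,C$ and collapse the product
\begin{equation*}
ABC = R_Z(\theta)R_Y(\tfrac{\alpha}{2})R_Y(-\tfrac{\alpha}{2})R_Z(-\tfrac{\theta+\beta}{2})R_Z(\tfrac{\beta-\theta}{2}).
\end{equation*}
The two $R_Y$ factors cancel, and the three $R_Z$ angles sum to $\theta-\tfrac{\theta+\beta}{2}+\tfrac{\beta-\theta}{2}=0$, giving $ABC=\mathbb{I}$. For the control-$\ket{1}$ case I would first use the conjugation rules to push the inner pair of $X$ gates through $B$, rewriting $XBX=X\,R_Y(-\tfrac{\alpha}{2})R_Z(-\tfrac{\theta+\beta}{2})\,X$ as $R_Y(\tfrac{\alpha}{2})R_Z(\tfrac{\theta+\beta}{2})$. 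Inserting this into $A(XBX)C$ and again adding angles on each axis yields $R_Z(\theta)R_Y(\alpha)R_Z(\tfrac{\theta+\beta}{2}+\tfrac{\beta-\theta}{2})=R_Z(\theta)R_Y(\alpha)R_Z(\beta)=W$.

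Combining the two cases shows that the network applies $\mathbb{I}$ or $W$ to the target exactly according to the control value, so it implements $CW$. I expect the only genuine subtlety to be bookkeeping rather than mathematics: keeping the circuit-order versus matrix-multiplication convention consistent, and correctly tracking the sign flips produced when conjugating $B$ by the two $X$ gates. A minor remark worth including is that the stated $W=R_Z(\theta)R_Y(\alpha)R_Z(\beta)$ has unit determinant, so no global-phase correction on the control line is required; for a general $U(2)$ target one would additionally prepend a phase gate to the control qubit to absorb the leftover $e^{i\delta}$.
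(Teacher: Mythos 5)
Your proof is correct and takes essentially the same route as the paper, which states the proposition with only a citation to Barenco et al.\ and no proof of its own: the standard verification of the two identities $ABC=\mathbb{I}$ and $AXBXC=W$, both of which check out under the paper's (sign-reversed) rotation conventions. Your bookkeeping remark is apt, since those identities hold as written only if the figure is read so that $C$ acts first on the target (matrix order left-to-right), and your determinant observation correctly explains why no extra phase gate on the control is needed for $W\in SU(2)$.
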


	\subsubsection{IQP circuits}
	The instantaneous quantum polynomial (IQP) circuit consists of commute gates that are diagonal in the $Z$ basis. The basic framework of IQP circuits is illustrated in Fig \ref{fig:IQP}.
	\begin{figure}[!ht]
		\centering
		\begin{tikzcd}[row sep={0.8cm,between origins}]
			\lstick{$\ket{0}$} &
			\gate[style={fill=yellow!20}]{H}& \gate[style={fill=green!20},wires=4]{U_Z} & \gate[style={fill=yellow!20}]{H} & \meter{}\\
			\lstick{$\ket{0}$}&\gate[style={fill=yellow!20}]{H} & \hphantom{U_Z very long}& \gate[style={fill=yellow!20}]{H} & \meter{}\\
			\lstick{\vdots}&\gate[style={fill=yellow!20}]{H}&\qwbundle\qw &\gate[style={fill=yellow!20}]{H}&\qw\vdots \\
			\lstick{$\ket{0}$}&\gate[style={fill=yellow!20}]{H} & & \gate[style={fill=yellow!20}]{H} &\meter{}                     	
		\end{tikzcd}
		\caption{A general framework of IQP circuits.}
		\label{fig:IQP}
	\end{figure}
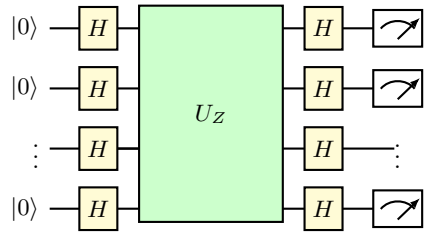
	
	Given $N$ qubits, the IQP circuits can generate distributions $p_I = \left|\braket{0^{\otimes N}|H^{\otimes N}U_ZH^{\otimes N}|0^{\otimes N}}\right|^2$, where $U_Z$ is composed of $O(poly(N))$ commuting gates, e.g., the single-qubit $T$ gate and {$CZ$ gate}.
	
	IQP circuits are proven to be capable of  generating probability distributions $p_I$ that cannot be classically simulated efficiently \cite{bremner2010classical}.  The main result of IQP is summarized in the following proposition.
	\begin{prop}\label{prop1}
		If the output probability distributions generated by uniform families of IQP circuits could be weakly classically simulated to within multiplicative error $1\leq c\leq \sqrt{2}$, then $\textit{post-BPP} = PP$ and the $PH$ would collapse to its third level.
	\end{prop}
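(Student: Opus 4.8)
The plan is to route the argument through \emph{post-selection}, reducing the hardness of classically sampling IQP circuits to a collapse of the polynomial hierarchy via Toda's theorem. First I would introduce the post-selected variant \textit{post-IQP}: an IQP circuit in which a designated register of output qubits is conditioned to read a fixed string, and the decision is read off the remaining qubits. The central structural claim I would establish is $\textit{post-IQP}=\textit{post-BQP}$. The inclusion $\textit{post-IQP}\subseteq\textit{post-BQP}$ is immediate, since IQP circuits form a restricted subclass of quantum circuits; the reverse inclusion is the substantive part. Here I would use gate-teleportation (``Hadamard'') gadgets to replace every non-commuting gate of an arbitrary polynomial-size universal circuit, as well as the surrounding layers of $H$ gates, by commuting $Z$-diagonal gates together with extra qubits that are post-selected onto the outcomes realizing the intended teleportation, so that the whole computation flattens into a single diagonal commuting block sandwiched by Hadamards with additional post-selected ancillas. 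Invoking Aaronson's theorem $\textit{post-BQP}=PP$ then yields $\textit{post-IQP}=PP$.

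Next I would transfer the assumed classical simulation into the class \textit{post-BPP}. Suppose a randomized classical algorithm samples from a distribution $q$ obeying the multiplicative guarantee $|q(x)-p_I(x)|\le(c-1)p_I(x)$ for the IQP output distribution $p_I$. Conditioning this sampler on the post-selection register reproduces \textit{post-IQP} computations classically: both the numerator $P(\text{accept}\wedge\text{success})$ and the denominator $P(\text{success})$ of the conditional acceptance probability are sums of multiplicatively approximated terms, hence each is preserved to within a factor $c$, so their ratio is preserved to within a factor $c^2$. This is exactly where the threshold $c\le\sqrt{2}$ enters: it guarantees $c^2\le 2$, which, after amplifying the post-selected acceptance gap so that the accepting and rejecting conditional probabilities are separated by more than a factor of two, keeps the classical decision on the correct side. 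Consequently $\textit{post-IQP}\subseteq\textit{post-BPP}$, and combined with the previous step, $PP=\textit{post-IQP}\subseteq\textit{post-BPP}$.

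Finally I would close the circle. Since $\textit{post-BPP}\subseteq PP$ holds unconditionally, the two inclusions give $PP=\textit{post-BPP}$. Because \textit{post-BPP} is known to lie within the third level of the polynomial hierarchy (it embeds into $BPP^{NP}$ via Stockmeyer-style approximate counting), Toda's theorem $PH\subseteq P^{PP}$ together with the identity $PP=\textit{post-BPP}$ forces $PH$ to collapse to its third level, which is the stated conclusion.

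The main obstacle is the structural equality $\textit{post-IQP}=\textit{post-BQP}$. Constructing the teleportation gadgets that simulate the non-commuting universal gates purely by commuting diagonal gates plus post-selection---while ensuring the post-selection succeeds with inverse-exponential but strictly nonzero probability so that the reduction remains valid---is the delicate part on which the whole argument rests. The error-propagation bookkeeping of the second step is routine once the amplifiability of the post-selected gap is in hand, and the final collapse is a direct application of known inclusions, so the weight of the proof sits squarely on flattening universal circuits into commuting ones under post-selection.
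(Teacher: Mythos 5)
The proposal is correct and follows essentially the same route as the argument the paper relies on: the paper states this proposition as a known result, citing Bremner--Jozsa--Shepherd, and gives no independent proof of it. Your reconstruction---$\textit{post-IQP}=\textit{post-BQP}=PP$ via post-selected Hadamard/teleportation gadgets and Aaronson's theorem, the multiplicative-error bookkeeping with $c^2\leq 2$ giving $PP\subseteq\textit{post-BPP}$, and the collapse via Toda's theorem together with $\textit{post-BPP}$ sitting in the third level of $PH$---is precisely that cited argument.
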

	
	\subsection{Parameterized Quantum Circuits}	\label{sec:IIE}
	Parameterized quantum circuits (PQCs), as a special type of quantum circuit model, are composed of a set of parameterized single and controlled single qubit gates. In this work, a PQC is used to implement a unitary transformation operator $U(\bm{\theta})$ with $O(poly (N))$ parameterized quantum gates, where $N$ is the number of input qubits.
	
	Several recent works \cite{benedetti2018generative,liu2018differentiable,huggins2018towards} have employed PQCs to accomplish generative tasks. {One major reason is that} the superposition property allows the number of trainable parameters to be dramatically reduced. In generative tasks, PQCs produce  the probability $q(X=\bm{x}) = |\braket{\bm{x}|\Psi_G}|^2$ measured by the computational basis $\ket{\bm{x}}$, where
	\begin{equation}\label{eqn:sp0}
	\ket{\Psi_G} = U(\bm{\theta})\ket{0}^{\otimes N}.
	\end{equation}
	The parameters $\bm{\theta}$ can be  optimized using only classical approaches,
	\begin{equation}\label{eqn:minloss}
	\arg \min_{\bm{\theta}} \mathcal{L}(q(X),p(X)),
	\end{equation}
	where  $\mathcal{L}(\cdot,\cdot)$ is a loss function that measures the dissimilarity of  the  generated and the targeted probability distributions. For example, suppose that the loss function is negative log-likelihood \cite{platt1999probabilistic}, the optimizing process is,
	\begin{equation}\label{eqn:sp2}
	\arg \min_{\bm{\theta}} \frac{1}{D}\sum_{i=1}^D -\log q(X=\bm{x}_i), X\sim p(X),
	\end{equation}
	where the dataset $\mathcal{D}=\{\bm{x}_i\}_{i=1}^D$ is sampled from the targeted probability distribution $p(X)$, the size of $\mathcal{D}$ is $D$, and each example of $\mathcal{D}$ is denoted as $\bm{x}_i$ for $i\in[1,D]$.
	
{Another loss function that is broadly employed is the maximum mean discrepancy (MMD). The MMD loss is defined as \begin{equation}\label{eqn:9}
\mathcal{L} = \left\|\sum_{\bm{\lambda}}\sum_{x^i\in\bm{x}}q(x^i,\bm{\lambda})\phi(x^i)-\sum_{x^i\in\bm{x}}p{(x^i)}\phi(x^i)\right\|^2~,
\end{equation}
where $\phi(x^i)$ maps the $i$-th input data, $x^i$, into a high-dimensional  Reproducing Kernel Hilbert Space \cite{berlinet2011reproducing}, and $\sum_{x^i\in\bm{x}}p(x^i)$ refers to the target probability distribution. More details about the MMD loss and how to optimize it by employing the gradient descent method with unbiased estimation are introduced in \cite{liu2018differentiable,mitarai2018quantum}.}
	
	In the following, we define two types of PQCs that are the focus of the present work, where the major difference is the layout of quantum gates to compose $U(\bm{\theta})$.
	
	\subsubsection{Multilayer Parameterized Quantum Circuits}
	Multilayer Parameterized Quantum Circuits (MPQCs) are composed of $L$ blocks, where each block implements $U(\bm{\theta}^{i})$, with $i\in[1, L]$ and $L\sim poly(N)$. A unitary operator $U(\bm{\theta})=\prod_{i=1}^LU(\bm{\theta}^i)$ is applied to $N$ input qubits. An example of MPQC is illustrated in Fig.~\ref{fig:MPQC00}. 
	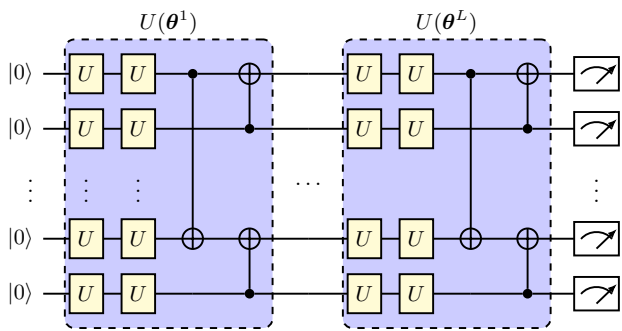
\begin{figure}[!ht]
		\begin{adjustbox}{width=0.47\textwidth}
			\begin{tikzcd}[row sep={0.8cm,between origins}]
				\lstick{$\ket{0}$} &\gategroup[wires=5,steps=4,style={dashed,rounded corners,fill=blue!20, inner xsep=2pt},background]{$U(\bm{\theta}^1)$}
				\gate[style={fill=yellow!20}]{U}& \gate[style={fill=yellow!20}]{U} & \ctrl{3} &\targ{}&\qw &\gategroup[wires=5,steps=4,style={dashed,rounded corners,fill=blue!20, inner xsep=2pt},background]{$U(\bm{\theta}^L)$}\gate[style={fill=yellow!20}]{U}&\gate[style={fill=yellow!20}]{U}
				& \ctrl{3} &\targ{} &\meter{}\\
				\lstick{$\ket{0}$}&\gate[style={fill=yellow!20}]{U} & \gate[style={fill=yellow!20}]{U}&\qw  & \ctrl{-1}&\qw
				&\gate[style={fill=yellow!20}]{U}&\gate[style={fill=yellow!20}]{U}
				&\qw  & \ctrl{-1}
				&\meter{}\\
				\lstick{\vdots}&\vdots &\vdots & & & \ldots & & & &&\vdots\\ 
				\lstick{$\ket{0}$}&\gate[style={fill=yellow!20}]{U}&\gate[style={fill=yellow!20}]{U} &\targ{}&\targ{}&\qw &\gate[style={fill=yellow!20}]{U}&\gate[style={fill=yellow!20}]{U}
				&\targ{}&\targ{}
				&\meter{} \\
				\lstick{$\ket{0}$}&\gate[style={fill=yellow!20}]{U} &\gate[style={fill=yellow!20}]{U} & \qw &\ctrl{-1} &\qw &\gate[style={fill=yellow!20}]{U}&\gate[style={fill=yellow!20}]{U}
				& \qw &\ctrl{-1}  &\meter{}                     	
			\end{tikzcd}
		\end{adjustbox}
		\caption{Illustration of MPQCs.}
		\label{fig:MPQC00}
	\end{figure}
	In each block, the arrangement of quantum gates is identical. Moreover, each qubit is operated with at least one parameterized gate (denoted by yellow color), and  CNOT gates within the block can connect arbitrary two qubits. Another requirement in MPQCs is,  the amount of CNOT gates is no larger than $N$ in each block.
	
	Using MPQCs to accomplish generative tasks have been explored by \cite{benedetti2018generative,liu2018differentiable}, while the layout of quantum gates in each block and the optimization methods are varied.
	
	\subsubsection{Tensor Network Parameterized Quantum Circuits}
	Another type of PQCs is the tensor network PQCs (TPQCs), which generally inherit the tensor network structures, i.e., MPS, tree tensor network\tl{,} or MERA. In other words, CNOT gates can only connect two local qubits. Mathematically, the quantum state $\ket{\Psi_G}$ generated by TPQCs is formulated as
	\begin{equation}
	\ket{\Psi_G} = \prod_{i=1}^{L}\bigotimes_{j=1}^{M_i}U(\bm{\theta}^i_j)\ket{0}^{\otimes N},
	\end{equation}
	where $M_i$ represents the number of local blocks in the block $U(\bm{\theta}^i)$. For example, a TPQC that inherits from the layout of tree tensor network is given in Fig.~\ref{fig:TPQC1}.  
	\begin{figure}[!ht]
		\begin{adjustbox}{width=0.48\textwidth}
			\begin{tikzcd}[row sep={0.8cm,between origins}]	
				\lstick{$\ket{0}$} &\gategroup[wires=2,steps=4,style={dashed,rounded corners,fill=blue!20, inner xsep=2pt},background]{}\gategroup[wires=9,steps=4,style={dashed,rounded corners, inner xsep=2pt},background]{$U(\bm{\theta}^1)$}\gate[style={fill=yellow!20}]{U}&\ctrl{1}&\gate[style={fill=yellow!20}]{U} &\ctrl{1} &\gategroup[wires=9,steps=4,style={dashed,rounded corners,fill=blue!20, inner xsep=2pt},background]{$U(\bm{\theta}^2)$}\qw &\qw &\qw &\qw &\qw &\gategroup[wires=9,steps=4,style={dashed,rounded corners,fill=blue!20, inner xsep=2pt},background]{$U(\bm{\theta}^L)$}\qw &\qw &\qw &\qw &\meter{}\\
				\lstick{$\ket{0}$} &\gate[style={fill=yellow!20}]{U} &\targ{}&\gate[style={fill=yellow!20}]{U}&\targ{}&\gate[style={fill=yellow!20}]{U} &\ctrl{2} &\gate[style={fill=yellow!20}]{U} &\ctrl{2} &\qw &\qw &\qw &\qw &\qw &\meter{}\\
				\lstick{$\ket{0}$} &\gategroup[wires=2,steps=4,style={dashed,rounded corners,fill=pink!20, inner xsep=2pt},background]{}\gate[style={fill=yellow!20}]{U} &\ctrl{1} &\gate[style={fill=yellow!20}]{U} &\ctrl{1} &\qw &\qw &\qw &\qw &\qw &\qw&\qw &\qw &\qw &\meter{}\\
				\lstick{$\ket{0}$} &\gate[style={fill=yellow!20}]{U} &\targ{} &\gate[style={fill=yellow!20}]{U} &\targ{} &\gate[style={fill=yellow!20}]{U} &\targ{} &\gate[style={fill=yellow!20}]{U} &\targ{} &\qw &\gate[style={fill=yellow!20}]{U}&\ctrl{5}&\gate[style={fill=yellow!20}]{U}&\ctrl{5}&\meter{}\\
				\lstick{\vdots}  & &  & \vdots & & & & & &\ldots & & & & &\vdots\\ 
				\lstick{$\ket{0}$} &\gategroup[wires=2,steps=4,style={dashed,rounded corners,fill=yellow!20, inner xsep=2pt},background]{}\gate[style={fill=yellow!20}]{U} &\ctrl{1} &\gate[style={fill=yellow!20}]{U} &\ctrl{1}	&\qw &\qw &\qw &\qw &\qw &\qw &\qw &\qw &\qw &\meter{} \\
				\lstick{$\ket{0}$} &\gate[style={fill=yellow!20}]{U} &\targ{} &\gate[style={fill=yellow!20}]{U} &\targ{} &\gate[style={fill=yellow!20}]{U}&\ctrl{2} &\gate[style={fill=yellow!20}]{U} &\ctrl{2} &\qw &\qw &\qw &\qw &\qw &\meter{} \\
				\lstick{$\ket{0}$} &\gategroup[wires=2,steps=4,style={dashed,rounded corners,fill=green!20, inner xsep=2pt},background]{}\gate[style={fill=yellow!20}]{U} &\ctrl{1} &\gate[style={fill=yellow!20}]{U} &\ctrl{1}	&\qw &\qw &\qw &\qw &\qw &\qw &\qw &\qw &\qw  &\meter{} \\
				\lstick{$\ket{0}$} &\gate[style={fill=yellow!20}]{U} &\targ{} &\gate[style={fill=yellow!20}]{U} &\targ{} &\gate[style={fill=yellow!20}]{U}&\targ{} &\gate[style={fill=yellow!20}]{U} &\targ{} &\qw &\gate[style={fill=yellow!20}]{U}&\targ{} &\gate[style={fill=yellow!20}]{U} &\targ{}
				&\meter{} 
			\end{tikzcd}
		\end{adjustbox}	
		\caption{An Example of TPQCs where the CNOT gates in different layers has different local constraints.}
		\label{fig:TPQC1}
	\end{figure}
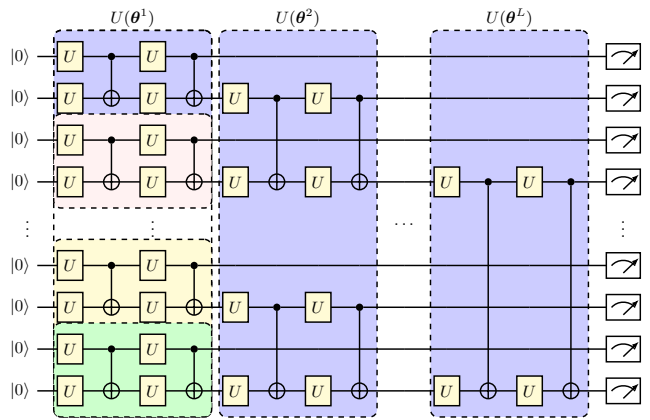
	Figure~\ref{fig:TPQC2} illustrates another example of TPQC.
	Employing TPQCs to accomplish generative tasks has been investigated in \cite{huggins2018towards}.
	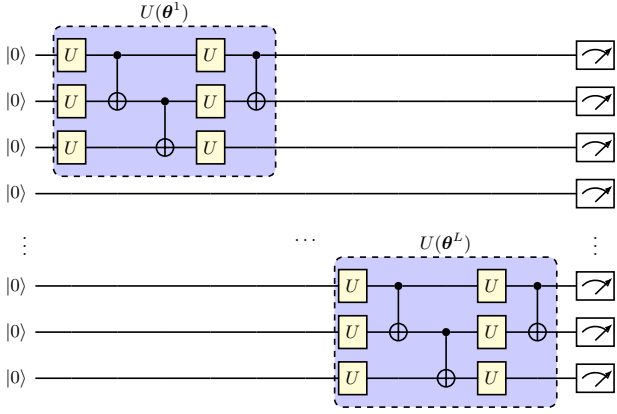
\begin{figure}[!ht]
	\begin{adjustbox}{width=0.47\textwidth}\label{sss}
		\begin{tikzcd}[row sep={0.8cm,between origins}]
			\lstick{$\ket{0}$} &\gategroup[wires=3,steps=5,style={dashed,rounded corners,fill=blue!20, inner xsep=2pt},background]{$U(\bm{\theta}^1)$}\gate[style={fill=yellow!20}]{U}&\ctrl{1}&\qw &\gate[style={fill=yellow!20}]{U}&\ctrl{1}  &\qw &\qw &\qw &\qw&\qw&\qw&\meter{}\\
			\lstick{$\ket{0}$} &\gate[style={fill=yellow!20}]{U} &\targ{}&\ctrl{1}  &\gate[style={fill=yellow!20}]{U}&\targ{}&\qw &\qw &\qw &\qw &\qw &\qw &\meter{}\\
			\lstick{$\ket{0}$} &\gate[style={fill=yellow!20}]{U} &\qw &\targ{} &\gate[style={fill=yellow!20}]{U} &\qw  &\qw &\qw &\qw &\qw &\qw &\qw &\meter{}\\
			\lstick{$\ket{0}$} &\qw &\qw &\qw &\qw&\qw &\qw &\qw &\qw &\qw &\qw &\qw&\meter{}\\
			\lstick{\vdots}    & & & & & &\ldots & & & & & &\vdots\\ 
			\lstick{$\ket{0}$} &\qw &\qw  &\qw &\qw &\qw &\qw &\gategroup[wires=3,steps=5,style={dashed,rounded corners,fill=blue!20, inner xsep=2pt},background]{$U(\bm{\theta}^L)$}\gate[style={fill=yellow!20}]{U} &\ctrl{1} &\qw &\gate[style={fill=yellow!20}]{U} &\ctrl{1}&\meter{} \\
			\lstick{$\ket{0}$}  &\qw &\qw &\qw &\qw &\qw &\qw &\gate[style={fill=yellow!20}]{U} &\targ{} &\ctrl{1} &\gate[style={fill=yellow!20}]{U}&\targ{}&\meter{} \\
			\lstick{$\ket{0}$}&\qw &\qw &\qw &\qw &\qw &\qw &\gate[style={fill=yellow!20}]{U} &\qw &\targ{} &\gate[style={fill=yellow!20}]{U}&\qw &\meter{}  
		\end{tikzcd}
	\end{adjustbox}	
	\caption{An Example of TPQCs that inherits the layout of MPS.}
	\label{fig:TPQC2}
\end{figure}

	\section{Expressive Power Parameterized Quantum Circuits}
	The goal of a generative learning network is to learn a distribution $q(x)$ that approximates a targeted probability distribution $p(x)$ within a tolerable error $\epsilon$. The expressive power of a generative learning machine directly determines how well the generated distribution can match the target distribution (e.g. Eqn. (\ref{eqn:minloss})). The stronger the expressive power is, the smaller  the dissimilarity of two distributions will be. 
	
	One of the main results in this paper is as follows.
	\begin{thm}\label{thmv1}
		The expressive power of MPQCs and TPQCs with $O(poly(N))$ single qubits gates and CNOT gates, and classical neural networks with $O(poly(N))$ trainable parameters, where $N$ refers to the number of qubits or the visible units,  can be ordered as:  MPQCs $ > $ DBM $>$ long range RBM $>$TPQCs $>$ short range RBM.
	\end{thm}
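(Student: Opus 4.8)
The plan is to read the single composite claim as a chain of four strict inequalities and to prove each link separately by a containment-plus-separation argument, drawing on two complementary metrics of expressive power: entanglement scaling (Subsection \ref{sub:IIc}) and computational complexity (Proposition \ref{prop1}). Throughout, I would fix the meaning of ``$A>B$'' to be that $A$ can reproduce with $O(poly(N))$ parameters every distribution that $B$ produces with $O(poly(N))$ parameters, and that there exists a witness distribution efficiently realizable by $A$ but not by $B$. So each of the four links needs both a representational containment and a strict witness.

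First I would place all five models into a common tensor-network language, using the MPS/SBS machinery of Subsection \ref{sec:2B} together with the (separately established) reformulation of MPQCs as tensor networks. From each model's bond dimension $D$ I would read off the maximal bipartite entanglement it can carry via $\mathcal{S}(\rho_A)\sim\ln D$. This yields the bottom of the chain: MPQCs, TPQCs, DBM and long range RBM all reach volume-law entanglement with polynomially many parameters, whereas a short range RBM is confined to area law, so reproducing a volume-law state would force its $D$, and hence its parameter count, to grow exponentially. This gives the last link, TPQCs $>$ short range RBM.

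The three middle links require separations that entanglement alone cannot deliver, since those four models share volume-law scaling. For MPQCs $>$ DBM I would embed an IQP circuit as a special MPQC (the Hadamard--diagonal--Hadamard layout is a valid block arrangement), whose output distribution cannot be weakly simulated within multiplicative error $1\le c\le\sqrt 2$ by any classical sampler unless the polynomial hierarchy collapses (Proposition \ref{prop1}); since a DBM is a classical generator this supplies the strict gap, while the containment DBM $\subseteq$ MPQC follows from the tensor-network embedding of step one. For DBM $>$ long range RBM the containment is immediate, as an RBM is a DBM with a single hidden layer, and the strictness would come from a witness efficiently representable by a depth-$2$ DBM but provably requiring super-polynomial resources in any RBM, i.e. the known representational gap between deep and shallow Boltzmann machines. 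For long range RBM $>$ TPQCs I would exhibit a toy distribution whose correlations are genuinely all-to-all and respect no one-dimensional ordering of the qubits: the unrestricted bipartite connectivity of a long range RBM (and the global CNOTs of DBM and MPQCs) generates it with poly parameters, whereas the strictly local CNOT layout of a TPQC forces its state into an MPS/tree form.

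The hard part will be the two genuinely new separations, namely the TPQC lower bound and the DBM-versus-RBM gap, because both must hold \emph{despite} the competing models sharing the same volume-law entanglement budget; entanglement is therefore necessary but not sufficient, and a dedicated hardness argument keyed to the architecture is needed. I expect the TPQC impossibility to be the crux: even though a TPQC can reach volume-law entanglement for states aligned with its wiring, I must show it cannot reach this particular distribution, which I would do by proving a bond-dimension lower bound establishing that encoding the toy distribution's order-independent correlations in any tree/MPS layout demands exponentially large local blocks, contradicting the $O(poly(N))$ budget.
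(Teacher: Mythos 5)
Your proposal follows essentially the same route as the paper's proof: the bottom link via volume-law versus area-law entanglement read off from bond dimensions (the paper's Theorem~\ref{thm1} combined with cited results on DBM and short/long range RBM), the TPQC separation via a toy distribution with long-range correlations that the local wiring of TPQCs cannot realize (Theorem~\ref{lemm2}), the top link via embedding IQP circuits into MPQCs and invoking Proposition~\ref{prop1} so that any classical generator simulating them would collapse the polynomial hierarchy (Theorems~\ref{thm2} and~\ref{thm:5}), and the DBM-versus-RBM gap by appeal to the known deep/shallow separation (\cite{gao2017efficient1}). The one divergence is that you read ``$>$'' as containment plus a separating witness, which is a strictly stronger standard than the paper meets---the paper only exhibits one-directional witnesses and never proves containments such as DBM~$\subseteq$~MPQC, so the containment halves of your plan (which you justify only by a hand-waved ``tensor-network embedding'') would be additional, nontrivial work beyond what the paper itself establishes.
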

	\begin{proof}
		
		This theorem can be proved following Theorems \ref{thm1}, \ref{lemm2} and \ref{thm2}.  Theorem~\ref{thm1} below demonstrates that MPQCs and TPQCs are capable of simulating quantum systems with volume law entanglement. Combined with the results in \cite{gao2017efficient1,chen2018equivalence,deng2017quantum},  DBM and the long range RBM are also capable of efficiently representing quantum states with volume law entanglement, whereas the short range RBM can only efficiently represent quantum states with area law entanglement.  {Theorem~\ref{lemm2} proves that some distributions which can be efficiently generated by MPQCs, DBM, and long range RBM, are difficult to be generated by TPQCs. Next, by making connections with IQP circuits in Theorem~\ref{thm:5}, we can further prove: $\text{MPQCs} > \text{DBM} $. Since it has been proved that DBM has a stronger expressive power than long range RBM \cite{gao2017efficient1},} this concludes the theorem. 
	\end{proof}

	We first relate MPQCs and TPQCs to tensor network states in the following theorem.
	\begin{thm}\label{thm1}
		MPS with bond dimension $D$ can be efficiently represented by MPQCs and TPQCs 
		with $O(poly(\log{D}))$ blocks, where each block contains $O(N)$ trainable parameters and at most $N$ CNOT gates.
	\end{thm}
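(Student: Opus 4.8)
The plan is to turn the matrix-product-state representation of Eqn.~(\ref{eqn:A2}) into an explicit overlapping-window circuit and then count resources. First I would bring the MPS into its left-canonical gauge, so that each local tensor $A^{j_i}$, viewed as a map from the incoming bond index $a_{i-1}$ to the pair $(j_i,a_i)$, becomes an isometry satisfying $\sum_{j_i}(A^{j_i})^{\dagger}A^{j_i}=\mathbb{I}$. This is just a redistribution of the singular-value and $V^{\dagger}$ factors produced by the successive SVDs already displayed in Eqn.~(\ref{eqn:A2}), so it introduces no new parameters; it only guarantees that every tensor can be dilated to a unitary.

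Next I would realize the canonical MPS as a staircase of local unitaries. Each canonical tensor, being an isometry from a $D$-dimensional bond space into a $2D$-dimensional space, dilates to a unitary $V_i$ acting on one fresh site together with a register of $\lceil\log_2 D\rceil$ bond qubits. Laying the $V_i$ as an overlapping staircase on the $N$ qubits initialized to $\ket{0}^{\otimes N}$ then reproduces $\ket{\Psi}$ exactly: the $\lceil\log_2 D\rceil$ qubits shared between consecutive windows carry the bond, each step finalizes one output qubit, and the next as-yet-untouched qubit plays the role of the freshly introduced ancilla. This is precisely the layout of Fig.~\ref{fig:TPQC2} for the TPQC, and the MPQC emulates it by padding each block with identities outside the active window. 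The structural point I would stress is that every $V_i$ is supported on only $m=\lceil\log_2 D\rceil+1$ qubits.

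I would then compile each $V_i$ into the native gate set. Because $V_i$ acts on $m$ qubits, I would apply a cosine--sine (quantum Shannon) decomposition to write it as a product of single-qubit $R_Z,R_Y$ rotations and controlled rotations, and expand each controlled rotation into single-qubit gates and CNOTs through Proposition~\ref{prop:1}. Every gate produced this way stays inside the $m\le N$ qubits of the window, so the gates collect into legal blocks: each block touches its qubits with at least one parameterized gate, giving $O(N)$ parameters, and uses at most $N$ CNOT gates. Bundling the layers across all the $V_i$ then expresses $\ket{\Psi}$ as an MPQC, or as the overlapping-window TPQC of Fig.~\ref{fig:TPQC2}.

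The step I expect to be the main obstacle is bounding the resulting block count by $O(\mathrm{poly}(\log D))$. A black-box decomposition of a generic $m$-qubit unitary costs $\Theta(4^{m})=\Theta(\mathrm{poly}(D))$ gates, so the polylogarithmic bound cannot follow from treating the $V_i$ as arbitrary unitaries; the argument must instead exploit the isometric, MPS-induced structure of each $V_i$ and the fact that the shared bond register of $\lceil\log_2 D\rceil$ qubits is reused from step to step rather than re-prepared, so that only polylogarithmically many layers per window are needed while the $\le N$ CNOT-per-block limit is respected. Verifying that this structured compilation reproduces bond dimension exactly $D$ across every cut, with $O(\mathrm{poly}(\log D))$ blocks, is the technical heart of the proof; by contrast the gauge-fixing, the dilation to the $V_i$, and the Proposition~\ref{prop:1} expansions are all routine.
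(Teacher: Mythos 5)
Your proposal sets up the standard sequential-preparation picture for MPS (left-canonical gauge, dilation of each isometric tensor to a unitary $V_i$ on $m=\lceil\log_2 D\rceil+1$ qubits, staircase layout as in Fig.~\ref{fig:TPQC2}), but it then defers exactly the step that \emph{is} the content of Theorem~\ref{thm1}: that $O(poly(\log D))$ blocks suffice. As you note yourself, compiling each $V_i$ via the quantum Shannon decomposition and Proposition~\ref{prop:1} costs $\Theta(4^m)=\Theta(poly(D))$ elementary gates, so your construction as written yields $poly(D)$ blocks, not $poly(\log D)$. The hope that one can instead ``exploit the isometric, MPS-induced structure'' of the $V_i$ cannot be realized in general: a left-canonical MPS tensor is an essentially arbitrary isometry from the bond space into the site-plus-bond space, carrying $\Theta(dD^2)$ free parameters per site, so there is no special structure for the compiler to exploit. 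Indeed, simple parameter counting makes the deferred step impossible for arbitrary MPS --- a bond-dimension-$D$ MPS family has $\Theta(NdD^2)$ parameters, while $O(poly(\log D))$ blocks with $O(N)$ parameters each carry only $O(N\, poly(\log D))$ --- so the ``technical heart'' you leave open cannot be filled in by any exact-compilation argument. The proposal is therefore incomplete precisely at the theorem's central claim.

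The paper's proof takes a genuinely different (and much coarser) route that never compiles dilated isometries. It works directly at the tensor level: the CNOT gate is factored into two local rank-$3$ tensors of bond dimension $2$ (Eqns.~(\ref{eqn:CNOT_decompose})--(\ref{eqn:CNOT})), so $k$ CNOT gates crossing a given cut raise the attainable bond dimension across that cut to at most $2^k$; since each block may contain up to $N$ CNOT gates, $O(poly(\log D))$ blocks already achieve bond dimension $D$ across every cut, and the parameterized single-qubit gates are then merged with the fixed CNOT tensors to shape the local tensors $A^{j_i}_{a_i}$ of Eqn.~(\ref{eqn:A2}) (Fig.~\ref{fig:5-1}). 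Note that this argument establishes attainability of bond dimension $D$ together with tunable local tensors, rather than exact reproduction of a prescribed MPS --- which is exactly how it sidesteps, rather than resolves, the counting obstruction you correctly ran into. Your write-up has the virtue of making that quantitative difficulty explicit, but as a proof of the stated theorem it is missing its decisive step.
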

	\begin{proof}
		In PQCs, only CNOT gates can increase bond dimensions. Specifically,
		\begin{equation}
		\text{CNOT} = \sum_{\sigma,\tau,\sigma', \tau' \in\{0,1\}} \mathcal{O}_{\sigma,\tau}^{\sigma',\tau'}\ket{\sigma'}\ket{\tau'}\bra{\tau}\bra{\sigma},
		\end{equation}
		where $\mathcal{O}_{\sigma,\tau}^{\sigma',\tau'}$ is a rank-$4$ tensor with $\mathcal{O}_{\sigma=0,\tau=0}^{\sigma'=0,\tau'=0}=\mathcal{O}_{\sigma=1,\tau=1}^{\sigma'=1,\tau'=1}=\mathcal{O}_{\sigma=0,\tau=1}^{\sigma'=1,\tau'=1}=\mathcal{O}_{\sigma=1,\tau=1}^{\sigma'=1,\tau'=0}=1$ and  $0$ otherwise. The CNOT gate can be decomposed into two local tensors with bond dimension $D=2$. One possible solution is
		\begin{equation}\label{eqn:CNOT_decompose}
		\mathcal{O}_{\sigma,\tau}^{\sigma',\tau'} = \sum_{b\in\{0,1\}}W^{\sigma',\sigma}_{1b}W^{\tau',\tau}_{2b}~,
		\end{equation}
		where $W^{\sigma',\sigma}_{1b}$ and $W^{\tau',\tau}_{2b}$ correspond to two local rank-$3$ tensors, and their explicit representations are as follows:
		\begin{eqnarray}\label{eqn:CNOT}
		&&W^{\sigma',\sigma}_{10} = \begin{bmatrix}
		1 & 0 \\
		0& 0
		\end{bmatrix},
		W^{\sigma',\sigma}_{11} = \begin{bmatrix}
		1 & 0 \nonumber\\
		0& 1
		\end{bmatrix},\\
		&&	W^{\tau',\tau}_{20} = \begin{bmatrix}
		1 & -1 \\
		-1& 1
		\end{bmatrix},
		W^{\tau',\tau}_{21} = \begin{bmatrix}
		0 & 1 \\
		1& 0
		\end{bmatrix}~.
		\end{eqnarray}
		
		{Suppose that there exists $k$ CNOT gates between the $i$-th and $(i+1)$-th qubits, where the first $i$ qubits and the remaining $N-i$ qubits compose a bipartite system, the maximal bond dimension of such a bipartite system is  $2^k$.  
			Since the bond dimension exponentially scales with the number of CNOT gates, 
			$O(poly(\log D))$ blocks are required to generate an MPS with bond dimension $D$.}
		
	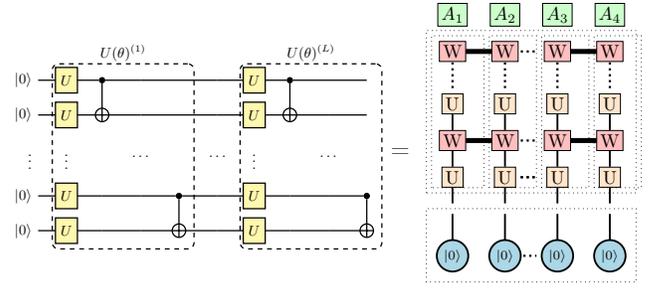
\begin{figure}[!ht]
	\begin{adjustbox}{width=0.28\textwidth}
		\begin{tikzcd}
			\lstick{$\ket{0}$} &\gategroup[wires=5,steps=4,style={dashed,rounded corners,inner xsep=2pt},background]{$U(\theta)^{(1)}$}\gate[style={fill=yellow!40}]{U}  &\ctrl{1} &\qw &\qw &\qw  &\gategroup[wires=5,steps=4,style={dashed,rounded corners,inner xsep=2pt},background]{$U(\theta)^{(L)}$}\gate[style={fill=yellow!40}]{U}  &\ctrl{1} &\qw &\qw\\
			\lstick{$\ket{0}$} &\gate[style={fill=yellow!40}]{U}  & \targ{} &\qw  &\qw &\qw &\gate[style={fill=yellow!40}]{U}  & \targ{} &\qw  &\qw \\
			\lstick{\vdots}  &\vdots & &\ldots & &\ldots &\vdots & &\ldots &\\
			\lstick{$\ket{0}$} &\gate[style={fill=yellow!40}]{U} & \qw  &\qw &\ctrl{1} & \qw &\gate[style={fill=yellow!40}]{U} & \qw  &\qw &\ctrl{1} \\
			\lstick{$\ket{0}$} &\gate[style={fill=yellow!40}]{U} & \qw  &\qw &\targ{} & \qw &\gate[style={fill=yellow!40}]{U} & \qw  &\qw &\targ{}	
		\end{tikzcd}
	\end{adjustbox}
	=
	\raisebox{-.45\height}{
		\begin{adjustbox}{width=0.16\textwidth}
			\begin{tikzpicture}
			\draw[dotted] (-.5,-.5) rectangle (3.5,0.9);
			\Vertex[label=\ket{0},Math]{A}
			\draw[line  width=1pt] (0,0.3) -- (0,.8);
			\Vertex[x=1,label=\ket{0},Math]{B}
			\draw[line  width=1pt] (1,.3) -- (1,.8);
			\draw [very thick, dotted] (1.35, 0) -- (1.6,0);
			\Vertex[x=2,label=\ket{0},Math]{C}
			\draw[line  width=1pt] (2,.3) -- (2,.8);
			\Vertex[x=3,label=\ket{0},Math]{D}
			\draw[line  width=1pt] (3,.3) -- (3,.8);
			\draw [line  width=1pt] (0,1) -- (0,1.3);
			\draw [line  width=1pt] (1,1) -- (1,1.3);
			\draw [line  width=1pt] (2,1) -- (2,1.3);
			\draw [line  width=1pt] (3,1) -- (3,1.3);
			\Text[x=1,y=1.5,style={draw, rectangle,fill=orange!20}]{U}
			\Text[x=2,y=1.5,style={draw, rectangle,fill=orange!20}]{U}
			\draw [very thick, dotted] (1.3, 1.5) -- (1.6,1.5);
			\Text[x=3,y=1.5,style={draw, rectangle,fill=orange!20}]{U}
			\Text[x=0,y=1.5,style={draw, rectangle,fill=orange!20}]{U}
			\draw [line  width=1pt] (0,1.7) -- (0,2.0);
			\draw [line  width=1pt] (1,1.7) -- (1,2.0);
			\draw [line  width=1pt] (2,1.7) -- (2,2.0);
			\draw [line  width=1pt] (3,1.7) -- (3,2.0);
			\Text[x=1,y=2.2,style={draw, rectangle,fill=pink}]{W}
			\draw [line,line width=0.95mm] (.2, 2.2) -- (.75,2.2);
			\Text[x=2,y=2.2,style={draw, rectangle,fill=pink}]{W}
			\draw [very thick, dotted] (1.3, 2.2) -- (1.6,2.2);
			\Text[x=3,y=2.2,style={draw, rectangle,fill=pink}]{W}
			\draw [line,line width=0.95mm] (2.25, 2.2) -- (2.75,2.2);
			\Text[x=0,y=2.2,style={draw, rectangle,fill=pink}]{W}
			\draw [line  width=1pt] (0,2.4) -- (0,2.7);
			\draw [line  width=1pt] (1,2.4) -- (1,2.7);
			\draw [line  width=1pt] (2,2.4) -- (2,2.7);
			\draw [line  width=1pt] (3,2.4) -- (3,2.7);
			\Text[x=1,y=2.9,style={draw, rectangle,fill=orange!20}]{U}
			\Text[x=2,y=2.9,style={draw, rectangle,fill=orange!20}]{U}
			\draw [very thick, dotted] (1.3, 1.5) -- (1.6,1.5);
			\Text[x=3,y=2.9,style={draw, rectangle,fill=orange!20}]{U}
			\Text[x=0,y=2.9,style={draw, rectangle,fill=orange!20}]{U}
			\draw [very thick, dotted] (0,3.2) -- (0,3.7);
			\draw [very thick, dotted](1,3.2) -- (1,3.7);
			\draw [very thick, dotted] (2,3.2) -- (2,3.7);
			\draw [very thick, dotted] (3,3.2) -- (3,3.7);
			\Text[x=1,y=3.9,style={draw, rectangle,fill=pink}]{W}
			\draw [line,line width=0.95mm] (.2, 3.9) -- (.75,3.9);
			\Text[x=2,y=3.9,style={draw, rectangle,fill=pink}]{W}
			\draw [very thick, dotted] (1.3, 3.9) -- (1.6,3.9);
			\Text[x=3,y=3.9,style={draw, rectangle,fill=pink}]{W}
			\draw [line,line width=0.95mm] (2.25, 3.9) -- (2.75,3.9);
			\Text[x=0,y=3.9,style={draw, rectangle,fill=pink}]{W}
			\draw[dotted] (-.5,1.2) rectangle (3.5,4.3);
			\draw[dotted] (-.4,1.3) rectangle (0.6,4.2);
			\Text[x=0,y=4.6,style={draw, rectangle,fill=green!20}]{$A_1$}
			\draw[dotted] (.7,1.3) rectangle (1.6,4.2);
			\Text[x=1,y=4.6,style={draw, rectangle,fill=green!20}]{$A_2$};
			\draw[dotted] (1.7,1.3) rectangle (2.6,4.2);
			\Text[x=2,y=4.6,style={draw, rectangle,fill=green!20}]{$A_3$};
			\draw[dotted] (2.7,1.3) rectangle (3.6,4.2);
			\Text[x=3,y=4.6,style={draw, rectangle,fill=green!20}]{$A_4$};
			\end{tikzpicture}
		\end{adjustbox}
	}
	\caption{\small{ The mapping between MPQC and MPS.}}
	\label{fig:5-1}
\end{figure}
		
		Although CNOT increases the bond dimensions, it cannot directly represent arbitrary local tensors $A_{a_i}^{j_i}$ defined in Eqn. (\ref{eqn:A2}), because the local tensors $W$ of CNOT gates defined in Eqn. (\ref{eqn:CNOT_decompose}) are fixed. This issue can be tackled by using the parameterized single qubit gates. In summary, any MPS with bond dimension $D$ can be simulated by PQCs with $O(poly(\log D))$ blocks so that CNOT gates contribute to increase the bond dimensions and parameterized single qubit gates contribute to form arbitrary local tensors.
		
		{Figure \ref{fig:5-1} depicts a mapping between MPQC and MPS, where, for illustrative purpose, we assume that $N-1$ CNOT gates are applied to the data qubits in sequence. The middle section of Figure  \ref{fig:5-1} indicates the effects of CNOT gates and parameterized single qubit gates. All local tensors applied to the same qubit can be merged into one local tensor (c.f. Eqn.~(\ref{eqn:A2})) and yield the corresponding MPS, as shown in the right section of Figure  \ref{fig:5-1}.}
		
		
	\end{proof}
	
	Theorem \ref{thm1} implies that MPQCs and TPQCs with polynomial (logarithmic) blocks can efficiently represent quantum states with volume (area) law entanglement. However, the expressive power does not solely depend on the volume of entanglement alone. Even though both long range RBM and MPQCs can represent quantum states with volume law, some quantum states, such as  those generated by the translation-invariant Ising spin model \cite{gao2017efficient1}, can be efficiently represented by constant depth quantum circuits, but are hard for RBM. %
	
	Two major differences between TPQCs and MPQCs are that (i) CNOT gates in TPQCs cannot connect any two qubits arbitrarily and (ii) the blocks are replicated based on the structure of the tensor networks. This restriction limits the expressive power of TPQCs.
	
	\begin{thm}\label{lemm2}
		Some probability distributions generated by MPQCS, DBM, and long range RBM cannot be efficiently generated by TPQCs.
	\end{thm}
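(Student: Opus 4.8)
The plan is to exhibit an explicit \emph{toy} state $\ket{\Psi}$, together with its Born distribution $P(\bm{x})=|\braket{\bm{x}|\Psi}|^2$, that carries volume-law entanglement across a single balanced cut, and then to play the two structural restrictions of TPQCs noted above---the locality of their CNOT gates and the geometrically shrinking block counts $M_i\propto N/2^i$---off against the unrestricted connectivity of MPQCs. Concretely, I would take $\ket{\Psi}$ to pair the $i$-th qubit with the $(i+N/2)$-th qubit into a maximally entangled pair, e.g. $\ket{\Psi}=\bigotimes_{i=1}^{N/2}\mathrm{CNOT}_{i,\,i+N/2}\,(H\otimes\mathbb{I})\ket{00}$, so that the reduced state on the left half is maximally mixed and $\mathcal{S}(\rho_L)=\tfrac{N}{2}\ln 2$.

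For the upper bounds, an MPQC produces $\ket{\Psi}$ in a \emph{single} block: apply a parameterized single-qubit layer realizing $H$ on qubits $1,\dots,N/2$ and then $N/2$ CNOTs, each joining a left qubit to its cross-cut partner, which is admissible because MPQC CNOTs may connect arbitrary pairs and their number per block is $\le N$. Hence $\ket{\Psi}$, and therefore $P$, is generated with $O(N)$ gates. That this same volume-law state is efficiently representable by a DBM and by a long-range RBM follows directly from the constructions of \cite{gao2017efficient1,chen2018equivalence,deng2017quantum} invoked in the proof of Theorem~\ref{thmv1}, and by Theorem~\ref{thm1} it also sits inside the MPQC/TPQC-as-tensor-network picture.

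The heart of the argument is the TPQC lower bound. Using the CNOT tensor decomposition of Theorem~\ref{thm1}, every CNOT that straddles a bipartition raises the entanglement across that cut by at most $\ln 2$, so for any circuit $\mathcal{S}(\rho_L)\le(\text{number of CNOTs crossing the cut})\cdot\ln 2$. For a TPQC the tree/MERA layout forces the blocks to act locally, and the constraint $M_i\propto N/2^i$ caps the effective depth at $L\sim\log N$; counting the gates that can cross the central cut through the tree's bottleneck then bounds it by $O(\mathrm{poly}\log N)$, giving $\mathcal{S}_{\mathrm{TPQC}}(\rho_L)=O(\mathrm{poly}\log N)\ll\tfrac{N}{2}\ln 2$. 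Consequently no polynomial-size TPQC can output the state $\ket{\Psi}$.

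The step I expect to be the main obstacle is bridging from the \emph{state} to the \emph{measurement distribution} $P$: two different states may share the same Born distribution, so one must argue that \emph{every} state reproducing $P$ (within the tolerable error $\epsilon$) is forced to carry the volume-law entanglement that TPQCs cannot supply. I would address this by fixing the relative phases of $\ket{\Psi}$ through a diagonal layer so that the entanglement is certified by correlations of $P$ itself, and then invoke continuity of the von Neumann entropy (a Fannes-type bound) to lift the exact separation to an $\epsilon$-approximate one. A secondary technical point is to make the $O(\mathrm{poly}\log N)$ cut-crossing count rigorous for \emph{every} admissible TPQC layout---MPS, tree, and MERA alike---rather than only for the balanced binary tree.
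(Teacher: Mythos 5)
Your route is genuinely different from the paper's. The paper's proof of Theorem~\ref{lemm2} is a short constructive argument based on \emph{long-range correlation}, not entanglement: it exhibits the point distribution $p(v_1=1,\,v_i=0,\,v_N=1)=1$, shows that DBM/long-range RBM generate it with a single hidden unit linking $v_1$ and $v_N$, that an MPQC generates it with a single long-range CNOT between the first and $N$-th qubits, and then asserts that TPQCs cannot produce it because their layout forbids such long-range connections. You instead try to separate MPQCs from TPQCs by entanglement entropy, using Bell pairs across the central cut with $\mathcal{S}(\rho_L)=\tfrac{N}{2}\ln 2$. The problem is that an entanglement-based separation is unavailable inside this paper's own framework: the proof of Theorem~\ref{thmv1} explicitly invokes Theorem~\ref{thm1} to assert that TPQCs, with the $O(poly(N))$ blocks they are allowed, \emph{can} simulate volume-law entangled states (this is exactly how the paper places TPQCs above short-range RBM in its ordering). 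Concretely, a TPQC whose blocks follow the MPS layout but are swept over the chain $L=poly(N)$ times has $poly(N)$ nearest-neighbor CNOTs crossing the central cut; your own ``at most $\ln 2$ per crossing CNOT'' bound then permits entanglement up to $\tfrac{N}{2}\ln 2$, and your target state is in fact easy for such a circuit: prepare Bell pairs on adjacent pairs $(2i-1,2i)$ and route the halves to positions $(i,\,i+N/2)$ with $O(N^2)$ nearest-neighbor SWAPs (three CNOTs each), all within polynomial gate count. So the step you call a ``secondary technical point''---extending the $O(\mathrm{poly}\log N)$ cut-crossing count from the balanced binary tree to every admissible TPQC layout---is not secondary: it is false for MPS-layout TPQCs of polynomial depth, which the paper's definition (and its Theorem~\ref{thm1}) includes. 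Your lower bound survives only if one hard-wires the depth cap $L\sim\log N$ coming from $M_i\propto N/2^i$ into the very definition of TPQCs, which would contradict the paper's own claim in Theorem~\ref{thmv1} that TPQCs reach volume law.

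A smaller remark in the other direction: the state-to-distribution bridge you flag as the main obstacle is actually not a problem for your construction, and needs no phase-fixing layer or Fannes-type argument in the exact case. Any \emph{pure} state whose Born distribution is supported on strings with $x_L=x_R$ must have the form $\sum_s c_s\ket{s}_L\ket{s}_R$, so $\rho_L$ is diagonal with eigenvalues $|c_s|^2$ and its entropy equals the Shannon entropy of the marginal---$\tfrac{N}{2}\ln 2$ in the uniform case---independently of the phases of the $c_s$. (Continuity would only be needed for the $\epsilon$-approximate statement.) So that part of your plan is sound; the genuine gap is the TPQC lower bound described above, and to repair the proof within this paper's framework you would need to switch to an obstruction that does not reduce to entanglement---which is precisely why the paper, whose own argument is admittedly informal, resorts to the long-range-correlation toy example instead.
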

	
	\begin{proof}
		{The theorem is proved by construction. In DBM and long range RBM, the correlation between any two visible units  can be built by linking to the same hidden unit. Similarly, in MPQCs, any two qubits can build correlation by applying a CNOT gate. We provide an example to show that the distribution can be easily generated by DBM, long range RBM, MPQCs but can be difficult for TPQCs. Given $N$ binary inputs $\{v_i\}_{i=1}^N$ where $v_1=1$ and $v_i=0$ for $i\in\{2,3,...,N\}$, we define the targeted distribution as $p(v_1=1, v_i=0, v_N=1)=1$ with $i\in\{2,3,...,N-1\}$. For DBM and long range RBM, this distribution can be generated by introducing one hidden unit $h_1$. As shown in the left panel of Figure \ref{fig:6-1}, each visible unit encodes a binary input and the number of trainable parameters is $2$. Similarly, the distribution can be generated by MPQCs.  By encoding $v_i$ into the $i$-th qubit, only one  CNOT gate is required to connect the first and the $N$-th qubit, as illustrated in the right panel of the Figure \ref{fig:6-1}. However, this distribution cannot be  efficiently generated by TPQCs, which prevents long range interaction.   }
\begin{figure}[!ht]
	\raisebox{-.45\height}{				\begin{adjustbox}{width=0.25\textwidth}
			\begin{tikzpicture}
			\Vertex[label=v_1,Math]{A}	\Vertex[x=1, label=v_2,Math]{B}
			\Vertex[x=2,label=v_3,Math]{C}
			\draw[very thick, dotted] (2.7, 0) -- (3, 0);
			\Vertex[x=4,label=v_N,Math]{D}
			\Vertex[x=2, y=2.5,color=red!30,label=h_1,Math]{E}
			\Edge(A)(E)
			\Edge(D)(E)
			\end{tikzpicture}
		\end{adjustbox}
	}
	\begin{adjustbox}{width=0.20\textwidth}
		\begin{tikzcd}[row sep={0.8cm,between origins}]
			\lstick{$\ket{1}$} &\qw &\gategroup[wires=5,steps=2,style={dashed,rounded corners,inner xsep=2pt},background]{} \qw &\ctrl{4}   &\meter{}\\
			\lstick{$\ket{0}$} &\qw  & \qw &\qw  &\meter{}\\
			\lstick{$\ket{0}$} &\qw & \qw  &\qw  &\meter{}\\
			\lstick{\vdots}  &\vdots &\ & &\vdots\\
			\lstick{$\ket{0}$} &\qw & \qw  &\targ{}  &\meter{}           	
		\end{tikzcd}
	\end{adjustbox}
	\caption{A toy example to demonstrate that a probability distribution cannot be efficiently generated by TPQCs.}
	\label{fig:6-1}
\end{figure}
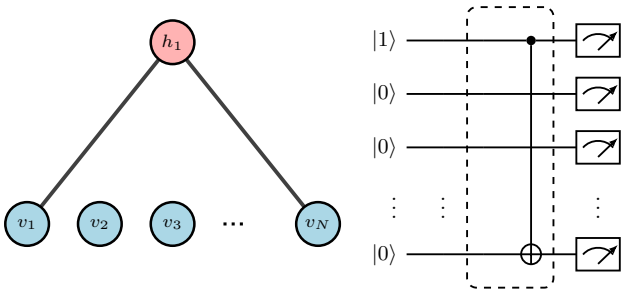
	\end{proof}
	
	%
	
	From the perspective of computational complexity, we can obtain the following theorem:
	\begin{thm}\label{thm2}
		There exist probability distributions generated by MPQCs with $O(poly(N))$ blocks, where $N$ is the number of input quantum states which cannot be simulated efficiently by classical neural networks unless the polynomial  hierarchy (PH) collapses.
	\end{thm}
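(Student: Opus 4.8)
The plan is to leverage the established hardness of classically simulating IQP circuits (Proposition~\ref{prop1}) by exhibiting every IQP circuit as a special instance of an MPQC with only $O(poly(N))$ blocks. Once IQP sits inside the MPQC family, the output distribution $p_I=\left|\braket{0^{\otimes N}|H^{\otimes N}U_ZH^{\otimes N}|0^{\otimes N}}\right|^2$ of any such circuit is simultaneously an MPQC-generated distribution, so the impossibility of reproducing $p_I$ by efficient classical means transfers directly to classical neural networks, which are one particular family of polynomial-size classical generative samplers.

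First I would construct the embedding explicitly. An IQP circuit applies $H^{\otimes N}U_ZH^{\otimes N}$ to $\ket{0}^{\otimes N}$, where $U_Z$ is a product of $O(poly(N))$ commuting diagonal gates that we may take to be single-qubit $T$ gates and two-qubit $CZ$ gates. Each Hadamard is reproduced by the parameterized string $R_X(\pi/2)R_Z(\pi/2)R_X(\pi/2)$ and each $T$ gate by $R_\phi(\pi/4)$, so both conjugating $H^{\otimes N}$ layers and the single-qubit part of $U_Z$ fit into the parameterized single-qubit slots that every MPQC block supplies on each wire. The only genuinely two-qubit resource is the $CZ$ gate, and via $CZ=(\mathbb{I}\otimes H)\,\text{CNOT}\,(\mathbb{I}\otimes H)$ each $CZ$ consumes exactly one CNOT together with two single-qubit gates that are again absorbed into the block's parameterized rotations.

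Next I would verify the MPQC resource constraints. Since the MPQC definition caps each block at $N$ CNOT gates but permits those CNOTs to connect arbitrary pairs of qubits, I would schedule the $O(poly(N))$ many $CZ$ interactions of $U_Z$ into batches of at most $N$, requiring $L=O(poly(N))$ blocks and so respecting $L\sim poly(N)$. Any wire carrying no nontrivial IQP gate in a given block simply sets its parameterized rotation to the identity (zero angle), which honors the rule that each wire hold at least one parameterized gate while leaving the computation unchanged; the identical-layout requirement across blocks is met by fixing one template of single-qubit-rotation-plus-CNOT positions and realizing each specific IQP gate through the block's parameter choices. To finish, I would argue by contradiction: if a family of classical neural networks with $O(poly(N))$ trainable parameters could efficiently generate $p_I$ of the embedded circuit to within multiplicative error $1\le c\le\sqrt{2}$, that network would constitute a weak classical simulation of the IQP family in the precise sense of Proposition~\ref{prop1}, forcing $\textit{post-BPP}=PP$ and collapsing the polynomial hierarchy to its third level; hence no such network can exist unless PH collapses.

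The main obstacle I anticipate is not the gate-level translation, which is routine, but reconciling the rigid MPQC template---identical gate layout together with the hard cap of $N$ CNOTs per block---with the possibly irregular interaction graph of $U_Z$. The care needed is to confirm that any $O(poly(N))$-sized collection of $CZ$ interactions, whatever its connectivity, admits a greedy packing into $\lceil |U_Z|/N\rceil$ identical-layout blocks without exceeding the per-block CNOT budget, and that absorbing the stray Hadamards from each $CZ$ decomposition into neighboring single-qubit slots never disturbs the commuting, diagonal-in-$Z$ structure on which the hardness result of Proposition~\ref{prop1} depends.
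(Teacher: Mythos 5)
Your high-level strategy is exactly the paper's: embed every IQP circuit into the MPQC family with $O(poly(N))$ blocks and then transfer the hardness statement of Proposition~\ref{prop1} to classical neural networks (the paper does this by proving Theorem~\ref{thm:5} and combining it with Proposition~\ref{prop1}). The gap is in the step you yourself flag as the ``main obstacle'' and then dispose of with a greedy-packing claim. That claim is false as stated. In an MPQC the CNOT positions are part of the fixed, identical layout shared by every block and are \emph{not} parameterized: a CNOT sitting at positions $(i,j)$ of the template is applied in every block and cannot be switched off by setting rotation angles to zero, nor can an entangling CNOT be cancelled by any choice of the surrounding single-qubit parameters. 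Consequently you cannot ``schedule the $CZ$ interactions into batches of at most $N$'' with each block's CNOTs placed on whatever pairs that batch happens to need---that would give different blocks different CNOT layouts, which the MPQC definition forbids; and conversely a single template containing CNOTs on every pair that ever occurs in $U_Z$ would in general both exceed the $N$-CNOT cap and inflict unwanted CNOTs on all the other blocks.

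The paper resolves this with two ideas your proposal is missing. First, it fixes one rigid template (seven rotations per wire followed by $N-1$ CNOTs, all controlled by the first qubit) and cancels unwanted CNOTs by \emph{pairing} blocks: the fan-out CNOTs commute among themselves, so two consecutive blocks whose intervening rotations are trivial multiply to the identity on the CNOT part---this is why even a plain layer of Hadamards or $T$ gates costs two blocks rather than one in the paper's construction. Second, arbitrary connectivity of the $CZ$ gates is obtained not by re-wiring blocks but by simulating SWAPs out of the fixed template ($14$ blocks per SWAP, using Proposition~\ref{prop:1} to build a CNOT from parameterized rotations and Hadamard conjugation for the reversed CNOT), then realizing $CZ=(\mathbb{I}\otimes H)\,\mathrm{CNOT}\,(\mathbb{I}\otimes H)$ with the first qubit as control ($6$ blocks), and swapping back. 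Each $CZ$ thus costs a constant number of blocks, so the total remains $O(poly(N))$ and the reduction to Proposition~\ref{prop1} goes through; without this (or an equivalent) mechanism your construction does not produce a legal MPQC, and the complexity-theoretic conclusion never gets off the ground.
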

	\begin{proof}
		This theorem can be proved by combining Theorem \ref{thm:5} below with Proposition \ref{prop1}. Theorem \ref{thm:5} shows that any IQP circuits with $N$ qubits and $O(poly(N))$ commuting gates can be transformed into MPQCs with $O(poly(N))$ blocks.  As stated in proposition \ref{prop1}, there exist probability distributions, generated by IQP, that cannot be efficiently simulated by classical circuits (including DBM or long range RBM).
	\end{proof}
	
	\begin{thm}\label{thm:5}
		MPQCs can efficiently simulate any IQP circuits with $N$ qubits and $O(poly(N))$ commuting gates, with at most $O(poly (N))$ blocks, where each block contains no more than $7N$ single qubit gates and $N-1$ CNOT gates.
	\end{thm}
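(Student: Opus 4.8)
The plan is to simulate the IQP circuit $H^{\otimes N} U_Z H^{\otimes N}$ by rewriting every elementary piece in the universal gate set already fixed in the excerpt and then packing the result into MPQC blocks that respect the per-block budgets. First I would decompose the building blocks: each Hadamard as $H = R_X(\pi/2)R_Z(\pi/2)R_X(\pi/2)$ (three single-qubit gates), each single-qubit phase gate such as $T = R_\phi(\pi/4)$ as one single-qubit gate, and each two-qubit $CZ$ appearing in $U_Z$ as $CZ = (\mathbb{I}\otimes H)\,\text{CNOT}\,(\mathbb{I}\otimes H)$, i.e. one CNOT flanked by two Hadamards on the target, costing six single-qubit gates and one CNOT. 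Because every gate in $U_Z$ is diagonal in the $Z$ basis, these gates commute and may be reordered freely, which is precisely what lets me regroup them block by block without altering the realized unitary.

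Next I would assemble the blocks. Since $U_Z$ consists of $O(poly(N))$ commuting gates, it contains at most $O(poly(N))$ many $CZ$ gates; I group these into batches of at most $N-1$, so that each batch consumes at most $N-1$ CNOTs, yielding $O(poly(N))$ batches and hence $O(poly(N))$ blocks. Within a single block carrying $N-1$ such $CZ$ gates, the Hadamard conjugations contribute $6(N-1)$ single-qubit gates, and appending one single-qubit phase gate per wire---to absorb the $T$-type terms and to guarantee that every qubit is touched by at least one parameterized gate, as the MPQC definition demands---adds at most $N$ more, for a total of $6(N-1)+N = 7N-6 \le 7N$ single-qubit gates. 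The two outer layers $H^{\otimes N}$ cost $3N$ single-qubit gates each and carry no CNOTs, so each fits comfortably inside one additional block of the same template. Tallying everything, the construction uses $O(poly(N))$ blocks, each with at most $7N$ single-qubit gates and at most $N-1$ CNOTs, and $O(N)$ gates per block gives $O(poly(N))$ total gates, so the simulation is efficient and matches the claimed bounds.

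The step I expect to be the main obstacle is reconciling the MPQC requirement that every block share an \emph{identical} gate arrangement with the need to place $CZ$ (equivalently CNOT) gates on different qubit pairs and to accommodate the CNOT-free outer Hadamard layers. I would handle this by fixing a single block template---a layer of parameterized single-qubit gates, a fixed set of at most $N-1$ CNOT slots, and a second single-qubit layer---and then realizing each desired sub-unitary purely by choice of the single-qubit parameters (setting idle gates to the identity and the active ones to $H$, $T$, or $R_Z$), while neutralizing any CNOT slot that should act trivially in a given block by repeating the identical CNOT in the adjacent block: since $\text{CNOT}$ is its own inverse, the two-qubit action there collapses to the identity. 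This keeps the arrangement literally identical across all blocks at the cost of only a constant-factor increase in the block count, which preserves the $O(poly(N))$ bound and never violates the $7N$ single-qubit-gate or $N-1$ CNOT budget per block.
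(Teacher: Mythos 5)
Your proposal correctly identifies the identical-block constraint as the crux, and it handles the easy pieces the same way the paper does: the outer Hadamard layers and the $T$ gates are realized purely by single-qubit parameter choices, and idle CNOT slots are neutralized by repeating them in an adjacent block (the CNOT layer being an involution) --- this cancellation trick is exactly the paper's device of following an ``active'' block by an all-zero-parameter block. The arithmetic $6(N-1)+N\le 7N$ is also fine. But there is a genuine gap at the step you describe as ``realizing each desired sub-unitary purely by choice of the single-qubit parameters'': a CNOT slot on a fixed pair $(a,b)$, dressed with single-qubit gates, can only ever produce a two-qubit gate supported on qubits $a$ and $b$; it can never yield a $CZ$ on a different pair $(j,k)$. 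Since your template contains only $N-1$ fixed slots while the $CZ$ gates of a general IQP circuit may range over as many as $\binom{N}{2}$ distinct pairs, some $CZ$s of the input circuit necessarily act on pairs absent from the template, and your construction provides no mechanism to implement them. Your batching argument (packing up to $N-1$ $CZ$s into one block) silently assumes every $CZ$ aligns with a template slot, which is false in general.

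This routing problem is precisely what occupies the bulk of the paper's proof. The paper fixes a star-shaped template (all $N-1$ CNOTs controlled by the first qubit) and implements a $CZ$ on an arbitrary pair $(j,k)$ in three stages: first it synthesizes a SWAP between qubit $j$ and qubit $1$ out of template CNOTs (three CNOTs, the middle one reversed via Hadamard conjugation, costing $4+6+4=14$ blocks), then it applies $CZ$ on the pair $(1,k)$ using $CZ=(\mathbb{I}\otimes H)\,\mathrm{CNOT}\,(\mathbb{I}\otimes H)$ ($6$ blocks), then it swaps back ($14$ more blocks) --- a constant number of blocks per misaligned $CZ$, which preserves the $O(poly(N))$ total. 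To repair your proof you would need to add such a SWAP-routing step, or an equivalent multi-block composition argument showing that your fixed slots together with single-qubit gates generate interactions on arbitrary pairs; without that ingredient the claim does not go through.
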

	\begin{proof}
		
		A general IQP circuit is shown in Figure \ref{fig:general_IQP}.
		\begin{figure}[!ht]
	\begin{adjustbox}{width=0.47\textwidth}
		\begin{tikzcd}[row sep={0.8cm,between origins}]
			\lstick{$\ket{0}$} &\gategroup[wires=7,steps=1,style={dashed,rounded corners, inner xsep=2pt},background]{$H$} \gate[style={fill=yellow!20}]{H}&\gategroup[wires=7,steps=7,style={dashed,rounded corners, inner xsep=2pt},background]{$U_Z$}\gate[style={fill=yellow!20}]{T} &\ctrl{1}	&\gategroup[wires=7,steps=1,style={dashed,rounded corners, fill=blue!15, inner xsep=2pt},background]{}\gate[style={fill=yellow!20}]{T} &\qw&\qw&\ctrl{2}&\qw	&\gategroup[wires=7,steps=1,style={dashed,rounded corners, inner xsep=2pt},background]{$H$} \gate[style={fill=yellow!20}]{H} &\meter{}  \\
			\lstick{$\ket{0}$} &\gate[style={fill=yellow!20}]{H} &\qw &\gate[style={fill=yellow!20}]{Z} &\ctrl{1} &\ctrl{3}	&\qw&\qw&\ctrl{5}&\gate[style={fill=yellow!20}]{H}&\meter{}  	\\
			\lstick{$\ket{0}$} &\gate[style={fill=yellow!20}]{H} &\gate[style={fill=yellow!20}]{T} &\qw &\gate[style={fill=yellow!20}]{Z}&\qw&\qw&\gate[style={fill=yellow!20}]{Z}&\qw	&\gate[style={fill=yellow!20}]{H}&\meter{}  		\\
			\lstick{\vdots}    &\vdots	&	&	&	&	&\ldots	& &	&\vdots	&\vdots  		\\
			\lstick{$\ket{0}$} &\gate[style={fill=yellow!20}]{H} &\qw &\ctrl{2}	&\ctrl{1} &\gate[style={fill=yellow!20}]{Z}&\qw	&\qw&\qw&\gate[style={fill=yellow!20}]{H}	&\meter{}  \\
			\lstick{$\ket{0}$} &\gate[style={fill=yellow!20}]{H} &\gate[style={fill=yellow!20}]{T} &\qw	&\gategroup[wires=1,steps=1,style={dashed,rounded corners, fill=red!20,inner xsep=2pt},background]{}\gate[style={fill=yellow!20}]{Z}&\qw&\qw&\qw&\qw&\gate[style={fill=yellow!20}]{H}	&\meter{}  	\\
			\lstick{$\ket{0}$} &\gate[style={fill=yellow!20}]{H}  &\qw  &\gate[style={fill=yellow!20}]{Z} &\gate[style={fill=yellow!20}]{T}&\qw&\qw&\gate[style={fill=yellow!20}]{T}&\gate[style={fill=yellow!20}]{Z} &\gate[style={fill=yellow!20}]{H}&\meter{}   	
		\end{tikzcd}
	\end{adjustbox}	
	\caption{\small{The arrangement of quantum gates in a general IQP circuit.}}
	\label{fig:general_IQP}
\end{figure}
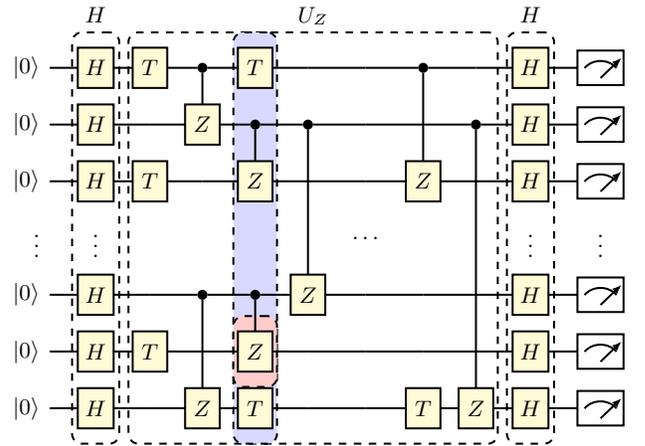 
Before proving that an IQP circuit can be efficiently simulated by MPQC, we first define the arrangement of quantum gates in each block. As shown in Fig.~\ref{fig:MPQC_IQP_block}, from left to right in each block, the seven parameterized single qubit gates are $R_X$, $R_Z$, $R_X$, $R_{\phi}$, $R_{Z}$, $R_Y$ and $R_Z$, followed by $N-1$ CNOT gates, where the controlled qubit of all of them is the first qubit. For simplicity, we will use $(\theta_1,\cdots,\theta_7)$ to represent the composition of the seven parameterized qubit gates. 
		
		{Note that ${H}=R_X(\pi/2)R_Z(\pi/2)R_X(\pi/2)$. Hence, it is not hard to see that the initial and final layers of IQP circuits, where $N$ H gates are separately applied to $N$ qubits, can be simulated by choosing parameters  $(\pi/2,\pi/2,\pi/2,0,0,0,0)$ and $(0,0,0,0,0,0,0)$ in the first and second blocks, respectively.}

	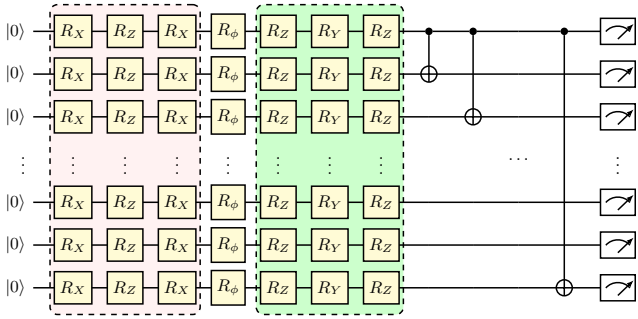
\begin{figure}[!ht]
	\begin{adjustbox}{width=0.48\textwidth}
		\begin{tikzcd}[row sep={0.8cm,between origins}]
			\lstick{$\ket{0}$} &\gategroup[wires=7,steps=3,style={dashed,rounded corners,fill=pink!20,inner xsep=2pt},background]{} \gate[style={fill=yellow!20}]{R_X} &\gate[style={fill=yellow!20}]{R_Z} &\gate[style={fill=yellow!20}]{R_X} &\gate[style={fill=yellow!20}]{R_{\phi}} &\gategroup[wires=7,steps=3,style={dashed,rounded corners,fill=green!20,inner xsep=2pt},background]{}\gate[style={fill=yellow!20}]{R_Z} &\gate[style={fill=yellow!20}]{R_Y} &\gate[style={fill=yellow!20}]{R_Z} &\ctrl{1} &\ctrl{2} &\qw &\ctrl{6}&\meter{}   \\
			\lstick{$\ket{0}$} &\gate[style={fill=yellow!20}]{R_X} &\gate[style={fill=yellow!20}]{R_Z}&\gate[style={fill=yellow!20}]{R_X}&\gate[style={fill=yellow!20}]{R_{\phi}} &\gate[style={fill=yellow!20}]{R_Z} &\gate[style={fill=yellow!20}]{R_Y} &\gate[style={fill=yellow!20}]{R_Z} &\targ{}&\qw&\qw&\qw&\meter{}  \\
			\lstick{$\ket{0}$}	&\gate[style={fill=yellow!20}]{R_X}&\gate[style={fill=yellow!20}]{R_Z}&\gate[style={fill=yellow!20}]{R_X}&\gate[style={fill=yellow!20}]{R_{\phi}} &\gate[style={fill=yellow!20}]{R_Z} &\gate[style={fill=yellow!20}]{R_Y} &\gate[style={fill=yellow!20}]{R_Z}&\qw&\targ{}&\qw&\qw&\meter{}  \\
			\lstick{\vdots}		&\vdots &\vdots &\vdots &\vdots &\vdots	&\vdots	&\vdots	& &	&\ldots &  &\vdots\\
			\lstick{$\ket{0}$}	&\gate[style={fill=yellow!20}]{R_X}&\gate[style={fill=yellow!20}]{R_Z}&\gate[style={fill=yellow!20}]{R_X}&\gate[style={fill=yellow!20}]{R_{\phi}} &\gate[style={fill=yellow!20}]{R_Z} &\gate[style={fill=yellow!20}]{R_Y}&\gate[style={fill=yellow!20}]{R_Z}&\qw&\qw&\qw&\qw&\meter{}  	\\
			\lstick{$\ket{0}$}	&\gate[style={fill=yellow!20}]{R_X}&\gate[style={fill=yellow!20}]{R_Z}&\gate[style={fill=yellow!20}]{R_X}&\gate[style={fill=yellow!20}]{R_{\phi}}	&\gate[style={fill=yellow!20}]{R_Z} &\gate[style={fill=yellow!20}]{R_Y}&\gate[style={fill=yellow!20}]{R_Z}&\qw&\qw&\qw&\qw&\meter{}  \\
			\lstick{$\ket{0}$}	&\gate[style={fill=yellow!20}]{R_X}	&\gate[style={fill=yellow!20}]{R_Z}&\gate[style={fill=yellow!20}]{R_X}&\gate[style={fill=yellow!20}]{R_{\phi}}&\gate[style={fill=yellow!20}]{R_Z} &\gate[style={fill=yellow!20}]{R_Y}&\gate[style={fill=yellow!20}]{R_Z}&\qw&\qw&\qw&\targ{}&\meter{}  
		\end{tikzcd}
	\end{adjustbox}	
	\caption{\small{The arrangement of quantum gates in each block.}}
	\label{fig:MPQC_IQP_block}
\end{figure}

		Next we demonstrate that the internal diagonal matrix $U_Z$ can also be simulated using the predefined block structure. Without loss of generality, we assume that the $i$-th circuit depth in  $U_Z$ contains $M_T$ $T$ gates and $M_{CZ}$ $CZ$ gates, with $M_T+ M_{CZ}\leq N$. For example, the colored region in Figure \ref{fig:general_IQP} indicates that $M_T=2$ and $M_{CZ}=2$. 
		
		Similar to the simulation of H gates, two blocks are sufficient to simulate $M_T$ ($M_T\leq N$) $T$ gates at the same circuit depth. Since ${T}= R_{\phi}(\pi/4)$, then the $T$ gates can be simulated by application of $(0,0,0,\pi/4,0,0,0)$ followed by $(0,0,0,0,0,0,0)$.

		{We next prove how to use predefined blocks to efficiently simulate a $CZ$ gate. Suppose that a $CZ$ gate is  applied to $k$-th qubit, which is controlled by the $j$-th qubit, with $j\leq k$. Since the explicit connection between the two qubits may not exist in the predefined block, we first use $14$ blocks to simulate a SWAP gate that switches the $j$-th controlled qubit to the first qubit. We then use six blocks to simulate the $CZ$ gate that is applied to the $k$-th qubit and controlled by the first qubit. Lastly, $14$ blocks is employed to simulate another SWAP gate to switch the first control qubit back to its original position. For example, as shown in the left panel of Figure \ref{fig:MPQC_IQP_CZ}, the $CZ$ gate as indicated by the {blue} box can be represented by an equivalent circuit {controlled by the first qubit.}} 
		
		\begin{figure}[!ht]
	\begin{adjustbox}{width=0.23\textwidth}
		\begin{tikzcd}[row sep={0.8cm,between origins}]
			&\gategroup[wires=7,steps=1,style={dashed,rounded corners,fill=blue!10,inner xsep=2pt},background]{}\qw&\qw 	\\
			&\qw&\qw 	\\
			&\qw &\qw 	\\
			&\vdots  &\\
			&\ctrl{1} &\qw \\
			&\gategroup[wires=1,steps=1,style={dashed,rounded corners,fill=pink!10,inner xsep=2pt},background]{}\gate[style={fill=yellow!20}]{Z} &\qw \\
			&\qw 	&\qw 	
		\end{tikzcd}\ \ =
		\begin{tikzcd}[row sep={0.8cm,between origins}]
			&\gategroup[wires=7,steps=4,style={dashed,rounded corners,fill=blue!10,inner xsep=2pt},background]{}\qw &\swap{4} &\ctrl{5} &\swap{4} &\qw\\
			&\qw &\qw &\qw	&\qw &\qw	  \\
			&\qw &\qw &\qw	 &\qw&\qw	 \\
			&\vdots & & & &  \\
			&\qw &\targX{} &\qw &\targX{} &\qw\\
			&\qw &\qw	 &\gate[style={fill=yellow!20}]{Z} &\qw&\qw\\
			&\qw &\qw &\qw&\qw&\qw
		\end{tikzcd}
	\end{adjustbox}	
	\begin{adjustbox}{width=0.23\textwidth}
		\begin{tikzcd}[row sep={0.8cm,between origins}]
			&\swap{1}&\qw \\
			&\targX{} &\qw 	
		\end{tikzcd}\ \ =
		\begin{tikzcd}[row sep={0.88cm,between origins}]
			&\ctrl{1}&\targ{} &\ctrl{1}\\
			&\targ{} &\ctrl{-1}&\targ{}
		\end{tikzcd}
	\end{adjustbox}	
	\captionsetup{justification =raggedright}
	\caption{The left panel illustrates an equivalent circuit described by SWAP operation. The right panel shows the implementation of SWAP by two CNOT gates and one reversed CNOT gate.}
	\label{fig:MPQC_IQP_CZ}
\end{figure}
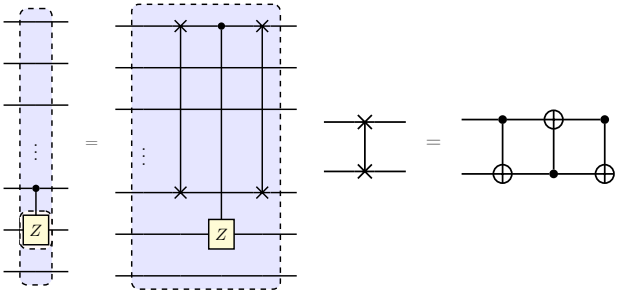

		{The central problem in simulating the SWAP operation is how to simulate a single CNOT gate applied arbitrarily to two qubits, since a SWAP gate is composed of three CNOT gates, as illustrated in the left panel of Figure \ref{fig:MPQC_IQP_CZ}. The first and third CNOT gate of the SWAP operation can be simulated by four blocks. Recall that, in Proposition \ref{prop:1}, a single CNOT gate (namely CX gate) can be decomposed into {{$X=A_1B_1C_1$, where $A_1 = R_Z(0)R_Y(\pi/2)$, $B_1=R_Y(-\pi/2)R_Z(\pi)$, and $C_1=R_Z(\pi)$.}} 
{We set all parameters of the first block as $0$ except the parameters corresponding to the $k$-th qubit, which are set as $(0,0,0,0,0,\pi/2,0)$ to simulate $A_1$. Next, we set all parameters of the second block as $0$ except the parameters corresponding to the $k$-th qubit, which are set as $(0,0,0,0,0,-\pi/2,\pi)$ to simulate $B_1$. Then, we set all parameters of the third block as $0$ except the parameters corresponding to the $k$-th qubit, which are set as $(0,0,0,0,0,0,\pi)$ to simulate $C_1$. Lastly, all parameters of the fourth block are set as $0$. }
}  
		
		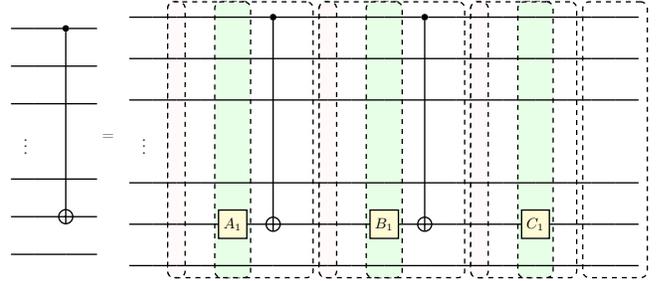
\begin{figure}[!ht]
			\begin{adjustbox}{width=0.48\textwidth}
				\begin{tikzcd}[row sep={0.8cm,between origins}]
					&\qw  &\ctrl{5} &\qw\\
					&\qw   &\qw		&\qw\\
					&\qw	&\qw	&\qw\\
					&\lstick{\vdots}& & \\
					&\qw	&\qw	&\qw\\
					&\qw	&\targ{}	&\qw\\
					&\qw	&\qw	&\qw
				\end{tikzcd}=
				\begin{tikzcd}[row sep={0.88cm,between origins}]
					&\qw&\gategroup[wires=7,steps=5,style={dashed,rounded corners,inner xsep=2pt},background]{}\gategroup[wires=7,steps=1,style={dashed,rounded corners,fill=pink!10,inner xsep=2pt},background]{}\qw &\qw &\gategroup[wires=7,steps=1,style={dashed,rounded corners,fill=green!10,inner xsep=2pt},background]{}\qw  &\ctrl{5} &\qw&\gategroup[wires=7,steps=5,style={dashed,rounded corners,inner xsep=2pt},background]{}\gategroup[wires=7,steps=1,style={dashed,rounded corners,fill=pink!10,inner xsep=2pt},background]{}\qw&\qw &\gategroup[wires=7,steps=1,style={dashed,rounded corners,fill=green!10,inner xsep=2pt},background]{}\qw&\ctrl{5} &\qw &\gategroup[wires=7,steps=4,style={dashed,rounded corners,inner xsep=2pt},background]{}\gategroup[wires=7,steps=1,style={dashed,rounded corners,fill=pink!10,inner xsep=2pt},background]{}\qw&\qw&\gategroup[wires=7,steps=1,style={dashed,rounded corners,fill=green!10,inner xsep=2pt},background]{}\qw &\qw &\gategroup[wires=7,steps=3,style={dashed,rounded corners,inner xsep=2pt},background]{}\qw &\qw &\qw\\
					&\qw&\qw &\qw &\qw &\qw  &\qw&\qw&\qw&\qw&\qw&\qw&\qw&\qw&\qw &\qw &\qw&\qw &\qw\\
					&\qw&\qw&\qw &\qw	 &\qw &\qw&\qw&\qw&\qw&\qw&\qw&\qw&\qw&\qw&\qw &\qw&\qw &\qw\\
					&\lstick{\vdots}& & &  & & & & & & & & &  & & & & \\ 
					&\qw&\qw&\qw &\qw &\qw	&\qw &\qw &\qw &\qw &\qw &\qw&\qw&\qw&\qw&\qw&\qw&\qw &\qw\\
					&\qw&\qw&\qw &\gate[style={fill=yellow!20}]{A_1}&\targ{}&\qw &\qw&\qw&\gate[style={fill=yellow!20}]{B_1}&\targ{}&\qw&\qw&\qw&\gate[style={fill=yellow!20}]{C_1}&\qw&\qw &\qw &\qw\\
					&\qw&\qw&\qw&\qw&\qw &\qw &\qw&\qw&\qw&\qw&\qw&\qw&\qw&\qw&\qw&\qw&\qw&\qw
				\end{tikzcd}
			\end{adjustbox}	
			\caption{\small{Simulating a single CNOT gate by using $4$ blocks.}}
			\label{fig:MPQC_IQPCNOT}
		\end{figure}

		{Six blocks are required to simulate the second reversed CNOT gate (R-CNOT gate) in the SWAP operation. Since $\text{R-CNOT}= (H\otimes H) \text{CNOT} (H\otimes H)$, we use four blocks to simulate the $(H\otimes H) \text{CNOT}$ and then use extra two blocks to simulate the last two Hadamard gates. For the first four blocks, the parameters of the first three parameterized gates that are applied to the first and $i$-th qubits are set as $\pi/2$, $\pi/2$, $\pi/2$, with the aim of simulating two H gates. The remaining parameters of the first four blocks follow with the same setting as simulating the CNOT gate as defined above. The last two blocks follow a similar setting as simulating the Hadamard layer, where the first three parameterized gates that are applied to the first and $i$-th qubits simulate two H gates and the remaining parameters are set as zero. To conclude, a SWAP gate can be composed by a total of $14$ blocks. }

		{Finally, because the $CZ$ gate can be reformulated as $CZ=(\mathbb{I}\otimes H)\text{CNOT}(\mathbb{I}\otimes H)$, it can also be simulated by using six blocks.} 
		
		In summary, since H gates, $T$ gates, and $CZ$ gates can be efficiently simulated by using a constant number of blocks, $O(N)$ blocks are sufficient to simulate an IQP circuit with $O(poly(N))$ $T$ and $CZ$ gates.
	\end{proof}

	\section{Bayesian Quantum Circuit}

In Bayesian inference, additional information about a prior probability distribution $p(\bm{\lambda})$ which represents our beliefs about the parameters of the learning algorithm is given, and the posterior probability distribution $p(\bm{\lambda}|\bm{x})$ can be obtained by Bayes' rule 
\begin{equation}\notag
p(\bm{\lambda}|\bm{x}) = {p(\bm{\lambda})p(\bm{x}|\bm{\lambda})}/{\int_{\bm{\lambda}}p(\bm{\lambda})p(\bm{x}|\bm{\lambda})d\bm{\lambda}},
\end{equation} 	
where $p(\bm{x}|\bm{\lambda})$ is known as the likelihood function. It has been shown that the performance of many learning tasks can be dramatically improved if Bayesian models are employed \cite{cheeseman1988bayesian,kingma2014semi,fei2007learning, kingma2013auto,ghahramani2015probabilistic}.  

Considering the significance of the Bayesian approach in classical machine learning, we devise a Bayesian quantum circuit (BQC) that enables PQCs to accomplish quantum machine learning tasks with Bayesian advantages. We remark that our BQC is the first quantum method for a Bayesian generative model based on PQC. The proposed BQC is capable of explicitly and efficiently generating prior, likelihood, and posterior distributions. Furthermore, we demonstrate that BQC has stronger expressive power than MPQCs studied in previous section.

\subsection{Layouts and Optimization of BQC}\label{subsec:IV2}
Before elaborating BQC, we first define the ancillary driven MPQCs (AD-MPQCs). AD-MPQCs can be divided into two parts, of which the first part aims to generate the targeted distribution and the second part aims to conduct post-selection. In contrast to MPQC, in which all blocks are directly applied to the data qubits, some blocks in AD-MPQC are conditionally applied to the data qubits for specific ancillary quantum states. A general layout of AD-MPQCs is illustrated in Figure \ref{fig:AD-MPQCs}, in which the common shared blocks are highlighted in green and $\ket{\bm{\lambda}}$ represents all possible combinations of $M$ ancillary qubits with $\ket{\bm{\lambda}}=\{\ket{0},\ket{1}\}^{\otimes M}$. 
\begin{figure}[!ht]
	\centering
	\begin{adjustbox}{width=0.42\textwidth,}
		\begin{tikzcd}[row sep={0.8cm,between origins}]
			\lstick{$\ket{0}$}  &\gate[style={fill=pink!20},wires=5]{U_{\bm{\lambda}}^1} &\gate[style={fill=green!20},wires=5]{U^2} &\qw &\gate[style={fill=pink!20},wires=5]{U_{\bm{\lambda}}^{L-1}} &\gate[style={fill=green!20},wires=5]{U^L} &\meter{}\\
			\lstick{\vdots}     &  & &\ldots &  & &\vdots\\
			\lstick{$\ket{0}$}  & & &\qw  &  & &\meter{}\\
			\lstick{$\ket{0}$}   &\qw  &\qw  &\qw  &\qw  &\qw &\meter{}\\
			\lstick{$\ket{0}$}  &\qw   &\qw  &\qw  &\qw   &\qw   &\meter{} \\
			\lstick{$\ket{\bm{\lambda}}$}  &\ctrl{}\vqw{-1} &\qw  &\qw &\ctrl{}\vqw{-1} &\qw &\meter{}                     	
		\end{tikzcd}
	\end{adjustbox}
	\caption{\small{A general framework of AD-MPQC. The arrangement of quantum gates in each block is identical.}}
	\label{fig:AD-MPQCs}
\end{figure}
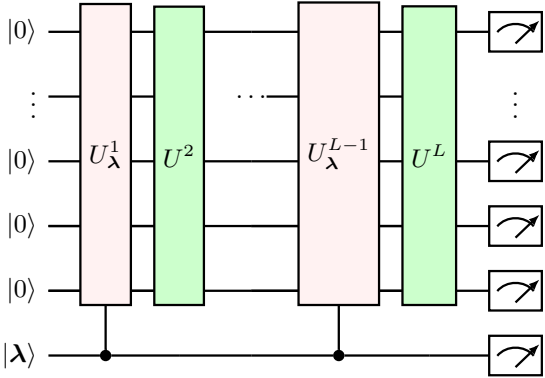

The BQC, in Figure \ref{fig:2}, is a special case of AD-MPQCs in which  the commonly shared blocks (green blocks in Fig.~\ref{fig:AD-MPQCs}) do not exist. In BQC, after applying $K$ blocks $\{U(\bm{\gamma^i})\}_{i=1}^K$ to $M$ ancillary qubits, {the generated state is $\ket{\Psi_A} =\prod_{i=1}^K U(\bm{\gamma^i})\ket{0}^{\otimes M}$.} Measuring the state $\ket{\Psi_A}$ by computational basis, the prior distribution $q(\bm{\lambda}) = |\braket{\bm{\lambda}|\Psi_A}|^2$ is generated. Similarly, after conditionally applying $L$ blocks $\{U(\bm{\theta}_{\lambda_i}^i)\}_{i=1}^L$ to $N$ data qubits iff the ancillary state is $\ket{\lambda_i}$, $\forall\lambda_i\in\bm{\lambda}$, and measuring by computational basis $\ket{\bm{x}}$, the likelihood distribution {$q(\bm{x}|\lambda_i)=|\braket{\bm{x},\lambda_i|\Psi_{\bm{x},\bm{\lambda}}}|^2$} is generated, where  $\ket{\Psi_{\bm{x},\bm{\lambda}}}$ is the quantum state generated by data qubits and ancillary qubits after applying a total of $K+|\bm{\lambda}|L$ blocks.

\begin{figure}[!ht]
	\centering
	\begin{tikzpicture}
	\tikzstyle{every node}=[font=\large]
	\draw [->] [line width=3](-1.1,-2.3) -- (-1.1,-2.9);
	\draw [->] [line width=3](0.5,-2.3) -- (0.5,-2.9);
	\draw [->] [line width=3](1.9,-2.3) -- (1.9,-2.9);
	\draw [->] [line width=3](4.0,-2.3) -- (4.0,-2.9);
	\draw [ultra thick,fill=orange!20] (-2.5,-2.3) rectangle (5.0,-1.6);
	\node at (1.0,-2.0) [thick, font=\fontsize{10}{10}\selectfont, thick]  (first) {Classical Optimizer: Update ($\bm{\theta}/\bm{\gamma}$)};
	\end{tikzpicture}
		\begin{adjustbox}{width=0.48\textwidth}
	\begin{tikzcd}[row sep={0.8cm,between origins}]
		\lstick{$\ket{0}$}   &\qw  &\gate[style={fill=orange!20},wires=4]{U(\bm{\theta}_{\lambda_1}^1)} &\gate[style={fill=orange!20},wires=4]{U(\bm{\theta}_{\lambda_1}^2)} &\qw  &\gate[style={fill=orange!20},wires=4]{U(\bm{\theta}_{\lambda_N}^L)} &\meter{}\\
		\lstick{$\ket{0}$}    &\qw  &\qw &\qw &\qw &\qw &\meter{}\\
		\lstick{\vdots}     &\qw &\qw &\qw &\ldots &\qw \qwbundle &\vdots\\
		\lstick{$\ket{0}$} &\qw	 &\qw &\qw &\qw &\qw &\meter{}\\
		\lstick{$\ket{\bm{\lambda}}$}  &\gate[style={fill=green!20},wires=1]{\{U(\bm{\gamma}^i)\}_{i=1}^K} &\ctrl{}\vqw{-1} &\ctrl{}\vqw{-1} &\qw &\ctrl{}\vqw{-1} &\meter{}  	
	\end{tikzcd}
\end{adjustbox}
	\caption{The general scheme of the proposed BQC.}
	\label{fig:2}
\end{figure}
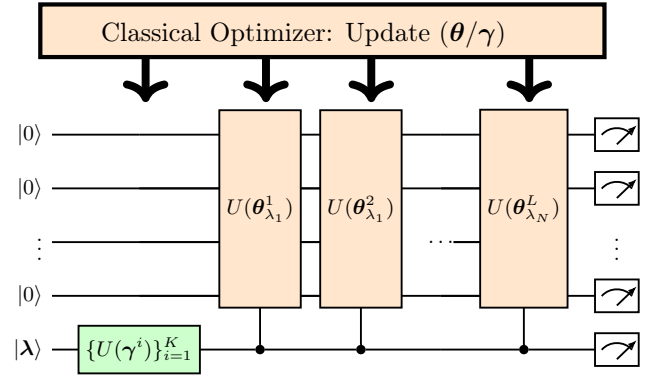

In BQC, the parameterized gates in $U(\bm{\theta}_{\lambda_i}^i)$ are controlled rotational qubits gates, e.g, controlled phase gate $CR_{\phi}(\phi)$, controlled rotation gate along $x$-axis $CR_X(\gamma)$, controlled rotation gate along $y$-axis $CR_Y(\alpha)$, controlled rotation gate along $z$-axis $CR_Z(\theta)$,  which are controlled by the ancillary quantum state $\ket{\bm{\lambda}}$. To reduce the gate complexity, we introduce a flag qubit that is conditionally activated for the specified ancillary state, which enables each parameterized controlled-rotational gate to have only one control qubit. As a result of this extra controlled qubit, the CNOT gates used in MPQCs are replaced by $N$ Toffoli gates. Each Toffoli gate can be efficiently implemented by $10$ single qubit gates and $6$ CNOT gates.  We give an intuitive example of how to apply the block $U(\bm{\theta}_{\lambda_k}^1)$ to the data qubits iff the ancillary state is $\bm{\lambda}_k=\ket{10}$ in Figure \ref{fig:EgBQC}. The green region represents encoding the state $\ket{\Psi_A}$ into ancillary qubits.  The two pink regions represent how to conditionally activate and uncompute the flag qubit for the specific ancillary state $\ket{01}$. The black dotted box illustrates how the block $U(\bm{\theta}_{\lambda_k}^1)$ is conditionally applied to the data register for the specified ancillary state $\ket{\lambda_k} = \ket{01}$.
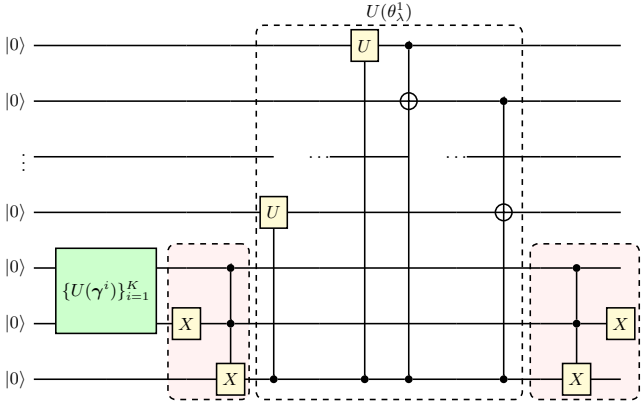
\begin{figure}[!ht]
	\centering
\begin{adjustbox}{width=0.48\textwidth}
	\begin{tikzcd}[row sep={1cm,between origins}]
		\lstick{$\ket{0}$}  &\qw   &\qw &\qw &\gategroup[wires=7,steps=6,style={dashed,rounded corners,inner xsep=2pt},background]{$U(\theta^1_{\lambda}$)}\qw &\qw  &\gate[style={fill=yellow!20}]{U} &\ctrl{}\vqw{1} &\qw &\qw &\qw&\qw&\qw\\
		\lstick{$\ket{0}$}   &\qw  &\qw  &\qw   &\qw &\qw  &\qw &\targ{}  &\qw &\ctrl{}\vqw{2} &\qw &\qw &\qw\\
		\lstick{\vdots}   &\qw  &\qw  &\qw &\qw &\ldots &\qw &\qw &\ldots &\qw &\qw &\qw&\qw\\
		\lstick{$\ket{0}$}  &\qw  &\qw  &\qw &\gate[style={fill=yellow!20}]{U} &\qw  &\qw &\qw &\qw &\targ{} &\qw &\qw&\qw\\
		\lstick{$\ket{0}$}   &\gate[style={fill=green!20},wires=2]{\{U({\bm{\gamma}}^i)\}_{i=1}^K}  &\gategroup[wires=3,steps=2,style={dashed,rounded corners, fill=pink!20, inner xsep=2pt},background]{}\qw &\ctrl{}\vqw{1} &\qw &\qw &\qw  &\qw &\qw &\qw  &\gategroup[wires=3,steps=3,style={dashed,rounded corners, fill=pink!20, inner xsep=2pt},background]{}\qw &\ctrl{1} &\qw\\ 
		\lstick{$\ket{0}$}  &  &\gate[style={fill=yellow!20}]{X}   &\ctrl{}\vqw{1} &\qw &\qw &\qw  &\qw &\qw &\qw &\qw &\ctrl{1} &\gate[style={fill=yellow!20}]{X} \\  
		\lstick{$\ket{0}$}  &\qw &\qw &\gate[style={fill=yellow!20}]{X} &\ctrl{}\vqw{-3} &\qw  &\ctrl{}\vqw{-6} &\ctrl{}\vqw{-5} &\qw &\ctrl{}\vqw{-3} &\qw &\gate[style={fill=yellow!20}]{X} &\qw       	
	\end{tikzcd}
\end{adjustbox}	
	\captionsetup{justification =raggedright}
	\caption{\small{An example of conditionally applying $U(\bm{\theta}^1_{\lambda_k})$ onto data qubits iff $\ket{\lambda_k}=\ket{01}$.}}
	\label{fig:EgBQC}
\end{figure}

In the training process, we employ MMD defined in Eqn. (\ref{eqn:9}) as the loss function. By measuring the data register and the ancillary register, the joint distribution $q(x^i,\bm{\lambda})$ is obtained by  $q(x^i,\bm{\lambda}) = \sum_{\lambda_k\in\bm{\lambda}}\left|\braket{x^i,\lambda_k|\Phi_{\bm{x},\bm{\lambda}}}\right|^2$, where $\ket{\Phi_{\bm{x},\bm{\lambda}}}$ refers to the entanglement quantum states generated by BQC.

\subsection{Expressive Power of BQC and  AD-MPQCs}\label{subsub1}
We first prove that BQC can be formulated by string bond states (SBS) and discuss  the expressive power of BQC and AD-MPQC. By exploiting the connection between BQC and SBS, we prove that if the layout of the quantum gates in each block of AD-MPQCs is allowed to be varied, the AD-MPQC can be efficiently formulated by general tensor networks (GTNs) \cite{glasser2018supervised}.

The central idea in formulating BQC by SBS is to treat all blocks controlled by the same ancillary state as a string operator, as defined in Eqn. (\ref{eqn:SBS}). Given $N$ data qubits and $M$ ancillary qubits, the maximum number of string operators is $|\bm{\lambda}|=2^M$ and the generated quantum state is
\begin{equation}
\ket{\Psi} = \sum_{i=1}^{2^M}\alpha_i\prod_{j=1}^L U({\bm{\theta}^j_{\lambda_i}})\ket{0}^{\otimes N}\ket{\lambda_i},
\end{equation}
where $\alpha_i$ stands for the probability amplitude of state $\ket{\lambda_i}$ with $\sum_i|\alpha_i|^2=1$. Since $\braket{x_i|x_j}=\delta_{ij}$, the generated states corresponding to different ancillary quantum states are independent. Analogous to the string operator $A_{a_i}^{j_i,s}$ defined in Eqn. (\ref{eqn:SBS}) that is conditionally controlled by $s$, this mutually independent property guarantees that block $U({\bm{\theta}^j_{\lambda_i}})$ is conditionally operated with the data register iff the ancillary state is $\ket{x_i}$. 

When there is only one ancillary quantum state $|\bm{\lambda}|=1$, the number of string operators is one and BQC is equivalent to MPQC. This implies that the expressive power of BQC cannot be worse than that of MPQCs. Additionally, since BQC is a special case of AD-MPQCs, the expressive power of AD-MPQC cannot be worse than that of BQC. Therefore, from the perspective of the entanglement entropy,
the expressive power of BQC and AD-MPQCs cannot be worse than that of MPQCs. Since the post-IQP can be efficiently formulated by both AD-MPQCs and BQC, a better expressive power of BQC is obtained compared to MPQCs from the perspective of computational complexity.

The main difference between general tensor networks (GTNs) and regular tensor networks is that GTN allows us to reuse information from a tensor to another part of the network, as also called copy operation \cite{glasser2018supervised}. GTN effectively combines different types of regular tensor networks into one network, which exponentially reduces the number of parameters for describing some functions compared to regular tensor networks. Figure \ref{fig:GTN} (a) gives an example of GTN, which is composed of tree tensor networks (denoted by blue dots) and SBS (denoted by orange and green dots). Two blue arrows indicate the copy operations. Since the essence of the copy operation is independence, i.e., the orange and green dots are independent of each other, AD-MPQCs can
efficiently represent such an independent relation through employing the ancillary register.
As shown in Figure \ref{fig:GTN} (b), if the layout of quantum gates in each block is allowed to be varied, a  quantum circuit corresponding to the GTN illustrated in the Figure \ref{fig:GTN} (a) is constructed.
Although AD-MPQCs can be formulated by GTN, whether there exists some quantum states can be efficiently simulated by AD-MPQCs that are hard for BQC is an open question.
\begin{figure}[!ht]
	\centering
	\raisebox{-.45\height}{				\begin{adjustbox}{width=0.10\textwidth}
			\begin{tikzpicture}
			\Vertex{A}	
			\Vertex[x=1]{B}
			\Vertex[x=2]{C}
			\Vertex[x=3]{D}
			\Vertex[x=0.5, y=1.5]{E}
			\Vertex[x=2.5, y=1.5]{F}
			\Edge(A)(E)
			\Edge(B)(E)
			\Edge(C)(F)
			\Edge(D)(F)
			\Vertex[x=0.5, y=2.8,color=orange!30]{G}
			\Vertex[x=2.5, y=2.8,color=orange!30]{H}
			\Edge[color=black!20](G)(H)
			\Edge[Direct, color=orange!30](E)(G)
			\Edge[Direct, color=orange!30](F)(H)
			\Vertex[x=0.5, y=3.7,color=green!20]{I}
			\Vertex[x=2.5, y=3.7,color=green!20]{J}
			\Edge[color=black!20](I)(J)
			\Edge[Direct, bend=45, color=green!20](E)(I)
			\Edge[Direct, bend=-45, color=green!20](F)(J)
			\end{tikzpicture}
		\end{adjustbox}
	}
	\begin{adjustbox}{width=0.36\textwidth}
		\begin{tikzcd}[row sep={0.8cm,between origins}]
			\lstick{$\ket{0}$}  &\gate[style={fill=yellow!20}]{U} &\ctrl{1} &\gate[style={fill=yellow!20}]{U} &\ctrl{1} &\qw  &\qw &\qw &\qw &\qw &\qw &\meter{}\\
			\lstick{$\ket{0}$}  &\gate[style={fill=yellow!20}]{U} &\targ{} &\gate[style={fill=yellow!20}]{U}  &\targ{}  &\gate[style={fill=yellow!20}]{U} &\ctrl{2}  &\gate[style={fill=yellow!20}]{U} &\ctrl{2} &\ctrl{}\vqw{3} &\ctrl{}\vqw{3} &\meter{}\\
			\lstick{$\ket{0}$}  &\gate[style={fill=yellow!20}]{U} &\ctrl{1} &\gate[style={fill=yellow!20}]{U} &\ctrl{1} &\qw  &\qw  &\qw &\qw &\qw&\qw &\meter{}\\
			\lstick{$\ket{0}$}  &\gate[style={fill=yellow!20}]{U} &\targ{} &\gate[style={fill=yellow!20}]{U}  &\targ{} &\gate[style={fill=yellow!20}]{U} &\targ{} &\gate[style={fill=yellow!20}]{U} &\targ{} &\ctrl{}\vqw{1} &\ctrl{}\vqw{1} &\meter{}\\
			\lstick{$\ket{0}$}  &\qw &\qw  &\qw &\qw &\qw &\qw &\qw  &\qw &\gate[style={fill=pink!20},wires=2]{U_x^1} &\gate[style={fill=pink!20},wires=2]{U_x^2} &\meter{} \\  
			\lstick{$\ket{0}$}  &\qw &\qw  &\qw &\qw &\qw &\qw &\qw   &\qw & & &\meter{}              	
		\end{tikzcd}
	\end{adjustbox}
	\captionsetup{justification =raggedright}
	\caption{The left panel illustrates an example of a general tensor network, composed of tree tensor networks and string bond states. The right panel illustrates the corresponding quantum circuit.}
	\label{fig:GTN}
\end{figure}
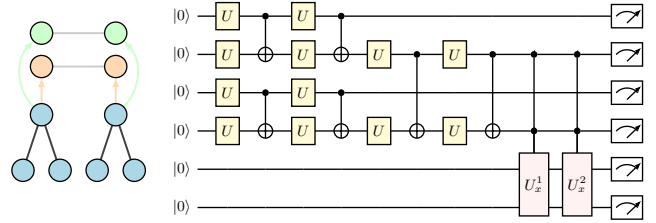

\section{Numerical Experiments}	
\subsection{Generating Bar-and-Stripe Dataset}
To demonstrate the advancements of the proposed BQC, we firstly use BQC to accomplish generative tasks, e.g., generating  $2\times 2$ and $3\times 3$ bars and stripes (BAS) dataset. BAS dataset is composed of vertical bars and horizontal stripes, and some examples of BAS are shown in Figure \ref{fig:3} (a). For $n\times m$ pixels, the number of images that belongs to BAS is $N_{BAS}= 2^n+2^m-2$. The target distribution of such a generative task is denoted as $p(\bm{x})$, where $p(\bm{x}_i) = 1/N_{BAS}$ iff $\bm{x}_i$ is a valid BAS image. The generated probability distribution of BQC $q(\bm{x})=\sum_{\lambda_i\in\bm{\lambda}}q(\bm{x},\lambda_i)$ aims to approximate the targeted distribution $p(\bm{x})$, where the $\bm{x}$ refers to the generated the images, $|\bm{\lambda}|$ refers to the number of valid BAS patterns, and $q(\bm{x},\lambda_i)$ refers to the probability distribution of the generated images given specific $\lambda_i$. 

We compare the generative performance of BQC with two existing MPQCs in the literature, i.e., data driven quantum circuit learning (DDQCL) \cite{benedetti2018generative} and quantum circuit born machine (QCBM) \cite{liu2018differentiable}. Two major differences between DDQCL and QCBM are the layout of CNOT gates in each block and the optimization methods. In DDQCL, the topology of CNOT gates is based on the topology of qantum devices, such as chain, star and all connections. A gradient-free optimization approach is employed, i.e., the swarm optimization algorithm. In QCBM, the topology of CNOT gates is determined by the Chow-Liu tree algorithm, which is inspired by the graphical models to efficiently extract information from training data among different nodes. The unbiased gradient-based optimization approach is employed in the training process. In accordance with the conventions in previous study, in BQC, all BAS patterns are encoded in the qubits, where each data qubit stands for a pixel of the BAS image.

For the task of generating BAS images, the prior is a uniform distribution, since all BAS images are expected to generated with the same probability. Through applying $K$ blocks to the ancillary register with $M$ qubits, the generated quantum state $\ket{\bm{\lambda}}$ is formulated as $\ket{\bm{\lambda}} = \prod_{i=1}^KU(\bm{\gamma^i})\ket{0}^{\otimes M}$, where $q(\bm{\lambda}=\lambda_i)=|\braket{\lambda_i|\bm{\lambda}}|^2=1/{N_{BAS}}$, $|\bm{\lambda}|=N_{BAS}$ and $M=\lceil\log{N_{BAS}}\rceil$.
Since the BAS patterns are encoded into the qubits, the total number of data qubits is $N=n\times m$.  For the specified ancillary state $\bm{\lambda}=\lambda_i$,  $L$ blocks $\{U(\bm{\theta}^i_{\lambda_i})\}_{i=1}^L$ are conditionally applied to the $N$ data qubits, where total $|\bm{\lambda}|L$ blocks are required in BQC. Since there exists a one-to-one mapping that each $\lambda_i$ aims to represent a specific BAS image, we have  $q(\bm{x}=x_i) = q(\bm{x}=x_i,\bm{\lambda}=\lambda_i)$, where $q(\bm{x}=x_i,\bm{\lambda}=\lambda_j)=0$ for $i\neq j$. We remark that it is a special case in generative tasks.

We first train BQC to generate BAS images with $2\times 2$ pixels, where $N_{BAS}=6$ valid images are expected to be generated uniformly after learning. In the experiment, the numbers of data qubits and ancillary qubits are set to be $N=4$ and $M=3$, respectively. Since the prior distribution is known, the parameters of $K=2$ blocks $\{U_j(\bm{\gamma})\}_{j=1}^2$ are fixed, where the generated state is $\prod_{j=1}^2 U_j(\bm{\gamma})\ket{0}^{\otimes M}=1/\sqrt{6}\sum_i\ket{\lambda_i}$ with $\lambda_i\in\bm{\lambda}$ and $|\bm{\lambda}|=6$. In the numerical simulation, we use the function provided by QVM to directly generate the prior distribution $p(\bm{\lambda})$.  In the learning process, we set $L=2$ blocks $\{U_j(\bm{\theta}^i_{\lambda_i})\}_{i=1}^2$ for the specified ancillary quantum state, where each block only contains $4$ $CR_Y(\alpha)$ gates (interacting with $4$ data qubits separately) and the number of Toffoli gates is also $4$ that connect two qubits in sequence, as illustrated in Figure  \ref{fig:EgBQC}. Total $48$ trainable parameters are updated in the learning process.

When BQC is applied to generate  $3\times 3$ BAS images, with $N_{BAS}=14$, the numbers of data qubits $N$ and ancillary qubits $M$ are set as $9$ and $4$, respectively. A uniformly ancillary state is first generated by using the function provided by QVM. Analogous to the $2\times 2$ BAS case, we set $L=2$ and each block contains $9$ $CR_Y(\alpha)$ gates  (interacting with $9$ data qubits separately) and $9$ Toffoli gates. Therefore, total $112$ parameters are updated in the learning process.

Since QVM allows us to read the quantum states directly, the distribution of BAS images can be accessed accurately as measuring infinite times. The experimental results are illustrated in Figure \ref{fig:3}. Here we define the accuracy as $N_{BAS}/N$, where $N$ represents the total number of generated images and $N_{BAS}$ represents  the number of generated images that has  BAS patterns. As shown in Table \ref{tab:1},  BQC outperforms state-of-the-art PQCs, where the accuracy to generate BAS $2\times2$ and $3\times 3$ images is $99.96\%$ and $98.65\%$, respectively. 
\begin{table}[ht!]
	\captionsetup{justification =raggedright}
	\caption{\small{Accuracies for generative 2$\times$2 and $3\times 3$ BAS datasets.}} 
	\centering 
	\begin{tabular}{c c c c c} 
		\hline\hline 
		& Model & DDQCL & QCBM & BQC \\ [0.5ex] 
		\hline 
		$2\times 2$& Accuracy (\%) & 83.82 & 98.46 & 99.96 \\ 
		$3\times 3$& Accuracy (\%) & -- & 65.36 & 98.65\\
		\hline 
	\end{tabular}
	\label{tab:1} 
\end{table}
\begin{figure}[ht!]
	\centering
	\includegraphics[width=0.48 \textwidth]{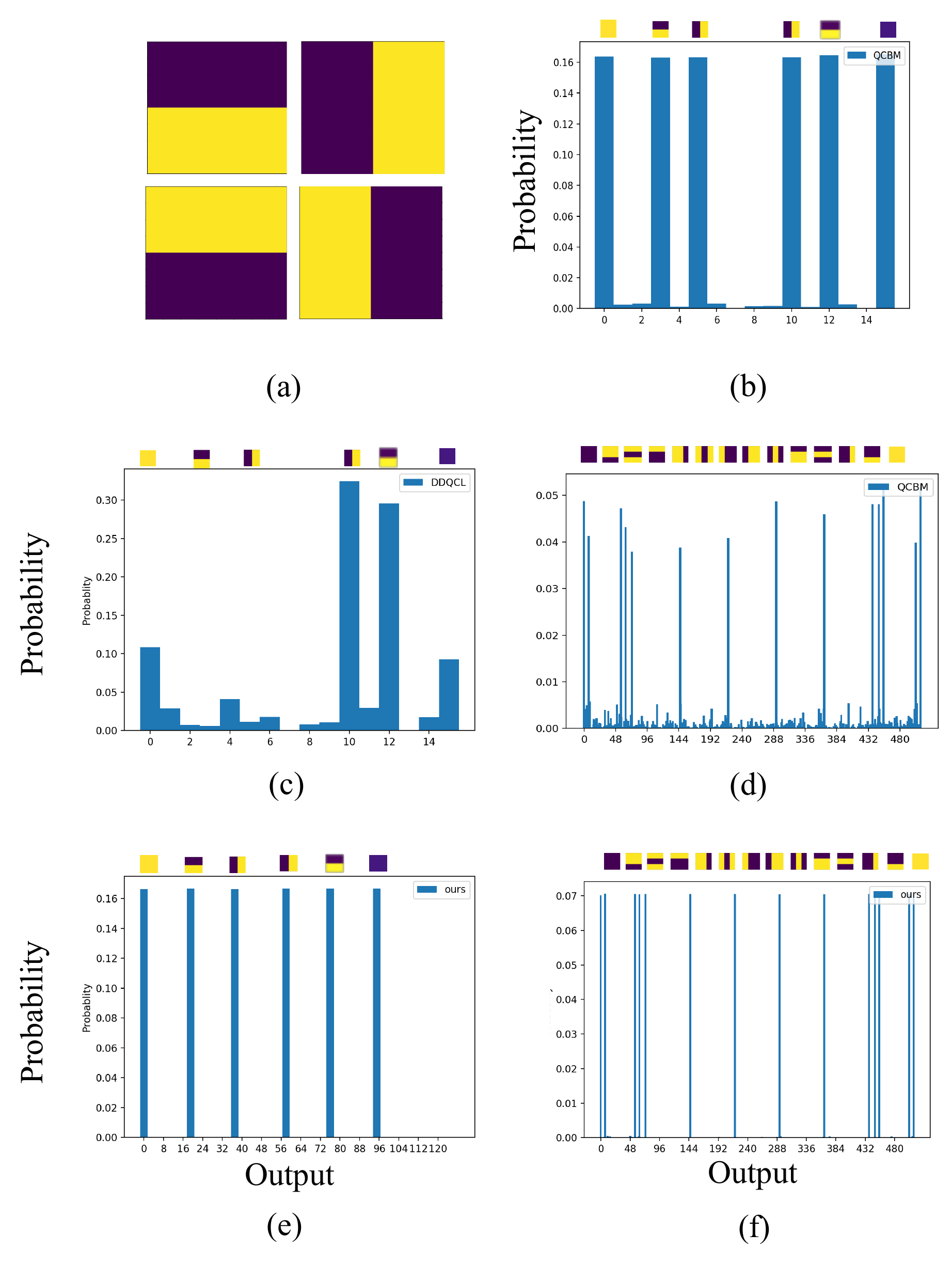}
	\captionsetup{justification =raggedright}
	\caption{\small{The generative results obtained from DDQCL, QCBM, and our model. Since the BAS dataset can be regard as a set of binary images, it can be mapped into different integers, as the x-axis of figures. Figure (b) (c) (e) are the generated result of $2\times 2$ BAS images using QCBM, DDQCL, and BQC respectively. Figure (d) and (f) are the  generated results of $3\times 3$ BAS images using QCBM and BQC, respectively. }}
	\label{fig:3}
\end{figure}

\subsection{Learning Prior Distribution}
How to learn a prior distribution $q(\bm{\lambda})$ efficiently and accurately is one critical topic in machine learning, e.g., to learn the class priors in semi-supervised learning. 
Meanwhile, class priors are also important in learning very sparse data and developing binary classifiers to discriminate positive and unlabeled data \cite{jain2016nonparametric,kemp2004semi}

To confirm the effectiveness of BQC to learn class prior distributions $q(\bm{\lambda})$  from given data, we devise a toy model. Specifically, the training data (referred to the test data with unlabeled class in the above example) are sampled form a joint distribution $p(\bm{x},\bm{\lambda})$, i.e., $
p(\bm{x},\bm{\lambda}) = p(\bm{\lambda})p(\bm{x}|\bm{\lambda})$ with $|\bm{\lambda}|=2$, where the known class conditional densities are $p(\bm{x}|\bm{\lambda}=\lambda_1)\sim\mathcal{N}_1(\mu_1,\sigma_1)$ and $p(\bm{x}|\bm{\lambda}=\lambda_2)\sim\mathcal{N}_2(\mu_2,\sigma_2)$. The $\mathcal{N}_1(\mu_1,\sigma_1)$ and $\mathcal{N}_2(\mu_2,\sigma_2)$ are two Gaussian distributions with means $\mu_1$ and $\mu_2$, variations $\sigma_1$ and $\sigma_2$, respectively. In this toy model, the means and variances of  $\mathcal{N}_1$ and $\mathcal{N}_2$ are set as $u_1=16$, $\mu_2=64$, $\sigma_1=2$, $\sigma_2=4$, respectively. For each class, the known class conditional density distribution is generated by applying $L=7$ blocks $\{U(\bm{\theta}^i)_{\lambda_k}\}_{i=1}^L$to the seven data qubits with $N=7$. Alternatively, $14$ blocks are employed to describe $p(\bm{x}|\bm{\lambda})$ with total $98$ fixed parameters. Since $\bm{x}$ is encoded into qubits as the variable in $\mathcal{N}_1$ and $\mathcal{N}_2$, it is represented as a bit string and should be integers, where the maximum value of $\bm{x}$ is $\bm{x}_{max}=2^N$ and $N$ is the number of qubits.

\begin{figure}[ht!]
	\centering
	\includegraphics[width=3.2in,height=2.4in]{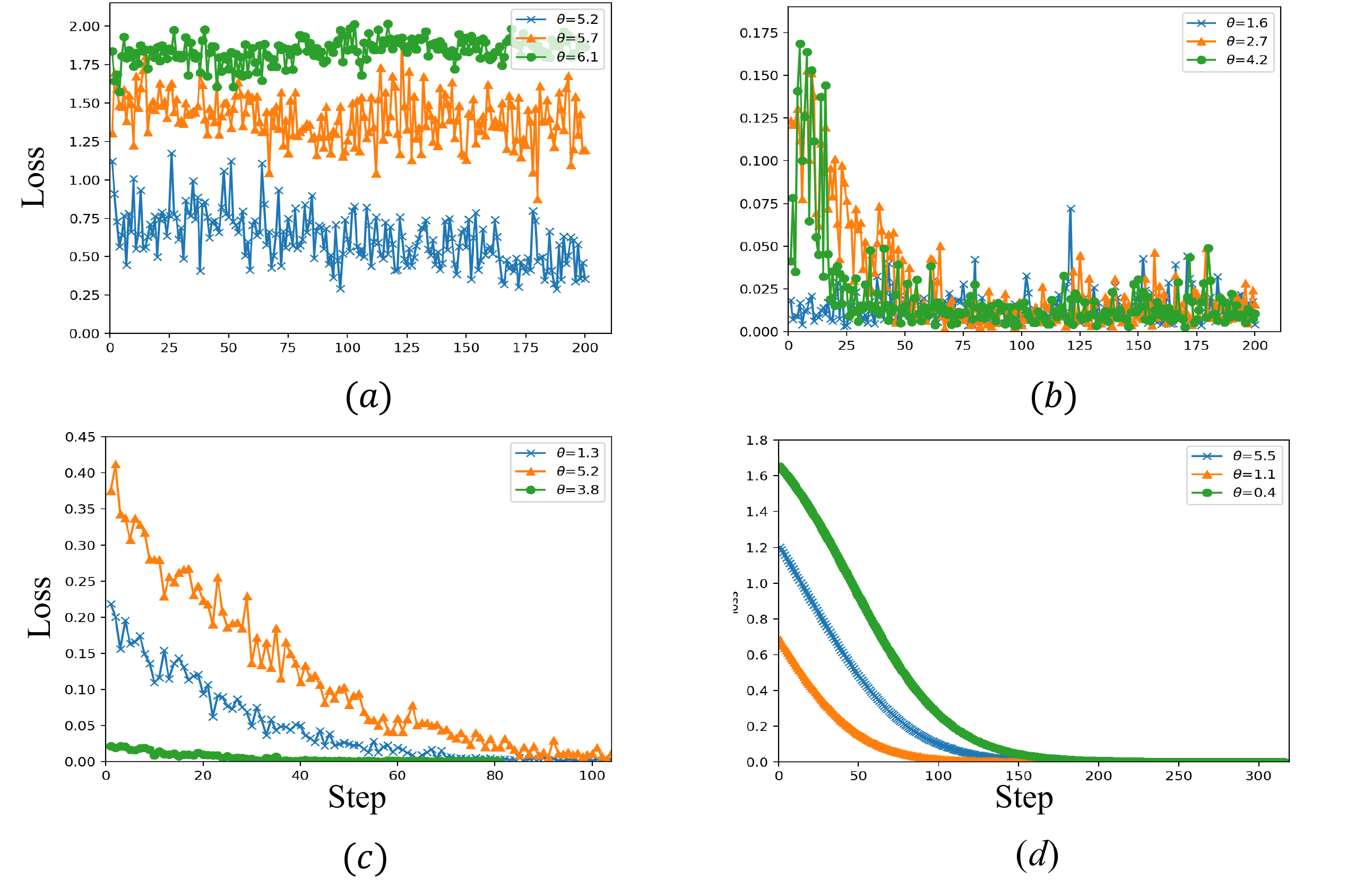}
	\captionsetup{justification =raggedright}
	\caption{\small{The figure illustrates the MMD loss functions for $p(\lambda_1)=0.70$ with different parameter settings. }}
	\label{fig:loss_prior1}
\end{figure} 

\begin{figure}[ht!]
	\centering
	\includegraphics[width=3.2in,height=2.6in]{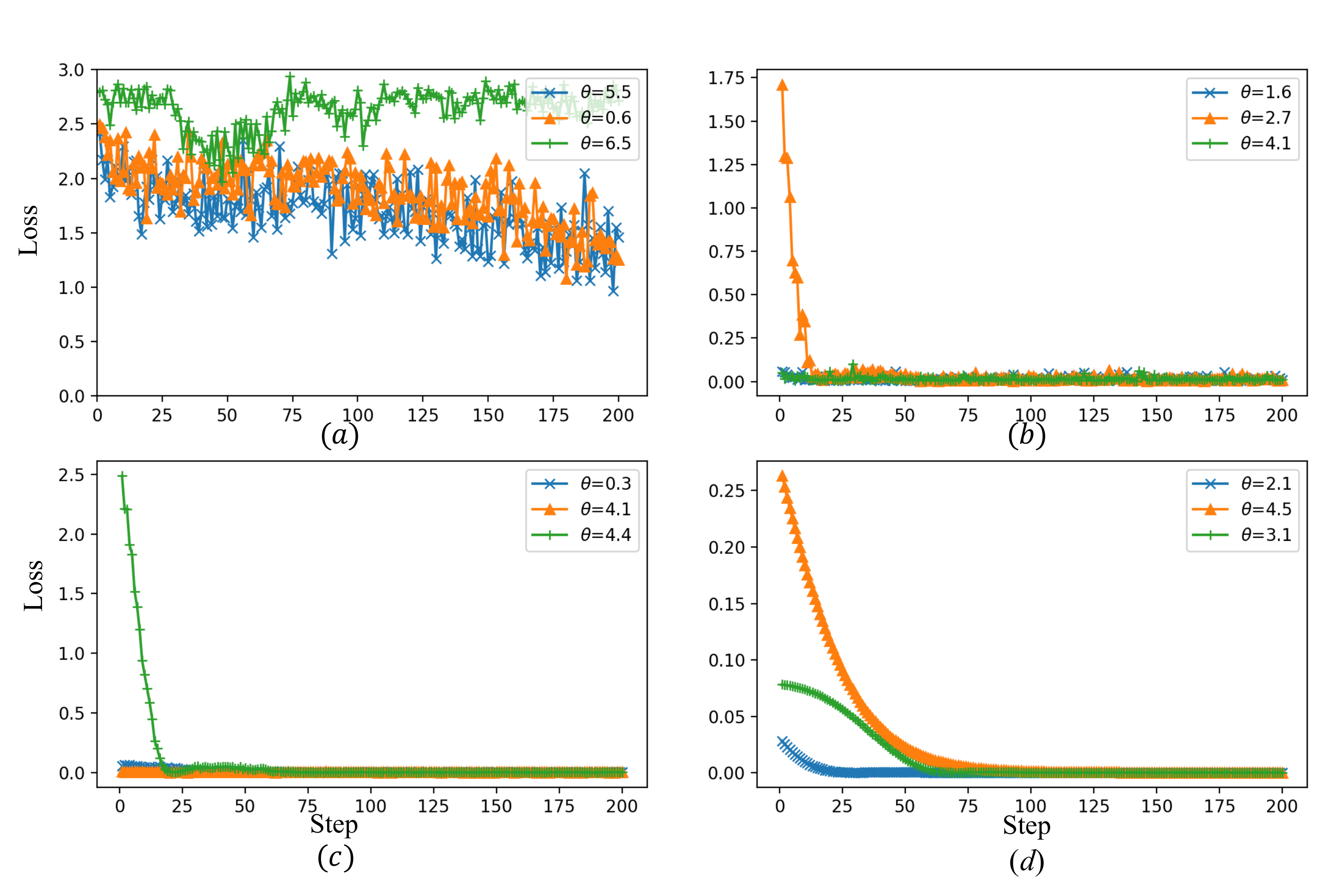}
	\captionsetup{justification =raggedright}
	\caption{\small{The figure illustrates the MMD loss functions for $p(\lambda_1)=0.85$ with different parameter settings.}}
	\label{fig:loss_prior2}
\end{figure} 

In the training process, we estimate two sets of targeted coefficients, i.e., $p(\lambda_1) = 0.7$, $p(\lambda_2) = 0.3$ and $p(\lambda_1) = 0.85$, $p(\lambda_2) = 0.15$, respectively. Due to $|\bm{\lambda}|=2$, we set the number of  ancillary qubit as $M=1$ and employ one block $U(\bm{\gamma}^i)$ to learn the class prior distribution, where the blocks only contains one parameterized $R_y(\alpha)$ gate.  We first use BQC to learn the targeted coefficient $p(\lambda_1) = 0.7$, where the MMD loss functions with three different measurement settings and two gradient descent optimization methods are shown in Figure \ref{fig:loss_prior1}. For each setting, we repeat the experiments three times with varied initialized parameters, which are indicated by different colors.  The training loss by setting the number of measurement as $100$ and employing the general stochastic gradient descent optimization method is shown in Figure \ref{fig:loss_prior1} (a). Then, we employ the unbiased gradient descent method \cite{mitarai2018quantum} and set the number of measurements as $100$, $1000$ and $\infty$, where the training losses are illustrated in \ref{fig:loss_prior1} (b), (c), and (d), respectively. We next use BQC to learn the targeted coefficient $p(\lambda_1) = 0.85$. Following the same parameters setting, the training losses are shown in Figure \ref{fig:loss_prior2}. The two numerical simulation results are listed in Table \ref{tab:2}. The small variance is mainly caused by that the limited parameters $\bm{\theta}$ cannot approximate $\mathcal{N}_1$ and $\mathcal{N}_2$ well.
We remark that different initial parameters have subtle influences to the convergence but the number of measurement determines if the loss can be converged.


\begin{table}[ht!]
\scalebox{0.9}{
	\centering 
	\begin{tabular}{c  c c c c| c c c} 
		\hline\hline 
		Methods  &  Mea & $P(\lambda_1)$ & $P(\lambda_2)$& Variance& $P(\lambda_1)$ & $P(\lambda_2)$& Variance \\ [0.5ex] 
		\hline 
		Target  &--  &0.70 & 0.30& -- & 0.85 &0.15 &--\\
		QVM$^*$    &100 &0.163 &0.867& 4.47E-02&0.555 &0.445 &1.44E-01\\
		QVM    &100 &0.706 &0.294& 1.66E-04&0.868 &0.132 &8.47E-05\\
		QVM   &1000  &0.702 &0.298& 5.74E-06&0.856 & 0.144 &5.94E-06\\
		QVM    &$\infty$ & 0.701 &0.299& 6.12E-10& 0.855& 0.145 &4.67E-09\\
		\hline 
	\end{tabular}
}
\captionsetup{justification =raggedright}
\caption{\small{Learning Prior Distribution with $M=1$, $N= 7$. Here QVM$^*$ stands for employing a general optimization method, while QVM employs the unbiased estimation optimization method.}} 
\label{tab:2}
\end{table}

\section{Conclusion and Discussion}
In this paper, our first contribution is on evaluation of the expressive power of MPQCs, TPQCs and classical neural networks. Characterized by the entanglement entropy, we prove that MPQCs, TPQCs, long range RBM and DBM can efficiently simulate the quantum state satisfying the volume law, which cannot be efficiently simulated by the short range RBM. We next prove that MPQCs can efficiently simulate probability distributions generated by an IQP circuit. These distributions are difficult to simulate efficiently by classical neural networks unless the polynomial hierarchy collapses. We therefore see that MPQCs have stronger expressive power than TPQCs and classical neural networks.

Our second contribution is the proposal of BQC to accomplish Bayesian learning tasks. BQC is a special case of AD-MPQCs that can efficiently simulate the probability distribution generated by post-IQP circuits. It has stronger expressive power over MPQCs without ancillary qubits. In addition, the post-selection operation enables BQC to accomplish machine learning tasks without knowledge about prior distributions. 
We perform two numerical simulations to validate the effectiveness of BQC. The first numerical simulation uses BQC to generate BAS images, in which BQC outperforms state-of-the-art PQCs. The second numerical simulation uses BQC to learn the class prior distribution, which is highly desirable for semi-supervised learning. The simulation results demonstrate that BQC can accurately estimate the prior distributions. These two tasks can be efficiently implemented on near term quantum devices.

Parameterized quantum circuit (PQC) is a hybrid quantum classical learning scheme that has accomplished various learning tasks using a limited number of  quantum gates and a shallow quantum circuit depth.  With the benefit of the strong expressive power and efficient implementation on near-term quantum devices, PQCs have the potential to tackle practical problems with quantum advantages.  One future direction is to explore how to use PQCs to solve practical machine learning problems and to investigate whether the proposed quantum learning model can provide a definitive quantum advantage.  

	\bigskip
\begin{acknowledgments}
	This research is supported by ARC projects FL-170100117 and DP-180103424. MH is supported by an ARC Future Fellowship under Grant FT140100574. The Python codes for our experiments are available upon request. 
\end{acknowledgments}

\bibliography{myref2}

\begin{thebibliography}{63}%
\makeatletter
\providecommand \@ifxundefined [1]{%
 \@ifx{#1\undefined}
}%
\providecommand \@ifnum [1]{%
 \ifnum #1\expandafter \@firstoftwo
 \else \expandafter \@secondoftwo
 \fi
}%
\providecommand \@ifx [1]{%
 \ifx #1\expandafter \@firstoftwo
 \else \expandafter \@secondoftwo
 \fi
}%
\providecommand \natexlab [1]{#1}%
\providecommand \enquote  [1]{``#1''}%
\providecommand \bibnamefont  [1]{#1}%
\providecommand \bibfnamefont [1]{#1}%
\providecommand \citenamefont [1]{#1}%
\providecommand \href@noop [0]{\@secondoftwo}%
\providecommand \href [0]{\begingroup \@sanitize@url \@href}%
\providecommand \@href[1]{\@@startlink{#1}\@@href}%
\providecommand \@@href[1]{\endgroup#1\@@endlink}%
\providecommand \@sanitize@url [0]{\catcode `\\12\catcode `\$12\catcode
  `\&12\catcode `\#12\catcode `\^12\catcode `\_12\catcode `\%12\relax}%
\providecommand \@@startlink[1]{}%
\providecommand \@@endlink[0]{}%
\providecommand \url  [0]{\begingroup\@sanitize@url \@url }%
\providecommand \@url [1]{\endgroup\@href {#1}{\urlprefix }}%
\providecommand \urlprefix  [0]{URL }%
\providecommand \Eprint [0]{\href }%
\providecommand \doibase [0]{http://dx.doi.org/}%
\providecommand \selectlanguage [0]{\@gobble}%
\providecommand \bibinfo  [0]{\@secondoftwo}%
\providecommand \bibfield  [0]{\@secondoftwo}%
\providecommand \translation [1]{[#1]}%
\providecommand \BibitemOpen [0]{}%
\providecommand \bibitemStop [0]{}%
\providecommand \bibitemNoStop [0]{.\EOS\space}%
\providecommand \EOS [0]{\spacefactor3000\relax}%
\providecommand \BibitemShut  [1]{\csname bibitem#1\endcsname}%
\let\auto@bib@innerbib\@empty
\bibitem [{\citenamefont {Harrow}\ and\ \citenamefont
  {Montanaro}(2017)}]{harrow2017quantum}%
  \BibitemOpen
  \bibfield  {author} {\bibinfo {author} {\bibfnamefont {Aram~W}\ \bibnamefont
  {Harrow}}\ and\ \bibinfo {author} {\bibfnamefont {Ashley}\ \bibnamefont
  {Montanaro}},\ }\bibfield  {title} {\enquote {\bibinfo {title} {Quantum
  computational supremacy},}\ }\href@noop {} {\bibfield  {journal} {\bibinfo
  {journal} {Nature}\ }\textbf {\bibinfo {volume} {549}},\ \bibinfo {pages}
  {203} (\bibinfo {year} {2017})}\BibitemShut {NoStop}%
\bibitem [{\citenamefont {Aaronson}\ and\ \citenamefont
  {Arkhipov}(2011)}]{aaronson2011computational}%
  \BibitemOpen
  \bibfield  {author} {\bibinfo {author} {\bibfnamefont {Scott}\ \bibnamefont
  {Aaronson}}\ and\ \bibinfo {author} {\bibfnamefont {Alex}\ \bibnamefont
  {Arkhipov}},\ }\bibfield  {title} {\enquote {\bibinfo {title} {The
  computational complexity of linear optics},}\ }\href@noop {} {\bibfield
  {journal} {\bibinfo  {journal} {Proceedings of The Forty-third Annual ACM
  Symposium on Theory of Computing}\ ,\ \bibinfo {pages} {333--342}} (\bibinfo
  {year} {2011})}\BibitemShut {NoStop}%
\bibitem [{\citenamefont {Shepherd}\ and\ \citenamefont
  {Bremner}(2009)}]{shepherd2009temporally}%
  \BibitemOpen
  \bibfield  {author} {\bibinfo {author} {\bibfnamefont {Dan}\ \bibnamefont
  {Shepherd}}\ and\ \bibinfo {author} {\bibfnamefont {Michael~J}\ \bibnamefont
  {Bremner}},\ }\bibfield  {title} {\enquote {\bibinfo {title} {Temporally
  unstructured quantum computation},}\ }\href@noop {} {\bibfield  {journal}
  {\bibinfo  {journal} {Proceedings of the Royal Society of London A:
  Mathematical, Physical and Engineering Sciences}\ }\textbf {\bibinfo {volume}
  {465}},\ \bibinfo {pages} {1413--1439} (\bibinfo {year} {2009})}\BibitemShut
  {NoStop}%
\bibitem [{\citenamefont {Preskill}(2018)}]{preskill2018quantum}%
  \BibitemOpen
  \bibfield  {author} {\bibinfo {author} {\bibfnamefont {John}\ \bibnamefont
  {Preskill}},\ }\bibfield  {title} {\enquote {\bibinfo {title} {Quantum
  computing in the nisq era and beyond},}\ }\href@noop {} {\bibfield  {journal}
  {\bibinfo  {journal} {arXiv preprint arXiv:1801.00862}\ } (\bibinfo {year}
  {2018})}\BibitemShut {NoStop}%
\bibitem [{\citenamefont {Neville}\ \emph {et~al.}(2017)\citenamefont
  {Neville}, \citenamefont {Sparrow}, \citenamefont {Clifford}, \citenamefont
  {Johnston}, \citenamefont {Birchall}, \citenamefont {Montanaro},\ and\
  \citenamefont {Laing}}]{neville2017classical}%
  \BibitemOpen
  \bibfield  {author} {\bibinfo {author} {\bibfnamefont {Alex}\ \bibnamefont
  {Neville}}, \bibinfo {author} {\bibfnamefont {Chris}\ \bibnamefont
  {Sparrow}}, \bibinfo {author} {\bibfnamefont {Rapha{\"e}l}\ \bibnamefont
  {Clifford}}, \bibinfo {author} {\bibfnamefont {Eric}\ \bibnamefont
  {Johnston}}, \bibinfo {author} {\bibfnamefont {Patrick~M}\ \bibnamefont
  {Birchall}}, \bibinfo {author} {\bibfnamefont {Ashley}\ \bibnamefont
  {Montanaro}}, \ and\ \bibinfo {author} {\bibfnamefont {Anthony}\ \bibnamefont
  {Laing}},\ }\bibfield  {title} {\enquote {\bibinfo {title} {Classical boson
  sampling algorithms with superior performance to near-term experiments},}\
  }\href@noop {} {\bibfield  {journal} {\bibinfo  {journal} {Nature Physics}\
  }\textbf {\bibinfo {volume} {13}},\ \bibinfo {pages} {1153} (\bibinfo {year}
  {2017})}\BibitemShut {NoStop}%
\bibitem [{\citenamefont {Bremner}\ \emph {et~al.}(2017)\citenamefont
  {Bremner}, \citenamefont {Montanaro},\ and\ \citenamefont
  {Shepherd}}]{bremner2017achieving}%
  \BibitemOpen
  \bibfield  {author} {\bibinfo {author} {\bibfnamefont {Michael~J}\
  \bibnamefont {Bremner}}, \bibinfo {author} {\bibfnamefont {Ashley}\
  \bibnamefont {Montanaro}}, \ and\ \bibinfo {author} {\bibfnamefont {Dan~J}\
  \bibnamefont {Shepherd}},\ }\bibfield  {title} {\enquote {\bibinfo {title}
  {Achieving quantum supremacy with sparse and noisy commuting quantum
  computations},}\ }\href@noop {} {\bibfield  {journal} {\bibinfo  {journal}
  {Quantum}\ }\textbf {\bibinfo {volume} {1}},\ \bibinfo {pages} {8} (\bibinfo
  {year} {2017})}\BibitemShut {NoStop}%
\bibitem [{\citenamefont {Biamonte}\ \emph {et~al.}(2017)\citenamefont
  {Biamonte}, \citenamefont {Wittek}, \citenamefont {Pancotti}, \citenamefont
  {Rebentrost}, \citenamefont {Wiebe},\ and\ \citenamefont
  {Lloyd}}]{biamonte2017quantum}%
  \BibitemOpen
  \bibfield  {author} {\bibinfo {author} {\bibfnamefont {Jacob}\ \bibnamefont
  {Biamonte}}, \bibinfo {author} {\bibfnamefont {Peter}\ \bibnamefont
  {Wittek}}, \bibinfo {author} {\bibfnamefont {Nicola}\ \bibnamefont
  {Pancotti}}, \bibinfo {author} {\bibfnamefont {Patrick}\ \bibnamefont
  {Rebentrost}}, \bibinfo {author} {\bibfnamefont {Nathan}\ \bibnamefont
  {Wiebe}}, \ and\ \bibinfo {author} {\bibfnamefont {Seth}\ \bibnamefont
  {Lloyd}},\ }\bibfield  {title} {\enquote {\bibinfo {title} {Quantum machine
  learning},}\ }\href@noop {} {\bibfield  {journal} {\bibinfo  {journal}
  {Nature}\ }\textbf {\bibinfo {volume} {549}},\ \bibinfo {pages} {195}
  (\bibinfo {year} {2017})}\BibitemShut {NoStop}%
\bibitem [{\citenamefont {Rebentrost}\ \emph {et~al.}(2014)\citenamefont
  {Rebentrost}, \citenamefont {Mohseni},\ and\ \citenamefont
  {Lloyd}}]{rebentrost2014quantum}%
  \BibitemOpen
  \bibfield  {author} {\bibinfo {author} {\bibfnamefont {Patrick}\ \bibnamefont
  {Rebentrost}}, \bibinfo {author} {\bibfnamefont {Masoud}\ \bibnamefont
  {Mohseni}}, \ and\ \bibinfo {author} {\bibfnamefont {Seth}\ \bibnamefont
  {Lloyd}},\ }\bibfield  {title} {\enquote {\bibinfo {title} {Quantum support
  vector machine for big data classification},}\ }\href@noop {} {\bibfield
  {journal} {\bibinfo  {journal} {Physical Review Letters}\ }\textbf {\bibinfo
  {volume} {113}},\ \bibinfo {pages} {130503} (\bibinfo {year}
  {2014})}\BibitemShut {NoStop}%
\bibitem [{\citenamefont {Lloyd}\ \emph {et~al.}(2014)\citenamefont {Lloyd},
  \citenamefont {Mohseni},\ and\ \citenamefont
  {Rebentrost}}]{lloyd2014quantum}%
  \BibitemOpen
  \bibfield  {author} {\bibinfo {author} {\bibfnamefont {Seth}\ \bibnamefont
  {Lloyd}}, \bibinfo {author} {\bibfnamefont {Masoud}\ \bibnamefont {Mohseni}},
  \ and\ \bibinfo {author} {\bibfnamefont {Patrick}\ \bibnamefont
  {Rebentrost}},\ }\bibfield  {title} {\enquote {\bibinfo {title} {Quantum
  principal component analysis},}\ }\href@noop {} {\bibfield  {journal}
  {\bibinfo  {journal} {Nature Physics}\ }\textbf {\bibinfo {volume} {10}},\
  \bibinfo {pages} {631} (\bibinfo {year} {2014})}\BibitemShut {NoStop}%
\bibitem [{\citenamefont {Wittek}(2014)}]{wittek2014quantum}%
  \BibitemOpen
  \bibfield  {author} {\bibinfo {author} {\bibfnamefont {Peter}\ \bibnamefont
  {Wittek}},\ }\href@noop {} {\emph {\bibinfo {title} {Quantum machine
  learning: what quantum computing means to data mining}}}\ (\bibinfo
  {publisher} {Academic Press},\ \bibinfo {year} {2014})\BibitemShut {NoStop}%
\bibitem [{\citenamefont {Wiebe}\ \emph {et~al.}(2012)\citenamefont {Wiebe},
  \citenamefont {Braun},\ and\ \citenamefont {Lloyd}}]{wiebe2012quantum}%
  \BibitemOpen
  \bibfield  {author} {\bibinfo {author} {\bibfnamefont {Nathan}\ \bibnamefont
  {Wiebe}}, \bibinfo {author} {\bibfnamefont {Daniel}\ \bibnamefont {Braun}}, \
  and\ \bibinfo {author} {\bibfnamefont {Seth}\ \bibnamefont {Lloyd}},\
  }\bibfield  {title} {\enquote {\bibinfo {title} {Quantum algorithm for data
  fitting},}\ }\href@noop {} {\bibfield  {journal} {\bibinfo  {journal}
  {Physical Review Letters}\ }\textbf {\bibinfo {volume} {109}},\ \bibinfo
  {pages} {050505} (\bibinfo {year} {2012})}\BibitemShut {NoStop}%
\bibitem [{\citenamefont {Paparo}\ \emph {et~al.}(2014)\citenamefont {Paparo},
  \citenamefont {Dunjko}, \citenamefont {Makmal}, \citenamefont
  {Martin-Delgado},\ and\ \citenamefont {Briegel}}]{paparo2014quantum}%
  \BibitemOpen
  \bibfield  {author} {\bibinfo {author} {\bibfnamefont {Giuseppe~Davide}\
  \bibnamefont {Paparo}}, \bibinfo {author} {\bibfnamefont {Vedran}\
  \bibnamefont {Dunjko}}, \bibinfo {author} {\bibfnamefont {Adi}\ \bibnamefont
  {Makmal}}, \bibinfo {author} {\bibfnamefont {Miguel~Angel}\ \bibnamefont
  {Martin-Delgado}}, \ and\ \bibinfo {author} {\bibfnamefont {Hans~J}\
  \bibnamefont {Briegel}},\ }\bibfield  {title} {\enquote {\bibinfo {title}
  {Quantum speedup for active learning agents},}\ }\href@noop {} {\bibfield
  {journal} {\bibinfo  {journal} {Physical Review X}\ }\textbf {\bibinfo
  {volume} {4}},\ \bibinfo {pages} {031002} (\bibinfo {year}
  {2014})}\BibitemShut {NoStop}%
\bibitem [{\citenamefont {Dunjko}\ \emph {et~al.}(2016)\citenamefont {Dunjko},
  \citenamefont {Taylor},\ and\ \citenamefont {Briegel}}]{dunjko2016quantum}%
  \BibitemOpen
  \bibfield  {author} {\bibinfo {author} {\bibfnamefont {Vedran}\ \bibnamefont
  {Dunjko}}, \bibinfo {author} {\bibfnamefont {Jacob~M}\ \bibnamefont
  {Taylor}}, \ and\ \bibinfo {author} {\bibfnamefont {Hans~J}\ \bibnamefont
  {Briegel}},\ }\bibfield  {title} {\enquote {\bibinfo {title}
  {Quantum-enhanced machine learning},}\ }\href@noop {} {\bibfield  {journal}
  {\bibinfo  {journal} {Physical Review Letters}\ }\textbf {\bibinfo {volume}
  {117}},\ \bibinfo {pages} {130501} (\bibinfo {year} {2016})}\BibitemShut
  {NoStop}%
\bibitem [{\citenamefont {Dietterich}(2000)}]{dietterich2000ensemble}%
  \BibitemOpen
  \bibfield  {author} {\bibinfo {author} {\bibfnamefont {Thomas~G}\
  \bibnamefont {Dietterich}},\ }\bibfield  {title} {\enquote {\bibinfo {title}
  {Ensemble methods in machine learning},}\ }in\ \href@noop {} {\emph {\bibinfo
  {booktitle} {International Workshop on Multiple Classifier Systems}}}\
  (\bibinfo {organization} {Springer},\ \bibinfo {year} {2000})\ pp.\ \bibinfo
  {pages} {1--15}\BibitemShut {NoStop}%
\bibitem [{\citenamefont {Tipping}(2001)}]{tipping2001sparse}%
  \BibitemOpen
  \bibfield  {author} {\bibinfo {author} {\bibfnamefont {Michael~E}\
  \bibnamefont {Tipping}},\ }\bibfield  {title} {\enquote {\bibinfo {title}
  {Sparse bayesian learning and the relevance vector machine},}\ }\href@noop {}
  {\bibfield  {journal} {\bibinfo  {journal} {Journal of Machine Learning
  Research}\ }\textbf {\bibinfo {volume} {1}},\ \bibinfo {pages} {211--244}
  (\bibinfo {year} {2001})}\BibitemShut {NoStop}%
\bibitem [{\citenamefont {Ng}\ and\ \citenamefont
  {Jordan}(2002)}]{ng2002discriminative}%
  \BibitemOpen
  \bibfield  {author} {\bibinfo {author} {\bibfnamefont {Andrew~Y}\
  \bibnamefont {Ng}}\ and\ \bibinfo {author} {\bibfnamefont {Michael~I}\
  \bibnamefont {Jordan}},\ }\bibfield  {title} {\enquote {\bibinfo {title} {On
  discriminative vs. generative classifiers: A comparison of logistic
  regression and naive bayes},}\ }in\ \href@noop {} {\emph {\bibinfo
  {booktitle} {Advances in Neural Information Processing Systems}}}\ (\bibinfo
  {year} {2002})\ pp.\ \bibinfo {pages} {841--848}\BibitemShut {NoStop}%
\bibitem [{\citenamefont {Hofmann}\ \emph {et~al.}(2008)\citenamefont
  {Hofmann}, \citenamefont {Sch{\"o}lkopf},\ and\ \citenamefont
  {Smola}}]{hofmann2008kernel}%
  \BibitemOpen
  \bibfield  {author} {\bibinfo {author} {\bibfnamefont {Thomas}\ \bibnamefont
  {Hofmann}}, \bibinfo {author} {\bibfnamefont {Bernhard}\ \bibnamefont
  {Sch{\"o}lkopf}}, \ and\ \bibinfo {author} {\bibfnamefont {Alexander~J}\
  \bibnamefont {Smola}},\ }\bibfield  {title} {\enquote {\bibinfo {title}
  {Kernel methods in machine learning},}\ }\href@noop {} {\bibfield  {journal}
  {\bibinfo  {journal} {The Annals of Statistics}\ ,\ \bibinfo {pages}
  {1171--1220}} (\bibinfo {year} {2008})}\BibitemShut {NoStop}%
\bibitem [{\citenamefont {Hinton}\ and\ \citenamefont
  {Salakhutdinov}(2006)}]{hinton2006reducing}%
  \BibitemOpen
  \bibfield  {author} {\bibinfo {author} {\bibfnamefont {Geoffrey~E}\
  \bibnamefont {Hinton}}\ and\ \bibinfo {author} {\bibfnamefont {Ruslan~R}\
  \bibnamefont {Salakhutdinov}},\ }\bibfield  {title} {\enquote {\bibinfo
  {title} {Reducing the dimensionality of data with neural networks},}\
  }\href@noop {} {\bibfield  {journal} {\bibinfo  {journal} {science}\ }\textbf
  {\bibinfo {volume} {313}},\ \bibinfo {pages} {504--507} (\bibinfo {year}
  {2006})}\BibitemShut {NoStop}%
\bibitem [{\citenamefont {Salakhutdinov}\ and\ \citenamefont
  {Larochelle}(2010)}]{salakhutdinov2010efficient}%
  \BibitemOpen
  \bibfield  {author} {\bibinfo {author} {\bibfnamefont {Ruslan}\ \bibnamefont
  {Salakhutdinov}}\ and\ \bibinfo {author} {\bibfnamefont {Hugo}\ \bibnamefont
  {Larochelle}},\ }\bibfield  {title} {\enquote {\bibinfo {title} {Efficient
  learning of deep boltzmann machines},}\ }in\ \href@noop {} {\emph {\bibinfo
  {booktitle} {Proceedings of the thirteenth international conference on
  artificial intelligence and statistics}}}\ (\bibinfo {year} {2010})\ pp.\
  \bibinfo {pages} {693--700}\BibitemShut {NoStop}%
\bibitem [{\citenamefont {Gao}\ and\ \citenamefont
  {Duan}(2017)}]{gao2017efficient1}%
  \BibitemOpen
  \bibfield  {author} {\bibinfo {author} {\bibfnamefont {Xun}\ \bibnamefont
  {Gao}}\ and\ \bibinfo {author} {\bibfnamefont {Lu-Ming}\ \bibnamefont
  {Duan}},\ }\bibfield  {title} {\enquote {\bibinfo {title} {Efficient
  representation of quantum many-body states with deep neural networks},}\
  }\href@noop {} {\bibfield  {journal} {\bibinfo  {journal} {Nature
  Communications}\ }\textbf {\bibinfo {volume} {8}},\ \bibinfo {pages} {662}
  (\bibinfo {year} {2017})}\BibitemShut {NoStop}%
\bibitem [{\citenamefont {Chen}\ \emph {et~al.}(2018)\citenamefont {Chen},
  \citenamefont {Cheng}, \citenamefont {Xie}, \citenamefont {Wang},\ and\
  \citenamefont {Xiang}}]{chen2018equivalence}%
  \BibitemOpen
  \bibfield  {author} {\bibinfo {author} {\bibfnamefont {Jing}\ \bibnamefont
  {Chen}}, \bibinfo {author} {\bibfnamefont {Song}\ \bibnamefont {Cheng}},
  \bibinfo {author} {\bibfnamefont {Haidong}\ \bibnamefont {Xie}}, \bibinfo
  {author} {\bibfnamefont {Lei}\ \bibnamefont {Wang}}, \ and\ \bibinfo {author}
  {\bibfnamefont {Tao}\ \bibnamefont {Xiang}},\ }\bibfield  {title} {\enquote
  {\bibinfo {title} {Equivalence of restricted boltzmann machines and tensor
  network states},}\ }\href@noop {} {\bibfield  {journal} {\bibinfo  {journal}
  {Physical Review B}\ }\textbf {\bibinfo {volume} {97}},\ \bibinfo {pages}
  {085104} (\bibinfo {year} {2018})}\BibitemShut {NoStop}%
\bibitem [{\citenamefont {Glasser}\ \emph
  {et~al.}(2018{\natexlab{a}})\citenamefont {Glasser}, \citenamefont
  {Pancotti}, \citenamefont {August}, \citenamefont {Rodriguez},\ and\
  \citenamefont {Cirac}}]{glasser2018neural}%
  \BibitemOpen
  \bibfield  {author} {\bibinfo {author} {\bibfnamefont {Ivan}\ \bibnamefont
  {Glasser}}, \bibinfo {author} {\bibfnamefont {Nicola}\ \bibnamefont
  {Pancotti}}, \bibinfo {author} {\bibfnamefont {Moritz}\ \bibnamefont
  {August}}, \bibinfo {author} {\bibfnamefont {Ivan~D}\ \bibnamefont
  {Rodriguez}}, \ and\ \bibinfo {author} {\bibfnamefont {J~Ignacio}\
  \bibnamefont {Cirac}},\ }\bibfield  {title} {\enquote {\bibinfo {title}
  {Neural-network quantum states, string-bond states, and chiral topological
  states},}\ }\href@noop {} {\bibfield  {journal} {\bibinfo  {journal}
  {Physical Review X}\ }\textbf {\bibinfo {volume} {8}},\ \bibinfo {pages}
  {011006} (\bibinfo {year} {2018}{\natexlab{a}})}\BibitemShut {NoStop}%
\bibitem [{\citenamefont {Glasser}\ \emph
  {et~al.}(2018{\natexlab{b}})\citenamefont {Glasser}, \citenamefont
  {Pancotti},\ and\ \citenamefont {Cirac}}]{glasser2018supervised}%
  \BibitemOpen
  \bibfield  {author} {\bibinfo {author} {\bibfnamefont {Ivan}\ \bibnamefont
  {Glasser}}, \bibinfo {author} {\bibfnamefont {Nicola}\ \bibnamefont
  {Pancotti}}, \ and\ \bibinfo {author} {\bibfnamefont {J~Ignacio}\
  \bibnamefont {Cirac}},\ }\bibfield  {title} {\enquote {\bibinfo {title}
  {Supervised learning with generalized tensor networks},}\ }\href@noop {}
  {\bibfield  {journal} {\bibinfo  {journal} {arXiv preprint arXiv:1806.05964}\
  } (\bibinfo {year} {2018}{\natexlab{b}})}\BibitemShut {NoStop}%
\bibitem [{\citenamefont {Deng}\ \emph {et~al.}(2017)\citenamefont {Deng},
  \citenamefont {Li},\ and\ \citenamefont {Sarma}}]{deng2017quantum}%
  \BibitemOpen
  \bibfield  {author} {\bibinfo {author} {\bibfnamefont {Dong-Ling}\
  \bibnamefont {Deng}}, \bibinfo {author} {\bibfnamefont {Xiaopeng}\
  \bibnamefont {Li}}, \ and\ \bibinfo {author} {\bibfnamefont {S~Das}\
  \bibnamefont {Sarma}},\ }\bibfield  {title} {\enquote {\bibinfo {title}
  {Quantum entanglement in neural network states},}\ }\href@noop {} {\bibfield
  {journal} {\bibinfo  {journal} {Physical Review X}\ }\textbf {\bibinfo
  {volume} {7}},\ \bibinfo {pages} {021021} (\bibinfo {year}
  {2017})}\BibitemShut {NoStop}%
\bibitem [{\citenamefont {Carrasquilla}\ and\ \citenamefont
  {Melko}(2017)}]{carrasquilla2017machine}%
  \BibitemOpen
  \bibfield  {author} {\bibinfo {author} {\bibfnamefont {Juan}\ \bibnamefont
  {Carrasquilla}}\ and\ \bibinfo {author} {\bibfnamefont {Roger~G}\
  \bibnamefont {Melko}},\ }\bibfield  {title} {\enquote {\bibinfo {title}
  {Machine learning phases of matter},}\ }\href@noop {} {\bibfield  {journal}
  {\bibinfo  {journal} {Nature Physics}\ }\textbf {\bibinfo {volume} {13}},\
  \bibinfo {pages} {431} (\bibinfo {year} {2017})}\BibitemShut {NoStop}%
\bibitem [{\citenamefont {Carleo}\ and\ \citenamefont
  {Troyer}(2017)}]{carleo2017solving}%
  \BibitemOpen
  \bibfield  {author} {\bibinfo {author} {\bibfnamefont {Giuseppe}\
  \bibnamefont {Carleo}}\ and\ \bibinfo {author} {\bibfnamefont {Matthias}\
  \bibnamefont {Troyer}},\ }\bibfield  {title} {\enquote {\bibinfo {title}
  {Solving the quantum many-body problem with artificial neural networks},}\
  }\href@noop {} {\bibfield  {journal} {\bibinfo  {journal} {Science}\ }\textbf
  {\bibinfo {volume} {355}},\ \bibinfo {pages} {602--606} (\bibinfo {year}
  {2017})}\BibitemShut {NoStop}%
\bibitem [{\citenamefont {Liu}\ \emph {et~al.}(2017)\citenamefont {Liu},
  \citenamefont {Shen}, \citenamefont {Qi}, \citenamefont {Meng},\ and\
  \citenamefont {Fu}}]{liu2017self}%
  \BibitemOpen
  \bibfield  {author} {\bibinfo {author} {\bibfnamefont {Junwei}\ \bibnamefont
  {Liu}}, \bibinfo {author} {\bibfnamefont {Huitao}\ \bibnamefont {Shen}},
  \bibinfo {author} {\bibfnamefont {Yang}\ \bibnamefont {Qi}}, \bibinfo
  {author} {\bibfnamefont {Zi~Yang}\ \bibnamefont {Meng}}, \ and\ \bibinfo
  {author} {\bibfnamefont {Liang}\ \bibnamefont {Fu}},\ }\bibfield  {title}
  {\enquote {\bibinfo {title} {Self-learning monte carlo method and cumulative
  update in fermion systems},}\ }\href@noop {} {\bibfield  {journal} {\bibinfo
  {journal} {Physical Review B}\ }\textbf {\bibinfo {volume} {95}},\ \bibinfo
  {pages} {241104} (\bibinfo {year} {2017})}\BibitemShut {NoStop}%
\bibitem [{\citenamefont {Van~Nieuwenburg}\ \emph {et~al.}(2017)\citenamefont
  {Van~Nieuwenburg}, \citenamefont {Liu},\ and\ \citenamefont
  {Huber}}]{van2017learning}%
  \BibitemOpen
  \bibfield  {author} {\bibinfo {author} {\bibfnamefont {Evert~PL}\
  \bibnamefont {Van~Nieuwenburg}}, \bibinfo {author} {\bibfnamefont {Ye-Hua}\
  \bibnamefont {Liu}}, \ and\ \bibinfo {author} {\bibfnamefont {Sebastian~D}\
  \bibnamefont {Huber}},\ }\bibfield  {title} {\enquote {\bibinfo {title}
  {Learning phase transitions by confusion},}\ }\href@noop {} {\bibfield
  {journal} {\bibinfo  {journal} {Nature Physics}\ }\textbf {\bibinfo {volume}
  {13}},\ \bibinfo {pages} {435} (\bibinfo {year} {2017})}\BibitemShut
  {NoStop}%
\bibitem [{\citenamefont {Ch’ng}\ \emph {et~al.}(2017)\citenamefont
  {Ch’ng}, \citenamefont {Carrasquilla}, \citenamefont {Melko},\ and\
  \citenamefont {Khatami}}]{ch2017machine}%
  \BibitemOpen
  \bibfield  {author} {\bibinfo {author} {\bibfnamefont {Kelvin}\ \bibnamefont
  {Ch’ng}}, \bibinfo {author} {\bibfnamefont {Juan}\ \bibnamefont
  {Carrasquilla}}, \bibinfo {author} {\bibfnamefont {Roger~G}\ \bibnamefont
  {Melko}}, \ and\ \bibinfo {author} {\bibfnamefont {Ehsan}\ \bibnamefont
  {Khatami}},\ }\bibfield  {title} {\enquote {\bibinfo {title} {Machine
  learning phases of strongly correlated fermions},}\ }\href@noop {} {\bibfield
   {journal} {\bibinfo  {journal} {Physical Review X}\ }\textbf {\bibinfo
  {volume} {7}},\ \bibinfo {pages} {031038} (\bibinfo {year}
  {2017})}\BibitemShut {NoStop}%
\bibitem [{\citenamefont {Benedetti}\ \emph {et~al.}(2018)\citenamefont
  {Benedetti}, \citenamefont {Garcia-Pintos}, \citenamefont {Nam},\ and\
  \citenamefont {Perdomo-Ortiz}}]{benedetti2018generative}%
  \BibitemOpen
  \bibfield  {author} {\bibinfo {author} {\bibfnamefont {Marcello}\
  \bibnamefont {Benedetti}}, \bibinfo {author} {\bibfnamefont {Delfina}\
  \bibnamefont {Garcia-Pintos}}, \bibinfo {author} {\bibfnamefont {Yunseong}\
  \bibnamefont {Nam}}, \ and\ \bibinfo {author} {\bibfnamefont {Alejandro}\
  \bibnamefont {Perdomo-Ortiz}},\ }\bibfield  {title} {\enquote {\bibinfo
  {title} {A generative modeling approach for benchmarking and training shallow
  quantum circuits},}\ }\href@noop {} {\bibfield  {journal} {\bibinfo
  {journal} {arXiv preprint arXiv:1801.07686}\ } (\bibinfo {year}
  {2018})}\BibitemShut {NoStop}%
\bibitem [{\citenamefont {Liu}\ and\ \citenamefont
  {Wang}(2018)}]{liu2018differentiable}%
  \BibitemOpen
  \bibfield  {author} {\bibinfo {author} {\bibfnamefont {Jin-Guo}\ \bibnamefont
  {Liu}}\ and\ \bibinfo {author} {\bibfnamefont {Lei}\ \bibnamefont {Wang}},\
  }\bibfield  {title} {\enquote {\bibinfo {title} {Differentiable learning of
  quantum circuit born machine},}\ }\href@noop {} {\bibfield  {journal}
  {\bibinfo  {journal} {arXiv preprint arXiv:1804.04168}\ } (\bibinfo {year}
  {2018})}\BibitemShut {NoStop}%
\bibitem [{\citenamefont {Benedetti}\ \emph {et~al.}(2017)\citenamefont
  {Benedetti}, \citenamefont {Realpe-G{\'o}mez}, \citenamefont {Biswas},\ and\
  \citenamefont {Perdomo-Ortiz}}]{benedetti2017quantum}%
  \BibitemOpen
  \bibfield  {author} {\bibinfo {author} {\bibfnamefont {Marcello}\
  \bibnamefont {Benedetti}}, \bibinfo {author} {\bibfnamefont {John}\
  \bibnamefont {Realpe-G{\'o}mez}}, \bibinfo {author} {\bibfnamefont {Rupak}\
  \bibnamefont {Biswas}}, \ and\ \bibinfo {author} {\bibfnamefont {Alejandro}\
  \bibnamefont {Perdomo-Ortiz}},\ }\bibfield  {title} {\enquote {\bibinfo
  {title} {Quantum-assisted learning of hardware-embedded probabilistic
  graphical models},}\ }\href@noop {} {\bibfield  {journal} {\bibinfo
  {journal} {Physical Review X}\ }\textbf {\bibinfo {volume} {7}},\ \bibinfo
  {pages} {041052} (\bibinfo {year} {2017})}\BibitemShut {NoStop}%
\bibitem [{\citenamefont {Farhi}\ and\ \citenamefont
  {Neven}(2018)}]{farhi2018classification}%
  \BibitemOpen
  \bibfield  {author} {\bibinfo {author} {\bibfnamefont {Edward}\ \bibnamefont
  {Farhi}}\ and\ \bibinfo {author} {\bibfnamefont {Hartmut}\ \bibnamefont
  {Neven}},\ }\bibfield  {title} {\enquote {\bibinfo {title} {Classification
  with quantum neural networks on near term processors},}\ }\href@noop {}
  {\bibfield  {journal} {\bibinfo  {journal} {arXiv preprint arXiv:1802.06002}\
  } (\bibinfo {year} {2018})}\BibitemShut {NoStop}%
\bibitem [{\citenamefont {Schuld}\ \emph {et~al.}(2018)\citenamefont {Schuld},
  \citenamefont {Bocharov}, \citenamefont {Svore},\ and\ \citenamefont
  {Wiebe}}]{schuld2018circuit}%
  \BibitemOpen
  \bibfield  {author} {\bibinfo {author} {\bibfnamefont {Maria}\ \bibnamefont
  {Schuld}}, \bibinfo {author} {\bibfnamefont {Alex}\ \bibnamefont {Bocharov}},
  \bibinfo {author} {\bibfnamefont {Krysta}\ \bibnamefont {Svore}}, \ and\
  \bibinfo {author} {\bibfnamefont {Nathan}\ \bibnamefont {Wiebe}},\ }\bibfield
   {title} {\enquote {\bibinfo {title} {Circuit-centric quantum classifiers},}\
  }\href@noop {} {\bibfield  {journal} {\bibinfo  {journal} {arXiv preprint
  arXiv:1804.00633}\ } (\bibinfo {year} {2018})}\BibitemShut {NoStop}%
\bibitem [{\citenamefont {Huggins}\ \emph {et~al.}(2018)\citenamefont
  {Huggins}, \citenamefont {Patel}, \citenamefont {Whaley},\ and\ \citenamefont
  {Stoudenmire}}]{huggins2018towards}%
  \BibitemOpen
  \bibfield  {author} {\bibinfo {author} {\bibfnamefont {William}\ \bibnamefont
  {Huggins}}, \bibinfo {author} {\bibfnamefont {Piyush}\ \bibnamefont {Patel}},
  \bibinfo {author} {\bibfnamefont {K~Birgitta}\ \bibnamefont {Whaley}}, \ and\
  \bibinfo {author} {\bibfnamefont {E~Miles}\ \bibnamefont {Stoudenmire}},\
  }\bibfield  {title} {\enquote {\bibinfo {title} {Towards quantum machine
  learning with tensor networks},}\ }\href@noop {} {\bibfield  {journal}
  {\bibinfo  {journal} {arXiv preprint arXiv:1803.11537}\ } (\bibinfo {year}
  {2018})}\BibitemShut {NoStop}%
\bibitem [{\citenamefont {Grant}\ \emph {et~al.}(2018)\citenamefont {Grant},
  \citenamefont {Benedetti}, \citenamefont {Cao}, \citenamefont {Hallam},
  \citenamefont {Lockhart}, \citenamefont {Stojevic}, \citenamefont {Green},\
  and\ \citenamefont {Severini}}]{grant2018hierarchical}%
  \BibitemOpen
  \bibfield  {author} {\bibinfo {author} {\bibfnamefont {Edward}\ \bibnamefont
  {Grant}}, \bibinfo {author} {\bibfnamefont {Marcello}\ \bibnamefont
  {Benedetti}}, \bibinfo {author} {\bibfnamefont {Shuxiang}\ \bibnamefont
  {Cao}}, \bibinfo {author} {\bibfnamefont {Andrew}\ \bibnamefont {Hallam}},
  \bibinfo {author} {\bibfnamefont {Joshua}\ \bibnamefont {Lockhart}}, \bibinfo
  {author} {\bibfnamefont {Vid}\ \bibnamefont {Stojevic}}, \bibinfo {author}
  {\bibfnamefont {Andrew~G}\ \bibnamefont {Green}}, \ and\ \bibinfo {author}
  {\bibfnamefont {Simone}\ \bibnamefont {Severini}},\ }\bibfield  {title}
  {\enquote {\bibinfo {title} {Hierarchical quantum classifiers},}\ }\href@noop
  {} {\bibfield  {journal} {\bibinfo  {journal} {arXiv preprint
  arXiv:1804.03680}\ } (\bibinfo {year} {2018})}\BibitemShut {NoStop}%
\bibitem [{\citenamefont {{LaRose}}\ \emph {et~al.}(2018)\citenamefont
  {{LaRose}}, \citenamefont {{Tikku}}, \citenamefont {{O'Neel-Judy}},
  \citenamefont {{Cincio}},\ and\ \citenamefont
  {{Coles}}}]{LaRose2018Variationally}%
  \BibitemOpen
  \bibfield  {author} {\bibinfo {author} {\bibfnamefont {R.}~\bibnamefont
  {{LaRose}}}, \bibinfo {author} {\bibfnamefont {A.}~\bibnamefont {{Tikku}}},
  \bibinfo {author} {\bibfnamefont {{\'E}.}~\bibnamefont {{O'Neel-Judy}}},
  \bibinfo {author} {\bibfnamefont {L.}~\bibnamefont {{Cincio}}}, \ and\
  \bibinfo {author} {\bibfnamefont {P.~J.}\ \bibnamefont {{Coles}}},\
  }\bibfield  {title} {\enquote {\bibinfo {title} {Variational quantum state
  diagonalization},}\ }\href@noop {} {\bibfield  {journal} {\bibinfo  {journal}
  {arXiv preprint arXiv:1810.10506}\ } (\bibinfo {year} {2018})}\BibitemShut
  {NoStop}%
\bibitem [{\citenamefont {Farhi}\ and\ \citenamefont
  {Harrow}(2016)}]{farhi2016quantum}%
  \BibitemOpen
  \bibfield  {author} {\bibinfo {author} {\bibfnamefont {Edward}\ \bibnamefont
  {Farhi}}\ and\ \bibinfo {author} {\bibfnamefont {Aram~W}\ \bibnamefont
  {Harrow}},\ }\bibfield  {title} {\enquote {\bibinfo {title} {Quantum
  supremacy through the quantum approximate optimization algorithm},}\
  }\href@noop {} {\bibfield  {journal} {\bibinfo  {journal} {arXiv preprint
  arXiv:1602.07674}\ } (\bibinfo {year} {2016})}\BibitemShut {NoStop}%
\bibitem [{\citenamefont {Otterbach}\ \emph {et~al.}(2017)\citenamefont
  {Otterbach}, \citenamefont {Manenti}, \citenamefont {Alidoust}, \citenamefont
  {Bestwick}, \citenamefont {Block}, \citenamefont {Bloom}, \citenamefont
  {Caldwell}, \citenamefont {Didier}, \citenamefont {Fried}, \citenamefont
  {Hong} \emph {et~al.}}]{otterbach2017unsupervised}%
  \BibitemOpen
  \bibfield  {author} {\bibinfo {author} {\bibfnamefont {JS}~\bibnamefont
  {Otterbach}}, \bibinfo {author} {\bibfnamefont {R}~\bibnamefont {Manenti}},
  \bibinfo {author} {\bibfnamefont {N}~\bibnamefont {Alidoust}}, \bibinfo
  {author} {\bibfnamefont {A}~\bibnamefont {Bestwick}}, \bibinfo {author}
  {\bibfnamefont {M}~\bibnamefont {Block}}, \bibinfo {author} {\bibfnamefont
  {B}~\bibnamefont {Bloom}}, \bibinfo {author} {\bibfnamefont {S}~\bibnamefont
  {Caldwell}}, \bibinfo {author} {\bibfnamefont {N}~\bibnamefont {Didier}},
  \bibinfo {author} {\bibfnamefont {E~Schuyler}\ \bibnamefont {Fried}},
  \bibinfo {author} {\bibfnamefont {S}~\bibnamefont {Hong}},  \emph {et~al.},\
  }\bibfield  {title} {\enquote {\bibinfo {title} {Unsupervised machine
  learning on a hybrid quantum computer},}\ }\href@noop {} {\bibfield
  {journal} {\bibinfo  {journal} {arXiv preprint arXiv:1712.05771}\ } (\bibinfo
  {year} {2017})}\BibitemShut {NoStop}%
\bibitem [{\citenamefont {Schollw{\"o}ck}(2011)}]{schollwock2011density}%
  \BibitemOpen
  \bibfield  {author} {\bibinfo {author} {\bibfnamefont {Ulrich}\ \bibnamefont
  {Schollw{\"o}ck}},\ }\bibfield  {title} {\enquote {\bibinfo {title} {The
  density-matrix renormalization group in the age of matrix product states},}\
  }\href@noop {} {\bibfield  {journal} {\bibinfo  {journal} {Annals of
  Physics}\ }\textbf {\bibinfo {volume} {326}},\ \bibinfo {pages} {96--192}
  (\bibinfo {year} {2011})}\BibitemShut {NoStop}%
\bibitem [{\citenamefont {Vidal}(2007)}]{vidal2007entanglement}%
  \BibitemOpen
  \bibfield  {author} {\bibinfo {author} {\bibfnamefont {Guifre}\ \bibnamefont
  {Vidal}},\ }\bibfield  {title} {\enquote {\bibinfo {title} {Entanglement
  renormalization},}\ }\href@noop {} {\bibfield  {journal} {\bibinfo  {journal}
  {Physical Review Letters}\ }\textbf {\bibinfo {volume} {99}},\ \bibinfo
  {pages} {220405} (\bibinfo {year} {2007})}\BibitemShut {NoStop}%
\bibitem [{\citenamefont {Eisert}\ \emph {et~al.}(2010)\citenamefont {Eisert},
  \citenamefont {Cramer},\ and\ \citenamefont {Plenio}}]{eisert2010colloquium}%
  \BibitemOpen
  \bibfield  {author} {\bibinfo {author} {\bibfnamefont {Jens}\ \bibnamefont
  {Eisert}}, \bibinfo {author} {\bibfnamefont {Marcus}\ \bibnamefont {Cramer}},
  \ and\ \bibinfo {author} {\bibfnamefont {Martin~B}\ \bibnamefont {Plenio}},\
  }\bibfield  {title} {\enquote {\bibinfo {title} {Colloquium: Area laws for
  the entanglement entropy},}\ }\href@noop {} {\bibfield  {journal} {\bibinfo
  {journal} {Reviews of Modern Physics}\ }\textbf {\bibinfo {volume} {82}},\
  \bibinfo {pages} {277} (\bibinfo {year} {2010})}\BibitemShut {NoStop}%
\bibitem [{\citenamefont {Bernstein}\ and\ \citenamefont
  {Vazirani}(1997)}]{bernstein1997quantum}%
  \BibitemOpen
  \bibfield  {author} {\bibinfo {author} {\bibfnamefont {Ethan}\ \bibnamefont
  {Bernstein}}\ and\ \bibinfo {author} {\bibfnamefont {Umesh}\ \bibnamefont
  {Vazirani}},\ }\bibfield  {title} {\enquote {\bibinfo {title} {Quantum
  complexity theory},}\ }\href@noop {} {\bibfield  {journal} {\bibinfo
  {journal} {SIAM Journal on Computing}\ }\textbf {\bibinfo {volume} {26}},\
  \bibinfo {pages} {1411--1473} (\bibinfo {year} {1997})}\BibitemShut {NoStop}%
\bibitem [{\citenamefont {Boixo}\ \emph {et~al.}(2018)\citenamefont {Boixo},
  \citenamefont {Isakov}, \citenamefont {Smelyanskiy}, \citenamefont {Babbush},
  \citenamefont {Ding}, \citenamefont {Jiang}, \citenamefont {Bremner},
  \citenamefont {Martinis},\ and\ \citenamefont
  {Neven}}]{boixo2018characterizing}%
  \BibitemOpen
  \bibfield  {author} {\bibinfo {author} {\bibfnamefont {Sergio}\ \bibnamefont
  {Boixo}}, \bibinfo {author} {\bibfnamefont {Sergei~V}\ \bibnamefont
  {Isakov}}, \bibinfo {author} {\bibfnamefont {Vadim~N}\ \bibnamefont
  {Smelyanskiy}}, \bibinfo {author} {\bibfnamefont {Ryan}\ \bibnamefont
  {Babbush}}, \bibinfo {author} {\bibfnamefont {Nan}\ \bibnamefont {Ding}},
  \bibinfo {author} {\bibfnamefont {Zhang}\ \bibnamefont {Jiang}}, \bibinfo
  {author} {\bibfnamefont {Michael~J}\ \bibnamefont {Bremner}}, \bibinfo
  {author} {\bibfnamefont {John~M}\ \bibnamefont {Martinis}}, \ and\ \bibinfo
  {author} {\bibfnamefont {Hartmut}\ \bibnamefont {Neven}},\ }\bibfield
  {title} {\enquote {\bibinfo {title} {Characterizing quantum supremacy in
  near-term devices},}\ }\href@noop {} {\bibfield  {journal} {\bibinfo
  {journal} {Nature Physics}\ }\textbf {\bibinfo {volume} {14}},\ \bibinfo
  {pages} {595} (\bibinfo {year} {2018})}\BibitemShut {NoStop}%
\bibitem [{\citenamefont {Bremner}\ \emph {et~al.}(2010)\citenamefont
  {Bremner}, \citenamefont {Jozsa},\ and\ \citenamefont
  {Shepherd}}]{bremner2010classical}%
  \BibitemOpen
  \bibfield  {author} {\bibinfo {author} {\bibfnamefont {Michael~J}\
  \bibnamefont {Bremner}}, \bibinfo {author} {\bibfnamefont {Richard}\
  \bibnamefont {Jozsa}}, \ and\ \bibinfo {author} {\bibfnamefont {Dan~J}\
  \bibnamefont {Shepherd}},\ }\bibfield  {title} {\enquote {\bibinfo {title}
  {Classical simulation of commuting quantum computations implies collapse of
  the polynomial hierarchy},}\ }in\ \href@noop {} {\emph {\bibinfo {booktitle}
  {Proceedings of the Royal Society of London A: Mathematical, Physical and
  Engineering Sciences}}}\ (\bibinfo {organization} {The Royal Society},\
  \bibinfo {year} {2010})\ p.\ \bibinfo {pages} {rspa20100301}\BibitemShut
  {NoStop}%
\bibitem [{\citenamefont {Basu}\ \emph {et~al.}(2002)\citenamefont {Basu},
  \citenamefont {Banerjee},\ and\ \citenamefont {Mooney}}]{basu2002semi}%
  \BibitemOpen
  \bibfield  {author} {\bibinfo {author} {\bibfnamefont {Sugato}\ \bibnamefont
  {Basu}}, \bibinfo {author} {\bibfnamefont {Arindam}\ \bibnamefont
  {Banerjee}}, \ and\ \bibinfo {author} {\bibfnamefont {Raymond}\ \bibnamefont
  {Mooney}},\ }\bibfield  {title} {\enquote {\bibinfo {title} {Semi-supervised
  clustering by seeding},}\ }in\ \href@noop {} {\emph {\bibinfo {booktitle} {In
  Proceedings of 19th International Conference on Machine Learning}}}\
  (\bibinfo {organization} {Citeseer},\ \bibinfo {year} {2002})\BibitemShut
  {NoStop}%
\bibitem [{\citenamefont {Smith}\ \emph {et~al.}(2016)\citenamefont {Smith},
  \citenamefont {Curtis},\ and\ \citenamefont {Zeng}}]{smith2016practical}%
  \BibitemOpen
  \bibfield  {author} {\bibinfo {author} {\bibfnamefont {Robert~S}\
  \bibnamefont {Smith}}, \bibinfo {author} {\bibfnamefont {Michael~J}\
  \bibnamefont {Curtis}}, \ and\ \bibinfo {author} {\bibfnamefont {William~J}\
  \bibnamefont {Zeng}},\ }\bibfield  {title} {\enquote {\bibinfo {title} {A
  practical quantum instruction set architecture},}\ }\href@noop {} {\bibfield
  {journal} {\bibinfo  {journal} {arXiv preprint arXiv:1608.03355}\ } (\bibinfo
  {year} {2016})}\BibitemShut {NoStop}%
\bibitem [{\citenamefont {Ackley}\ \emph {et~al.}(1985)\citenamefont {Ackley},
  \citenamefont {Hinton},\ and\ \citenamefont
  {Sejnowski}}]{ackley1985learning}%
  \BibitemOpen
  \bibfield  {author} {\bibinfo {author} {\bibfnamefont {David~H}\ \bibnamefont
  {Ackley}}, \bibinfo {author} {\bibfnamefont {Geoffrey~E}\ \bibnamefont
  {Hinton}}, \ and\ \bibinfo {author} {\bibfnamefont {Terrence~J}\ \bibnamefont
  {Sejnowski}},\ }\bibfield  {title} {\enquote {\bibinfo {title} {A learning
  algorithm for boltzmann machines},}\ }\href@noop {} {\bibfield  {journal}
  {\bibinfo  {journal} {Cognitive science}\ }\textbf {\bibinfo {volume} {9}},\
  \bibinfo {pages} {147--169} (\bibinfo {year} {1985})}\BibitemShut {NoStop}%
\bibitem [{\citenamefont {Rumelhart}\ \emph {et~al.}(1985)\citenamefont
  {Rumelhart}, \citenamefont {Hinton},\ and\ \citenamefont
  {Williams}}]{rumelhart1985learning}%
  \BibitemOpen
  \bibfield  {author} {\bibinfo {author} {\bibfnamefont {David~E}\ \bibnamefont
  {Rumelhart}}, \bibinfo {author} {\bibfnamefont {Geoffrey~E}\ \bibnamefont
  {Hinton}}, \ and\ \bibinfo {author} {\bibfnamefont {Ronald~J}\ \bibnamefont
  {Williams}},\ }\href@noop {} {\emph {\bibinfo {title} {Learning internal
  representations by error propagation}}},\ \bibinfo {type} {Tech. Rep.}\
  (\bibinfo  {institution} {California Univ San Diego La Jolla Inst for
  Cognitive Science},\ \bibinfo {year} {1985})\BibitemShut {NoStop}%
\bibitem [{\citenamefont {Nair}\ and\ \citenamefont
  {Hinton}(2010)}]{nair2010rectified}%
  \BibitemOpen
  \bibfield  {author} {\bibinfo {author} {\bibfnamefont {Vinod}\ \bibnamefont
  {Nair}}\ and\ \bibinfo {author} {\bibfnamefont {Geoffrey~E}\ \bibnamefont
  {Hinton}},\ }\bibfield  {title} {\enquote {\bibinfo {title} {Rectified linear
  units improve restricted boltzmann machines},}\ }in\ \href@noop {} {\emph
  {\bibinfo {booktitle} {Proceedings of The 27th International Conference on
  Machine Learning}}}\ (\bibinfo {year} {2010})\ pp.\ \bibinfo {pages}
  {807--814}\BibitemShut {NoStop}%
\bibitem [{\citenamefont {Schuch}\ \emph {et~al.}(2008)\citenamefont {Schuch},
  \citenamefont {Wolf}, \citenamefont {Verstraete},\ and\ \citenamefont
  {Cirac}}]{schuch2008simulation}%
  \BibitemOpen
  \bibfield  {author} {\bibinfo {author} {\bibfnamefont {Norbert}\ \bibnamefont
  {Schuch}}, \bibinfo {author} {\bibfnamefont {Michael~M}\ \bibnamefont
  {Wolf}}, \bibinfo {author} {\bibfnamefont {Frank}\ \bibnamefont
  {Verstraete}}, \ and\ \bibinfo {author} {\bibfnamefont {J~Ignacio}\
  \bibnamefont {Cirac}},\ }\bibfield  {title} {\enquote {\bibinfo {title}
  {Simulation of quantum many-body systems with strings of operators and monte
  carlo tensor contractions},}\ }\href@noop {} {\bibfield  {journal} {\bibinfo
  {journal} {Physical Review Letters}\ }\textbf {\bibinfo {volume} {100}},\
  \bibinfo {pages} {040501} (\bibinfo {year} {2008})}\BibitemShut {NoStop}%
\bibitem [{\citenamefont {Nielsen}\ and\ \citenamefont
  {Chuang}(2010)}]{nielsen2010quantum}%
  \BibitemOpen
  \bibfield  {author} {\bibinfo {author} {\bibfnamefont {Michael~A}\
  \bibnamefont {Nielsen}}\ and\ \bibinfo {author} {\bibfnamefont {Isaac~L}\
  \bibnamefont {Chuang}},\ }\href@noop {} {\emph {\bibinfo {title} {Quantum
  Computation and Quantum Information}}}\ (\bibinfo  {publisher} {Cambridge
  University Press},\ \bibinfo {year} {2010})\BibitemShut {NoStop}%
\bibitem [{\citenamefont {Barenco}\ \emph {et~al.}(1995)\citenamefont
  {Barenco}, \citenamefont {Bennett}, \citenamefont {Cleve}, \citenamefont
  {DiVincenzo}, \citenamefont {Margolus}, \citenamefont {Shor}, \citenamefont
  {Sleator}, \citenamefont {Smolin},\ and\ \citenamefont
  {Weinfurter}}]{barenco1995elementary}%
  \BibitemOpen
  \bibfield  {author} {\bibinfo {author} {\bibfnamefont {Adriano}\ \bibnamefont
  {Barenco}}, \bibinfo {author} {\bibfnamefont {Charles~H}\ \bibnamefont
  {Bennett}}, \bibinfo {author} {\bibfnamefont {Richard}\ \bibnamefont
  {Cleve}}, \bibinfo {author} {\bibfnamefont {David~P}\ \bibnamefont
  {DiVincenzo}}, \bibinfo {author} {\bibfnamefont {Norman}\ \bibnamefont
  {Margolus}}, \bibinfo {author} {\bibfnamefont {Peter}\ \bibnamefont {Shor}},
  \bibinfo {author} {\bibfnamefont {Tycho}\ \bibnamefont {Sleator}}, \bibinfo
  {author} {\bibfnamefont {John~A}\ \bibnamefont {Smolin}}, \ and\ \bibinfo
  {author} {\bibfnamefont {Harald}\ \bibnamefont {Weinfurter}},\ }\bibfield
  {title} {\enquote {\bibinfo {title} {Elementary gates for quantum
  computation},}\ }\href@noop {} {\bibfield  {journal} {\bibinfo  {journal}
  {Physical Review A}\ }\textbf {\bibinfo {volume} {52}},\ \bibinfo {pages}
  {3457} (\bibinfo {year} {1995})}\BibitemShut {NoStop}%
\bibitem [{\citenamefont {Platt}\ \emph {et~al.}(1999)\citenamefont {Platt}
  \emph {et~al.}}]{platt1999probabilistic}%
  \BibitemOpen
  \bibfield  {author} {\bibinfo {author} {\bibfnamefont {John}\ \bibnamefont
  {Platt}} \emph {et~al.},\ }\bibfield  {title} {\enquote {\bibinfo {title}
  {Probabilistic outputs for support vector machines and comparisons to
  regularized likelihood methods},}\ }\href@noop {} {\bibfield  {journal}
  {\bibinfo  {journal} {Advances in Large Margin Classifiers}\ }\textbf
  {\bibinfo {volume} {10}},\ \bibinfo {pages} {61--74} (\bibinfo {year}
  {1999})}\BibitemShut {NoStop}%
\bibitem [{\citenamefont {Berlinet}\ and\ \citenamefont
  {Thomas-Agnan}(2011)}]{berlinet2011reproducing}%
  \BibitemOpen
  \bibfield  {author} {\bibinfo {author} {\bibfnamefont {Alain}\ \bibnamefont
  {Berlinet}}\ and\ \bibinfo {author} {\bibfnamefont {Christine}\ \bibnamefont
  {Thomas-Agnan}},\ }\href@noop {} {\emph {\bibinfo {title} {Reproducing kernel
  Hilbert spaces in probability and statistics}}}\ (\bibinfo  {publisher}
  {Springer Science \& Business Media},\ \bibinfo {year} {2011})\BibitemShut
  {NoStop}%
\bibitem [{\citenamefont {Mitarai}\ \emph {et~al.}(2018)\citenamefont
  {Mitarai}, \citenamefont {Negoro}, \citenamefont {Kitagawa},\ and\
  \citenamefont {Fujii}}]{mitarai2018quantum}%
  \BibitemOpen
  \bibfield  {author} {\bibinfo {author} {\bibfnamefont {Kosuke}\ \bibnamefont
  {Mitarai}}, \bibinfo {author} {\bibfnamefont {Makoto}\ \bibnamefont
  {Negoro}}, \bibinfo {author} {\bibfnamefont {Masahiro}\ \bibnamefont
  {Kitagawa}}, \ and\ \bibinfo {author} {\bibfnamefont {Keisuke}\ \bibnamefont
  {Fujii}},\ }\bibfield  {title} {\enquote {\bibinfo {title} {Quantum circuit
  learning},}\ }\href@noop {} {\bibfield  {journal} {\bibinfo  {journal} {arXiv
  preprint arXiv:1803.00745}\ } (\bibinfo {year} {2018})}\BibitemShut {NoStop}%
\bibitem [{\citenamefont {Cheeseman}\ \emph {et~al.}(1988)\citenamefont
  {Cheeseman}, \citenamefont {Self}, \citenamefont {Kelly}, \citenamefont
  {Taylor}, \citenamefont {Freeman},\ and\ \citenamefont
  {Stutz}}]{cheeseman1988bayesian}%
  \BibitemOpen
  \bibfield  {author} {\bibinfo {author} {\bibfnamefont {Peter~C}\ \bibnamefont
  {Cheeseman}}, \bibinfo {author} {\bibfnamefont {Matthew}\ \bibnamefont
  {Self}}, \bibinfo {author} {\bibfnamefont {James}\ \bibnamefont {Kelly}},
  \bibinfo {author} {\bibfnamefont {Will}\ \bibnamefont {Taylor}}, \bibinfo
  {author} {\bibfnamefont {Don}\ \bibnamefont {Freeman}}, \ and\ \bibinfo
  {author} {\bibfnamefont {John~C}\ \bibnamefont {Stutz}},\ }\bibfield  {title}
  {\enquote {\bibinfo {title} {Bayesian classification.}}\ }in\ \href@noop {}
  {\emph {\bibinfo {booktitle} {AAAI}}},\ Vol.~\bibinfo {volume} {88}\
  (\bibinfo {year} {1988})\ pp.\ \bibinfo {pages} {607--611}\BibitemShut
  {NoStop}%
\bibitem [{\citenamefont {Kingma}\ \emph {et~al.}(2014)\citenamefont {Kingma},
  \citenamefont {Mohamed}, \citenamefont {Rezende},\ and\ \citenamefont
  {Welling}}]{kingma2014semi}%
  \BibitemOpen
  \bibfield  {author} {\bibinfo {author} {\bibfnamefont {Diederik~P}\
  \bibnamefont {Kingma}}, \bibinfo {author} {\bibfnamefont {Shakir}\
  \bibnamefont {Mohamed}}, \bibinfo {author} {\bibfnamefont {Danilo~Jimenez}\
  \bibnamefont {Rezende}}, \ and\ \bibinfo {author} {\bibfnamefont {Max}\
  \bibnamefont {Welling}},\ }\bibfield  {title} {\enquote {\bibinfo {title}
  {Semi-supervised learning with deep generative models},}\ }in\ \href@noop {}
  {\emph {\bibinfo {booktitle} {Advances in Neural Information Processing
  Systems}}}\ (\bibinfo {year} {2014})\ pp.\ \bibinfo {pages}
  {3581--3589}\BibitemShut {NoStop}%
\bibitem [{\citenamefont {Fei-Fei}\ \emph {et~al.}(2007)\citenamefont
  {Fei-Fei}, \citenamefont {Fergus},\ and\ \citenamefont
  {Perona}}]{fei2007learning}%
  \BibitemOpen
  \bibfield  {author} {\bibinfo {author} {\bibfnamefont {Li}~\bibnamefont
  {Fei-Fei}}, \bibinfo {author} {\bibfnamefont {Rob}\ \bibnamefont {Fergus}}, \
  and\ \bibinfo {author} {\bibfnamefont {Pietro}\ \bibnamefont {Perona}},\
  }\bibfield  {title} {\enquote {\bibinfo {title} {Learning generative visual
  models from few training examples: An incremental bayesian approach tested on
  101 object categories},}\ }\href@noop {} {\bibfield  {journal} {\bibinfo
  {journal} {Computer vision and Image Understanding}\ }\textbf {\bibinfo
  {volume} {106}},\ \bibinfo {pages} {59--70} (\bibinfo {year}
  {2007})}\BibitemShut {NoStop}%
\bibitem [{\citenamefont {Kingma}\ and\ \citenamefont
  {Welling}(2013)}]{kingma2013auto}%
  \BibitemOpen
  \bibfield  {author} {\bibinfo {author} {\bibfnamefont {Diederik~P}\
  \bibnamefont {Kingma}}\ and\ \bibinfo {author} {\bibfnamefont {Max}\
  \bibnamefont {Welling}},\ }\bibfield  {title} {\enquote {\bibinfo {title}
  {Auto-encoding variational bayes},}\ }\href@noop {} {\bibfield  {journal}
  {\bibinfo  {journal} {arXiv preprint arXiv:1312.6114}\ } (\bibinfo {year}
  {2013})}\BibitemShut {NoStop}%
\bibitem [{\citenamefont {Ghahramani}(2015)}]{ghahramani2015probabilistic}%
  \BibitemOpen
  \bibfield  {author} {\bibinfo {author} {\bibfnamefont {Zoubin}\ \bibnamefont
  {Ghahramani}},\ }\bibfield  {title} {\enquote {\bibinfo {title}
  {Probabilistic machine learning and artificial intelligence},}\ }\href@noop
  {} {\bibfield  {journal} {\bibinfo  {journal} {Nature}\ }\textbf {\bibinfo
  {volume} {521}},\ \bibinfo {pages} {452} (\bibinfo {year}
  {2015})}\BibitemShut {NoStop}%
\bibitem [{\citenamefont {Jain}\ \emph {et~al.}(2016)\citenamefont {Jain},
  \citenamefont {White}, \citenamefont {Trosset},\ and\ \citenamefont
  {Radivojac}}]{jain2016nonparametric}%
  \BibitemOpen
  \bibfield  {author} {\bibinfo {author} {\bibfnamefont {Shantanu}\
  \bibnamefont {Jain}}, \bibinfo {author} {\bibfnamefont {Martha}\ \bibnamefont
  {White}}, \bibinfo {author} {\bibfnamefont {Michael~W}\ \bibnamefont
  {Trosset}}, \ and\ \bibinfo {author} {\bibfnamefont {Predrag}\ \bibnamefont
  {Radivojac}},\ }\bibfield  {title} {\enquote {\bibinfo {title} {Nonparametric
  semi-supervised learning of class proportions},}\ }\href@noop {} {\bibfield
  {journal} {\bibinfo  {journal} {arXiv preprint arXiv:1601.01944}\ } (\bibinfo
  {year} {2016})}\BibitemShut {NoStop}%
\bibitem [{\citenamefont {Kemp}\ \emph {et~al.}(2004)\citenamefont {Kemp},
  \citenamefont {Griffiths}, \citenamefont {Stromsten},\ and\ \citenamefont
  {Tenenbaum}}]{kemp2004semi}%
  \BibitemOpen
  \bibfield  {author} {\bibinfo {author} {\bibfnamefont {Charles}\ \bibnamefont
  {Kemp}}, \bibinfo {author} {\bibfnamefont {Thomas~L}\ \bibnamefont
  {Griffiths}}, \bibinfo {author} {\bibfnamefont {Sean}\ \bibnamefont
  {Stromsten}}, \ and\ \bibinfo {author} {\bibfnamefont {Joshua~B}\
  \bibnamefont {Tenenbaum}},\ }\bibfield  {title} {\enquote {\bibinfo {title}
  {Semi-supervised learning with trees},}\ }in\ \href@noop {} {\emph {\bibinfo
  {booktitle} {Advances in Neural Information Processing Systems}}}\ (\bibinfo
  {year} {2004})\ pp.\ \bibinfo {pages} {257--264}\BibitemShut {NoStop}%
\end{thebibliography}%

	

\end{document}